\tikzset{ampersand replacement=\&}
\theoremstyle{plain} 
\newtheorem{thm}{Theorem}[section]
\newtheorem{defn}{Definition}[section]
\newtheorem{lem}[thm]{Lemma}
\newtheorem{prop}[thm]{Proposition}
\theoremstyle{remark}
\newtheorem{rem}{Remark}
\numberwithin{equation}{section}
\newcommand{\de}{\mathrm{d}} 
\newcommand{\oo}{\infty}
\newcommand{\del}{\partial}
\renewcommand{\d}{\mathrm{d}}
\newcommand{\Lie}{\mathcal{L}}
\newcommand{\sse}{\subseteq}
\newcommand{\tr}{\operatorname{tr}}
\title{\textbf{On Wick polynomials of boson fields in locally covariant algebraic QFT}}
\author{Igor Khavkine$^{(a)}$\footnote{igor.khavkine@unimi.it}, Alberto Melati$^{(b)}$\footnote{alberto.melati@unitn.it}, Valter Moretti$^{(c)}$\footnote{valter.moretti@unitn.it}\\
	$\null$\\
	\small
	$^{(a)}$ Dipartimento di Matematica,
	Universit\`a di Milano  and INFN Milano,\\
	\small
	Via Cesare Saldini, 50,
	I-20133 Milano (MI), Italy \\
	\small
	$^{(b)}$ Dipartimento di Fisica,
	Universit\`a di Trento  and INFN-TIFPA Trento,\\
	\small
	Via Sommarive, 14, I-38123 Povo (Trento), Italy\\
	\small 
	$^{(c)}$	Dipartimento di Matematica,
	Universit\`a di Trento  and INFN-TIFPA Trento,\\
	\small
	Via Sommarive, 14, I-38123 Povo (Trento), Italy}
\begin{document}
	\maketitle
	\begin{abstract}
		This work presents some results about Wick polynomials of a vector field renormalization in locally covariant algebraic quantum field theory in curved spacetime. General vector fields are pictured as sections of natural vector bundles over globally hyperbolic spacetimes and quantized through the known functorial machinery in terms of local $*$-algebras. These quantized fields may be defined on spacetimes with given classical background fields, also sections of natural vector bundles, in addition to the Lorentzian metric. The mass and the coupling constants are in particular viewed  as background fields. Wick powers of the quantized vector field are axiomatically defined 
		imposing in particular local covariance, scaling properties and smooth dependence on smooth perturbation of the background fields.
		A general classification theorem is established for finite renormalization terms (or counterterms) arising when comparing different solutions satisfying the defining axioms of Wick powers. The result is specialized to the case of general tensor fields. In particular, the case of a vector Klein-Gordon field and the case of a scalar field renormalized together with its derivatives are discussed as examples. In each case, a more precise statement about the structure of the counterterms is proved. The finite renormalization terms turn out to be finite-order polynomials tensorially and locally constructed with the backgrounds fields and their covariant derivatives whose coefficients are locally smooth functions of polynomial scalar invariants constructed from the so-called marginal subset of the background fields. The notion of local smooth dependence on polynomial scalar invariants is made precise in the text.
		
		Our main technical tools are based on the Peetre-Slov\'ak theorem characterizing differential operators and on the classification of smooth invariants on representations of reductive Lie groups.
	\end{abstract}

\section{Introduction}
This work is a continuation of the work started in the previous article~\cite{KM16} by the first and last author. While we have aimed the current article to be self-contained, the reader may be referred to the previous article for the details of some proofs.

\subsection{Wick polynomials}
Wick polynomials and time-ordered products of Wick polynomial are the building blocks for perturbative renormalization  of quantum fields, both  in Minkowski spacetime and in curved spacetime, where the metric is considered as a given external classical field.
Although of utmost physical  relevance, \emph{e.g.},\ the stress-energy tensor is a Wick polynomial and it plays a most important part in semiclassical quantum gravity (see~\cite[Sec.6.3]{AAQFT15Ch6} for some cosmological applications),
these formal operators do not belong to the algebra of observables generated by the smoothly smeared field operators (operator-valued distributions). This is  because 
they correspond to {\em products of distributions at a given point}  and this notion  is not well-defined in general. As an elementary
example for the Klein-Gordon scalar field operator $\phi(x)$ on a spacetime $(M,g)$ viewed as formal integral kernel of a field operator represented on a Fock Hilbert space smeared with 
smooth compactly supported functions $f$, namely  $\phi(f):= \int_M \phi(x) f(x) \sqrt{\left|\det g\right|}\,dx$,
the simplest Wick power is a suitable interpretation of $\phi^2(x) = \phi(x)\phi(x)$. It stands for
the integral kernel $\phi^2(f) = \int_M \phi^2(x) f(x) \sqrt{\left|\det g\right|}\,dx$.
However this interpretation is very difficult to support. For example, any naive attempt to define it (with $\langle~|~\rangle$ the Fock inner product and $\Psi$ any state in the appropriate domain) as the limit for $n \to \infty$ of
\begin{equation*}
	\int_{M\times M}\langle \Psi| \phi(x)\phi (y)\Psi \rangle  f(x) \delta_n(x,y) \sqrt{\left|\det g(x)\right|}\sqrt{\left|\det g(y)\right|}\,dy\, dx \:,
\end{equation*}
where $\delta_n(x,y) \to \delta(x,y)$ as $n \to \infty$,  gives rise to divergences for any physically meaningful state $\Psi$, such as the Fock vacuum in Minkowski spacetime. 
The popular and perhaps most effective procedure to eliminate the short-distance divergences consists of simply keep a regulated smearing function $\delta_n(x,y)$ and simply subtracting a suitable divergent function of $n$ as $n\to \oo$, that is the regulator is removed. A much more elegant procedure (see~\cite{AAQFT15Ch5} for a recent introductory account) consists of first restricting ourselves to a suitable class of 
physical states (Hadamard states). For any Hadamard state $\Phi$, the singularity structure of (more precisely the wavefront set of) $\langle \Phi|\phi(x)\phi(y) \Phi \rangle$ is under sufficient control so that we can find a distribution $G(x,y)$ that is independent of $\Phi$ (as long as it remains Hadamard) such that the difference $\langle \Phi|\phi(x)\phi(y) \Phi \rangle - G(x,y)$ is regular enough to be smeared with some distributions and $f(x)\delta(x,y)$ in particular, with a test function $f(x)$. Thus, we could formally define the Wick square by $\phi^2(x) := \lim_{y\to x} \phi(x)\phi(y) - G(x,y) 1$, and so on for higher Wick powers (this is known as the Hadamard parametrix regularization method~\cite{HW1,HW2,BF00}). In any case, even such a procedure do not lead to a unique definition. The constructed Wick powers (or also more generally time ordered products of Wick powers) may be still affected by (finite) ambiguities, popularly called {\em finite-renormalization terms} (or \emph{counterterms}). Within the divergence subtraction paradigm their nature is obvious: depending on how the regularization is carried out, $\infty -\infty$ could be any number.

A given Wick product $\phi^n(x)$, interpreted as a distributional kernel evaluated at $x$, can  always be be redefined by adding similar counterterms of lower order multiplied 
with coefficients depending on $x$:  $\phi^n(x) = \sum_{k<n} C_k(x) \phi^k(x)$. The structure of these coefficients  $C_k$ can be fixed 
by imposing some further physical constraints.

A definite difference exists between flat and curved spacetime renormalization procedures defining 
Wick polynomials (we will not discuss time ordered products in this work). In Minkowski spacetime, Wick polynomials (though not time ordered products Wick polynomials) are completely 
fixed by the so called {\em normal-ordering prescription} which is feasible because there exist a unique Poincar\'e invariant reference state and all the 
manipulations are in fact performed in the Fock-Hilbert space, relying upon that vacuum state. The
normal-ordering prescription is able to  simultaneously get rid of ultraviolet divergences and fix all remaining finite renormalization ambiguities 
of Wick polynomials in Minkowski spacetime. 
From our viewpoint it consists of removing the ultraviolet singularities and imposing that the expectation values of the obtained operators vanish on the  unique 
Poincar\'e invariant state. 
Unfortunately no such preferred reference state exists in generic curved spacetime, though a viewpoint similar to the Minkowskian one may be adopted dealing with
maximally symmetric spacetimes like de Sitter spacetime, where a natural and unique notion of symmetry-invariant vacuum is available (at least for massive fields~\cite{AF87}).  In the absence of a 
sufficiently large group of Killing symmetries able to single out a physically privileged reference vacuum state, 
or for specific values of the parameters defining the quantum field (think of a Klein-Gordon field in Minkowski space time with $m^2<0$),
the Minkowskian procedure cannot be 
adopted  to completely fix the definition of Wick polynomial even if the normal ordering prescription  is mathematically meaningful in  the  Fock space of every Gaussian (or quasi-free) state.

Still, a relic of the Minkowskian short-distance
divergence remains in generic curved spacetime, encoded in the universal Hadamard short distance divergence of $n$-point functions, mathematically  corresponding to a definite structure 
of the wavefront set of these distributions, in the language of microlocal analysis (\emph{e.g.},\ see~\cite{AAQFT15Ch5}). However removing this divergence is by
no means sufficient to uniquely define Wick powers. Ambiguities remain and the best we can do is to reduce them to the smallest possible number of types and to classify them. 

Let us briefly and heuristically describe how these ambiguities have been studied in previous works.
In addition to the obvious requirement that any general procedure should give rise to the known result in Minkowski spacetime, the general strategy~\cite{HW1}
is to avoid any specific choice from scratch: first of all, no preferred Hilbert space representation of operators is chosen and all the discussion 
takes place at the level of abstract $*$-algebras of operators $\phi_{(M,\mathbf{b})}(x)$,
where $(M,\mathbf{b})$ denotes a spacetime endowed with a set of background fields, like the metric and the mass generally, both allowed to vary on $M$.
Quantum states are {\em algebraic states}, namely positive normalized linear functionals over the said unital $*$-algebra of observables. The standard Hilbert space representation arises via the {\em GNS theorem}~\cite[Sec.5.1.3]{AAQFT15Ch5}.
Next the notion of Wick polynomial is required  to be consistent with the requirements of locality and covariance~\cite{BFV}. In other words, Wick polynomials (and all other observables too)
must be equivariant with respect to causal embeddings of different spacetimes (locality) and with respect to causal diffeomorphisms of a given yet arbitrary 
spacetime (covariance). This is not enough however to fix the Wick polynomials completely  and, barring obvious technical requirements, further natural constraints must be added~\cite{HW1},
like the behavior of Wick polynomials under rescaling of the metric and the other given background fields, which together allow a satisfactory classification of the remaining freedom in finite renormalizations. Unfortunately, one of the technical requirements from~\cite{HW1}, so-called ``analytic dependence on analytic metrics'', was long considered unnatural, despite its crucial role in the classification result.

An important refinement of these results was
recently made (by some of the authors) for the real scalar field~\cite{KM16}, where ``analytic dependence'' was replaced by the more natural ``smooth dependence'' of the Wick polynomial on smooth compactly localized perturbations of external fields. 
As a matter of fact one requires that smooth deformations of the background fields  $\mathbf{b} \mapsto \mathbf{b}(s)$, where $s \in \mathbb R$, produces smooth fields $(x,s) \mapsto \phi^n_{(M,\mathbf{b}(s))}(x)$,
with a suitable precise interpretation of these mathematical objects.
This smooth dependence requirement together with locality allowed us to replace all appeals to analyticity in the proof of the classification result by an appeal to the \emph{Peetre-Slov\'ak} theorem~\cite{slovak}.
This fundamental result in differential geometry states that a (possibly non-linear) map between spaces of smooth functions is a differential operator precisely when it is local (depends only on the germ of its argument) and maps smooth families of smooth functions to smooth families of smooth functions. The latter regularity condition is what inspired our ``smooth dependence'' requirement.
This argument, applied in the case of the Klein-Gordon scalar field in~\cite{KM16}, was used to show that the renormalization coefficients $C_k(x)$ are locally given by differential operators applied to the background fields $\mathbf{b}(x) = (g,m^2,\xi)$, which consist of the metric, the mass squared and the curvature coupling (cf.~Section~\ref{proca_section}).  The further requirements of locality, covariance and scaling finally restricts the functions $C_k(x)$ to be scalar 
polynomials in the mass $m^2$ and any other scalars covariantly built out of the Riemann tensor $R_{abcd}$ and its covariant derivatives, with the coefficients of 
these polynomials given by arbitrary smooth functions of the curvature coupling $\xi$. The difference between polynomial and smooth dependence on some of the parameters is explained by their different properties under scaling transformations. This classification result had reproduced the earlier results of~\cite{HW1}, modulo their analytic dependence on $\xi$ \emph{vs.}\ our smooth dependence on $\xi$, though with more natural hypotheses.

\subsection{Structure of the work and main results}
This work deals with the classification of Wick polynomials of a rather general locally covariant bosonic vector-valued quantum field, in the presence of rather general classical background fields.
The constant parameters that usually defining a quantum field, 
like the mass and coupling constants, are included in among the classical background fields (and may be restricted to be constants). Fermionic fields (like the locally covariant Dirac field~\cite{zahn14}) are not handled by our analysis and will be discussed elsewhere.

Our main results are split in two. The first (Theorem~\ref{lemma_ipotesi}) holds when both the dynamical and background fields are sections natural vector bundle over spacetime (\emph{natural} here means that the bundle transforms in a well-defined way under diffeomorphisms, cf.~\cite[Ch.IV]{kms}). It gives the structure of the general form of the finite renormalizations of Wick powers, which are parametrized by differential operators locally depending on the background fields, including the non-trivial relations between terms that renormalize Wick powers of different degrees. The simplest examples of natural tensor bundles are trivial bundles, tangent and cotangent bundles, as well as any direct sums and products thereof. But natural bundles also include more general examples like bundles of connection coefficients or even jet bundles themselves.

The second main result (Theorem~\ref{thm_uniqueness}) holds when we restrict the dynamical and background fields to be only tensor fields. It completely classifies the differential operators parametrizing the finite renormalizations of Wick powers to be covariantly constructed from the metric, the curvature, the background tensor fields and all of their covariant derivatives, and furthermore to have a certain polynomial structure of bounded degree.

The above results are the first complete and rigorous ones for non-scalar dynamical and background fields.
To supplement our great generality, we illustrate our results with two physically relevant particular cases: the real vector Klein-Gordon field $A_a$, possibly with tensorial coupling to the curvature, and the $(\varphi, \nabla_a\varphi)$ pair consisting of the real scalar Klein-Gordon field and its spacetime derivative. 

Similarly to~\cite{KM16}, the  more complex final classification theorem of Wick polynomials arises by assuming that the
Wick products satisfy a certain list of axioms (Definition~\ref{def_lc}) including local covariance, scaling, 
smooth dependence on perturbations of background fields and commutation relations (kinematic completeness). 
The relevant commutation relations are quite general. Since no specific equations of motion are assumed, we only require that the commutator of two linear quantum fields is a $c$-number distribution satisfying a certain regularity condition (which is known to hold for usual causal propagator of a well-posed hyperbolic equation).

Except for some general remarks, we shall not discuss the existence of Wick products satisfying our set of axioms, same as in~\cite{KM16}.
We expect that, at least for the Klein-Gordon field and related fields (\emph{e.g.},\ its derivatives), a proof of existence may be obtained by a 
straightforward re-adaptation to the vector case field of the reasoning appearing in~\cite{HW1} and~\cite{HW2} for the scalar field by taking 
advantage of the characterization of Hadamard states and parametrices for vector field presented in~\cite{sahlmann01}.

The paper is organized as follows, where we also list the main results of the work.

Section~\ref{section_geom} introduces the general notations and the general geometric setup concerning vector bundles and their symmetric tensor powers, also jet bundles, as well as a brief discussion of the
{\em Peetre-Slov\'ak theorem}  stated in a form useful for our purposes.  This section also includes the definition and useful identities for the symmetrized product of sections of a vector bundle. The symmetrized product will be useful for giving index-free versions of various formulas in our results.

Section~\ref{section_algebra_bkg} is devoted to the introduction the concepts of {\em natural vector bundles}, \emph{background geometry}, \emph{scaling} and {\em locally covariant net of
	algebras} of quantum observables on a globally hyperbolic spacetime. The definitions will take advantage of elementary language of category theory and basic notions of operator algebras.

Section~\ref{section_quantum_fields}  deals with the precise notion of a general bosonic {\em locally covariant quantum field}, again with the help of notions from category theory. 
A precise notion of {\em scaling degree} is defined to be used later. The concept of {\em $c$-number field} is also presented.

Section~\ref{section_wick} presents
the general notion of a {\em Wick power} of a general bosonic local quantum field. The precise definition (Definition~\ref{def_wick}) lists all necessary physical requirements: 
behavior of {\em Low Powers}, {\em Scaling}, {\em Kinematic Completeness}, {\em Commutator Expansion}, and {\em Smoothness}. Based on these requirements, we prove our first main result in Theorem~\ref{lemma_ipotesi}.
It establishes that the difference of two different prescriptions for Wick powers with equal order can be expanded as a sum of lower order Wick powers whose coefficients are, by invoking the Peetre-Slov\'ak theorem, tensor fields of suitable scaling degrees given by differential operators (of locally bounded order) locally depending on the background fields. It is difficult to say more about these differential operators without further assumptions.

Next, in Section~\ref{section_tensor_field}, the last result is specialised to the case when both the dynamical and background fields are tensors. After some 
preparatory technical results, the central theorem of this paper (Theorem~\ref{thm_uniqueness}) is stated and proved. It precisely characterizes the form of the differential operator coefficients in the general finite renormalization formula of Theorem~\ref{lemma_ipotesi}. As mentioned earlier, these coefficients must be linear combinations of tensor valued polynomials, covariantly constructed out of the curvature tensors, the background field tensors and all of their covariant derivatives. The number of independent terms and the degrees of these polynomials are \emph{a priori} bounded, with the bound determined by the scaling dimension of the Wick power and the ranks of the tensors involved. The coefficients of these polynomials are \emph{locally} (in a precise sense) smooth functions (no longer just polynomial) of finitely many polynomial scalars covariantly constructed out of the subset of the background fields. Crucially these finiteness results hold only when all background fields are \emph{admissible}. Here a background tensor field is \emph{admissible} if its physical scaling weight and its tensor rank satisfy an inequality (Definition~\ref{def_admissible}). Those background fields that saturate the inequality are called \emph{marginal}%
	\footnote{Similar terminology, \emph{marginal}, \emph{relevant} and
	\emph{irrelevant} fields, appears also in the Wilsonian approach to the
	``renormalization group.'' We emphasize that the similarity is only
	superficial, since both refer to some kind of scaling dimension.
	Note that the Wilsonian terminology refers to dynamical fields
	and only to \emph{physical} scaling, while ours refers mostly to background
	fields and to a combination of \emph{physical} and \emph{coordinate}
	scalings.} %
and only they are allowed to appear non-polynomially in the finite renormalization terms. In comparison with the treatment of the scalar field in~\cite{KM16}, this classification demands much stronger results from the classical invariant theory of the general linear and (Lorentzian) orthogonal groups. The notion of \emph{local} smooth dependence on a set of polynomial invariants (Definition~\ref{def_loc_poly}) was actually born out of the necessity of dealing with the complicated orbit structure for the action of the orthogonal group on background tensors.

The final part of Section~\ref{section_tensor_field} is devoted to two examples of physical relevance that illustrate the various aspects of our classification theorem: the vector Klein-Gordon
field $A_a$ (Section~\ref{proca_section}) and the $(\varphi,\nabla_a\varphi)$ pair consisting of the scalar Klein-Gordon field  
and its spacetime derivative (Section~\ref{sec_grad_scalar})).  \\
The choice to deal with the KG field rather than the Proca one is due to a basic requirement we and previous works have imposed on Wick powers: they must be smooth functions of the 
mass around the zero value. The zero mass $m^2\to 0$ limit is a very delicate issue for Proca field and gauge invariance should be taken into account. A recent paper on the subject is~\cite{2017arXiv170901911S}.
A short discussion on this point appears in  Remark~\ref{rem_KG}. Even the regularity for the KG (vector or scalar) field at zero mass is a delicate matter when Wick powers are constructed by the Hadamard parametrix regularization method, since the corresponding parametrix includes a term of the form $\ln m^2$. Remark~\ref{rem_KG} also contains a brief discussion on this technical problem.

Finally, three appendices collect technical results needed at various stages of the proof of Theorem~\ref{thm_uniqueness}. Appendix~\ref{section_scaling} contains results on (almost) homogeneous functions under scaling. Appendix~\ref{section_coord_scal} contains a convenient version of the Thomas Replacement Theorem, which roughly restricts any tensorial differential operator that is equivariant under diffeomorphisms to depend on the derivatives of the metric only through the Riemann curvature and its covariant derivatives. Appendix~\ref{section_equivar} collects fundamental results on smooth invariants of the (Lorentzian) orthogonal group acting on tensor representations. It should be noted that, even though Appendices~\ref{section_coord_scal} and~\ref{section_equivar} mostly collect results that are known, these results are rather scattered in the expert literature, and their proofs may be difficult to track down. Thus, for the convenience of the reader, we have aimed to provide complete proofs when they could be made reasonably elementary and concise.

\section{Geometric setup}\label{section_geom}

\paragraph{Notations.}
{}
In the following, $VM \to M$ denotes a smooth real vector bundle over a manifold $M$ whose  fibres $V_p$ are isomorphic to a given $\mathbb R^a$. We shall make use of the auxiliary tensor bundles $V^{\otimes k} M \to M$ and $V^{*\otimes l} M \to M$,
which are bundles of tensor products  of $k$ copies of  the bundle $VM$ and $l$ copies of the dual bundle $V^*M$ respectively. 
 In the following we also consider two special sub-bundles, namely those of the fully symmetrized contravariant and covariant tensor products, defined by
\begin{equation*}
V^{\odot k}M = S^kVM \subset V^{\otimes k} M \:,
\quad 
{V^*}^{\odot l} = MS^lV^*M \subset V^{*\otimes l} M
\end{equation*}
where we denoted with $\odot$ the symmetric tensor product.
\begin{rem}
	\label{rem_decomp}
	In the following we will consider bundles which are constructed as direct sum, \emph{i.e.},
	\begin{equation*}
	VM=\bigoplus_{i=1}^N W_i M
	\end{equation*}
	for some vector bundles $W_iM\rightarrow M$. We stress that, in this case, using the distributivity of tensor product with respect to direct sum, we have
	\begin{equation*}
	V^{\otimes k} M=\bigoplus_{|P|=k}\bigotimes_{i=1}^NW_i^{\otimes p_i}M,\quad S^kVM=\bigoplus_{|P|=k}\bigotimes_{i=1}^NS^{p_i}W_iM
	\end{equation*}
	where $P=(p_1,\ldots, p_N)$ is a multi-index and $|P|=p_1+\cdots +p_N$. It is straight forward to write the analogous decomposition for $V^{*\otimes l} M$ and $S^l V^*M$.
\end{rem}
\noindent We will take advantage of the following spaces of smooth sections.
\begin{itemize}
	\item $\mathscr{E}(X):=\Gamma(X)$ space of smooth sections of the bundle $X$;
	\item $\mathscr{D}(X)$ space of smooth and compactly supported sections of the bundle $X$;
\end{itemize}
where $X$ can be anyone of the introduced bundles. Obviously $\mathscr{E}(X)\supset \mathscr{D}(X)$.

\begin{rem}
	In the sequel, we sometimes write tensors with indices, adopting the well-known \emph{abstract index notation}~\cite{wald84}. We use two type of indices: we use the notation with Greek indices for sections of a generic tensor bundle (for example, $v^{\mu_1,\ldots,\mu_k}$ denotes a section of $V^{\otimes k} M$) and Latin indices for spacetime tensors, \emph{i.e.},\ section of tensor products of $TM$ and $T^*M$ (for example $t^{a_1,\ldots a_k}$ denotes a section of $T^{\otimes k} M$).
\end{rem}

\begin{rem}
	\label{rem_identif}
	In the following, if the bundle $VM$ has the introduced direct sum structure, we will often take advantage of the identification: if $\mathscr{E}(VM)\ni f=\bigoplus_i f_i$ we identify $f_i\simeq \bigoplus_k\delta_i^kf_k$. With this identification we can substitute the direct sum with a standard sum:
	\begin{equation*}
	f=\sum_if_i.
	\end{equation*}
\end{rem}
It is also convenient, for notational reason, to introduce the following contraction product between tensors. We recall~\cite[Lem.9.1.1]{procesi} that fully symmetric tensors are spanned by decomposable tensors of the form $f^{\odot l}$.
\begin{defn}
	The  $l$-\textbf{contraction product}  of symmetric sections 
	\begin{equation*}
	\cdot_l\colon \mathscr{E}(S^lV^*M)\times \mathscr{E}(S^kVM)\longrightarrow \mathscr{E}(S^{k-l}VM) \quad \mbox{with $k\geq l$, }
	\end{equation*}
	is defined pointwise on decomposable tensors $g^{\odot l}\in \mathscr{E}(S^lV^*M)$,  $f^{\odot k}\in\mathscr{E}(S^kVM)$ by
	\begin{equation*}
	\left(g^{\odot l}\cdot_l f^{\odot k}\right):=\binom{k}{l}\left\langle g,f\right\rangle^l f^{\odot k-l}.
	\end{equation*}
	and extended by linearity.
\end{defn}

\begin{prop}
	\label{contrac_associativity}
	Let $k,l,s>0$ be such that $l\leq k$ and $s\leq k-l$. For $h\in \mathscr{E}(S^sV^*M), g\in\mathscr{E}(S^lV^*M)$ and $f\in\mathscr{E}(S^kVM)$ it holds
	\begin{equation}
	h\cdot_s\left(g\cdot_l f\right)=g\cdot_l\left(h\cdot_sf\right)
	\end{equation}
\end{prop}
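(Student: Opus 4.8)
The plan is to reduce the claim to a short computation on decomposable tensors, followed by an elementary binomial identity. First I would invoke \cite[Lem.9.1.1]{procesi}: the spaces $\mathscr{E}(S^sV^*M)$ and $\mathscr{E}(S^lV^*M)$ are linearly spanned by decomposable sections of the form $h^{\odot s}$ and $g^{\odot l}$ with $h,g\in\mathscr{E}(V^*M)$, while $\mathscr{E}(S^kVM)$ is spanned by sections $f^{\odot k}$ with $f\in\mathscr{E}(VM)$. Since the contraction products are bilinear, both sides of the asserted identity are separately linear in each of $h$, $g$ and $f$, so it is enough to prove the equality for decomposable arguments; and since everything is defined pointwise, I may also fix $p\in M$ and argue in the fibre $V_p$.

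Then I would simply unwind the definition of $\cdot_l$ twice. Using $g^{\odot l}\cdot_l f^{\odot k}=\binom{k}{l}\langle g,f\rangle^l f^{\odot(k-l)}$ and contracting the result with $h^{\odot s}$ (legitimate because $s\le k-l$) gives
\[
h^{\odot s}\cdot_s\bigl(g^{\odot l}\cdot_l f^{\odot k}\bigr)=\binom{k}{l}\binom{k-l}{s}\,\langle g,f\rangle^l\,\langle h,f\rangle^s\,f^{\odot(k-l-s)},
\]
while performing the two contractions in the opposite order yields, in exactly the same way,
\[
g^{\odot l}\cdot_l\bigl(h^{\odot s}\cdot_s f^{\odot k}\bigr)=\binom{k}{s}\binom{k-s}{l}\,\langle h,f\rangle^s\,\langle g,f\rangle^l\,f^{\odot(k-s-l)}.
\]
The two right-hand sides coincide because $\langle g,f\rangle$ and $\langle h,f\rangle$ are commuting scalars and because $\binom{k}{l}\binom{k-l}{s}=\binom{k}{s}\binom{k-s}{l}=k!/\bigl(l!\,s!\,(k-l-s)!\bigr)$, which is valid whenever $l,s\ge 0$ and $l+s\le k$.

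I do not expect any serious obstacle here: the statement is essentially a bookkeeping identity. The only points that need a little care are that decomposable symmetric tensors merely span (rather than form a basis of) the symmetric powers, so the reduction to the decomposable case must rest on the separate linearity of $\cdot_s$ and $\cdot_l$ in each argument rather than on a choice of coordinates; and that the intermediate contraction in each ordering is well defined, i.e.\ that the degree constraints are preserved after the first contraction — after $g\cdot_l f$ one needs $s\le k-l$, and after $h\cdot_s f$ one needs $l\le k-s$, both guaranteed by the hypotheses $l\le k$ and $s\le k-l$.
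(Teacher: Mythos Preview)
Your proposal is correct and follows essentially the same approach as the paper's proof: reduce to decomposable tensors by linearity, unwind the definition twice, and invoke the binomial identity $\binom{k}{l}\binom{k-l}{s}=\binom{k}{s}\binom{k-s}{l}$. If anything, your write-up is slightly more careful than the paper's in explicitly justifying the reduction via separate linearity and in checking that the degree constraints survive each intermediate contraction.
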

\begin{proof}
	It is immediate using the definition. It is sufficient to prove the result for decomposable tensors and then use linearity to extend the proof to generic tensors. We consider $h^{\odot s}\in \mathscr{E}(S^sV^*M)$,  $g^{\odot l}\in\mathscr{E}(S^lV^*M)$, $f^{\odot k}\in\mathscr{E}(S^kVM)$. Thus
	\begin{flalign*}
	h^{\odot s}\cdot_s\left(g^{\odot l}\cdot_l f^{\odot k}\right)&=h^{\odot s}\cdot_s\left(\binom{k}{l}\left\langle g,f\right\rangle^l f^{\odot k-l}\right)\\
	&=\binom{k-l}{s}\binom{k}{l}\left\langle h,f\right\rangle^s\left\langle g,f\right\rangle^l f^{\odot k-l-s}\\
	&=\binom{k}{s}\binom{k-s}{l}\left\langle h,f\right\rangle^s\left\langle g,f\right\rangle^l f^{\odot k-l-s}\\
	&=g^{\odot l}\cdot_l\left(h^{\odot s}\cdot_s f^{\odot k}\right).
	\end{flalign*}
\end{proof}
We now prove some technical results that will be useful in the subsequent part. In the following we often use the shorthand notation $f^k:=f^{\odot k}$.
\begin{prop}
	\label{prop_contrac}
	Let $g\in\mathscr{E}(V^*M)$, $f_i\in\mathscr{E}(VM)$ and $p_i\geq 1$ for $i=1,\ldots N$.
	The following relations hold
	\begin{itemize}
		\item[(a)] $\displaystyle g\cdot_1 \left(g^{l-1}\cdot_{l-1}f_i^k\right)=lg^l\cdot_l f^k$,
		\item[(b)] $g\cdot_1\left(f_1^{p_1}\odot \cdots\odot f_N^{p_N}\right)=\displaystyle\sum_{l=1}^Nf_1^{p_1}\odot \cdots\odot\left(g\cdot_1 f_l^{p_l}\right)\odot\cdots\odot f_N^{p_N}$,
		\item[(c)] $g^l\cdot_l\left(f_1^{p_1}\odot \cdots\odot f_N^{p_N}\right)=\displaystyle\sum_{\substack{|Q|=l\\ q_i\leq p_i}}\left(\prod_{i=1}^N\binom{p_i}{q_i}\right)\left(g^l\cdot_l\left(f_1^{q_1}\odot\ldots\odot f_N^{q_N}\right)\right)f_1^{p_1-q_1}\odot\ldots\odot f_N^{p_N-q_N}$.
	\end{itemize}
	Moreover, if $Q=(q_1,\ldots, q_N)$ and $P=(p_1,\ldots, p_N)$ are multi-indices such that $|Q|=|P|=l$, $h^Q\in\mathscr{E}\left(\odot_{i=1}^NS^{q_i}W_iM\right)$ and $f_i\in\mathscr{E}(W_iM)$, then
	\begin{itemize}
		\item[(d)]  $h^Q\cdot_{l}\left(f_1^{p_1}\odot \cdots\odot f_N^{p_N}\right)=0$ if $P\neq Q$.
	\end{itemize}
\end{prop}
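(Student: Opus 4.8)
The plan is to regard all four identities as pointwise, purely linear-algebraic statements in the fibres $V_p\cong\mathbb{R}^a$, and to reduce each of them to the defining formula for $\cdot_l$ on \emph{decomposable} tensors, invoking the cited fact that symmetric tensors are spanned by tensors of the form $v^{\odot m}$. The one thing requiring care is that the tensors $f_1^{p_1}\odot\cdots\odot f_N^{p_N}$ appearing in (b)--(d) are \emph{not} decomposable when the $f_i$ are linearly independent, so the definition of $\cdot_l$ does not apply to them directly. The device is polarization: for real parameters $\lambda_1,\dots,\lambda_N$ set $t:=\sum_{i=1}^N\lambda_i f_i$; then, with $p:=|P|$, the power $t^{\odot p}$ \emph{is} decomposable, so the definition of $\cdot_l$ applies to it verbatim, while the multinomial theorem gives $t^{\odot p}=\sum_{|S|=p}\binom{p}{S}\lambda^S\,f_1^{s_1}\odot\cdots\odot f_N^{s_N}$, where $\binom{p}{S}:=p!/(s_1!\cdots s_N!)$ and $\lambda^S:=\lambda_1^{s_1}\cdots\lambda_N^{s_N}$. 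Statements (b)--(d) then drop out by comparing, on the two sides, the coefficient of $\lambda^P$ (and, in (d), of $\mu^Q\lambda^P$).

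Statement (a) needs no polarization. Applying the definition of $\cdot_l$ twice to the decomposable tensor $f^{\odot k}$ yields $g\cdot_1\bigl(g^{\odot(l-1)}\cdot_{l-1}f^{\odot k}\bigr)=\binom{k}{l-1}(k-l+1)\,\langle g,f\rangle^{l}\,f^{\odot(k-l)}$, whereas $l\,g^{\odot l}\cdot_l f^{\odot k}=l\binom{k}{l}\langle g,f\rangle^l f^{\odot(k-l)}$; thus (a) reduces to the elementary identity $\binom{k}{l-1}(k-l+1)=l\binom{k}{l}$, and linearity extends it to arbitrary symmetric arguments. Iterating (a) also gives $g^{\odot l}\cdot_l(\cdot)=\tfrac1{l!}(g\cdot_1)^{\circ l}(\cdot)$, which combined with (b) would furnish a Leibniz-type route to (c); I give the direct one below.

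For (c) I would evaluate $g^{\odot l}\cdot_l t^{\odot p}$ in two ways. By bilinearity of $\cdot_l$ and the multinomial expansion of $t^{\odot p}$, the coefficient of $\lambda^P$ equals $\binom{p}{P}\bigl(g^{\odot l}\cdot_l(f_1^{p_1}\odot\cdots\odot f_N^{p_N})\bigr)$. Applying the definition instead to the decomposable $t^{\odot p}$ gives $\binom{p}{l}\langle g,t\rangle^l t^{\odot(p-l)}$; expanding $\langle g,t\rangle^l=\sum_{|Q|=l}\binom{l}{Q}\lambda^Q\prod_i\langle g,f_i\rangle^{q_i}$ and $t^{\odot(p-l)}$ by the multinomial theorem and collecting the $\lambda^P$ term (so the $t$-factor must supply the multi-index $P-Q$, which forces $q_i\le p_i$) produces $\binom{p}{l}\sum_{|Q|=l,\,q_i\le p_i}\binom{l}{Q}\binom{p-l}{P-Q}\prod_i\langle g,f_i\rangle^{q_i}\;f_1^{p_1-q_1}\odot\cdots\odot f_N^{p_N-q_N}$. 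Equating the two and simplifying via the elementary collapse $\binom{p}{l}\binom{l}{Q}\binom{p-l}{P-Q}=\binom{p}{P}\prod_i\binom{p_i}{q_i}$ gives $g^{\odot l}\cdot_l(f_1^{p_1}\odot\cdots\odot f_N^{p_N})=\sum_{|Q|=l,\,q_i\le p_i}\bigl(\prod_i\binom{p_i}{q_i}\bigr)\bigl(\prod_i\langle g,f_i\rangle^{q_i}\bigr)\,f_1^{p_1-q_1}\odot\cdots\odot f_N^{p_N-q_N}$. Finally, running the same comparison with $p=l$ (where $P-Q=0$ leaves only the term $Q=P$) shows $g^{\odot l}\cdot_l(f_1^{q_1}\odot\cdots\odot f_N^{q_N})=\prod_i\langle g,f_i\rangle^{q_i}$ whenever $|Q|=l$; substituting this turns the previous display into exactly (c). Specializing (c) to $l=1$, where $Q$ runs over the unit multi-indices (all admissible since $p_i\ge 1$) and $\binom{p_j}{1}\langle g,f_j\rangle f_j^{p_j-1}=g\cdot_1 f_j^{p_j}$, recovers (b).

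For (d) I would polarize in \emph{both} slots. Using the direct-sum decomposition $V^*M=\bigoplus_i W_i^*M$ together with the spanning property, it suffices to take $h^Q=h_1^{q_1}\odot\cdots\odot h_N^{q_N}$ with $h_i\in\mathscr{E}(W_i^*M)$. Put $u:=\sum_i\mu_i h_i$ and $t:=\sum_i\lambda_i f_i$; since the duality pairing of a direct sum is block-diagonal we have $\langle h_i,f_j\rangle=0$ for $i\neq j$, hence $\langle u,t\rangle=\sum_i\mu_i\lambda_i\langle h_i,f_i\rangle$, and applying the definition to the decomposable pair $u^{\odot l},\,t^{\odot l}$ gives $u^{\odot l}\cdot_l t^{\odot l}=\langle u,t\rangle^l$, all of whose monomials have the form $\mu^S\lambda^S$ with \emph{equal} multi-indices. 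Comparing with the bilinear double expansion $u^{\odot l}\cdot_l t^{\odot l}=\sum_{|Q'|=|P'|=l}\binom{l}{Q'}\binom{l}{P'}\mu^{Q'}\lambda^{P'}\bigl(h_1^{q'_1}\odot\cdots\odot h_N^{q'_N}\bigr)\cdot_l\bigl(f_1^{p'_1}\odot\cdots\odot f_N^{p'_N}\bigr)$, the coefficient of $\mu^Q\lambda^P$ with $Q\neq P$ must vanish, which forces $(h_1^{q_1}\odot\cdots\odot h_N^{q_N})\cdot_l(f_1^{p_1}\odot\cdots\odot f_N^{p_N})=0$; linearity in the first argument then gives (d). There is no deep obstacle here: the labour is purely multinomial bookkeeping, and the only conceptual subtlety — and the sole reason polarization is needed — is that $f_1^{p_1}\odot\cdots\odot f_N^{p_N}$ is not itself a decomposable tensor.
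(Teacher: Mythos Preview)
Your proof is correct. The chief difference from the paper is the logical order in which (b) and (c) are established. The paper first proves (b) directly for the case $N=2$ (by expanding $g\cdot_1(f_1+f_2)^2$ in two ways) and then, combining (b) with the iterated form $g^{\odot l}\cdot_l(\cdot)=\tfrac{1}{l!}(g\cdot_1)^{\circ l}(\cdot)$ from (a), obtains (c) as a generalized Leibniz formula. You instead go straight to (c) by your polarization device with $t=\sum_i\lambda_i f_i$, comparing coefficients of $\lambda^P$, and then read off (b) as the $l=1$ specialization. Your route is more uniform --- a single polarization trick handles (b), (c), and (d) --- whereas the paper's route makes the derivation-like nature of $g\cdot_1$ more visible. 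For (a) both proofs are the same computation; for (d) the paper uses your idea with the parameters $\mu_i,\lambda_i$ implicitly set to $1$, so your version with explicit parameters makes the coefficient-matching step cleaner. Either approach is fine.
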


\begin{proof}
	Relation \emph{(a)} follows immediately from the definition of contraction product. We prove relation \emph{(b)}.
	It is sufficient to prove the relation for the product $f_1\odot f_2$ and then the relation \emph{(b)} follows immediately using the obtained result recursively. Defining  $f=f_1+f_2$ we obtain
	\begin{flalign*}
	g\cdot_1 f^2&=2\langle g,f\rangle\odot f=2\left(\langle g,f_1\rangle+\langle g,f_2\rangle\right)\odot(f_1+f_2)\\
	&=g\cdot_1 f_1^2+2\langle g,f_1\rangle\odot f_2+2\langle g,f_2\rangle\odot f_1+g\cdot_1 f_2^2\\
	&=g\cdot_1 f_1^2+2 (g\cdot_1f_1)\odot f_2+2( g\cdot_1f_2)\odot f_1+g\cdot_1 f_2^2
	\end{flalign*} 
	but we also have
	\begin{equation*}
	g\cdot_1 f^2=g\cdot_1 \left(f_1^2+2f_1\odot f_2+f_2^2\right)=g\cdot_1 f_1^2+2g\cdot_1( f_1\odot f_2)+g\cdot_1 f_2^2
	\end{equation*}
	and then
	\begin{equation*}
	g\cdot_1( f_1\odot f_2)=(g\cdot_1f_1)\odot f_2+f_1\odot( g\cdot_1f_2).
	\end{equation*}
	We now prove relation \emph{(c)}. Applying recursively relation \emph{(a)} and recalling that $\cdot_1$ acts as a derivation (relation \emph{(b)}), we have, for $q_i\leq p_i$,
	\begin{flalign*}
	g^l\cdot_l\left(f_1^{p_1}\odot \cdots\odot f_N^{p_N}\right)&=\frac{1}{l!}\underbrace{g\cdot_1\left(\cdots g\cdot_1\right.}_{l-\textrm{times}}\left.\left(f_1^{p_1}\odot \cdots\odot f_N^{p_N}\right)\right)\\
	&=\frac{1}{l!}\sum_{|Q|=l}\binom{l}{Q}\bigodot_{i=1}^N \underbrace{g\cdot_1\left(\cdots g\cdot_1\right.}_{q_i-\textrm{times}}\left. f_i^{p_1}\right)\\
	&=\sum_{|Q|=l} \bigodot_{i=1}^N g^{q_i}\cdot_{q_i}f_i^{p_i}\\
	&=\sum_{|Q|=l}\left(\prod_{i=1}^N\binom{p_i}{q_i}g^{q_i}\cdot_{q_i}f_i^{q_i}\right) f_1^{p_1-q_1}\odot\ldots\odot f_N^{p_N-q_N}\\
	&=\sum_{|Q|=l}\left(\prod_{i=1}^N\binom{p_i}{q_i}\right)\left(g^l\cdot_l\left(f_1^{q_1}\odot\ldots\odot f_N^{q_N}\right)\right)f_1^{p_1-q_1}\odot\ldots\odot f_N^{p_N-q_N}
	\end{flalign*} 
	where the last equality holds because, if $f=\sum_i f_i$,
	\begin{equation*}
	g^l\cdot_l f^l=\left(\sum_i\langle g,f_i\rangle\right)^l=\sum_{|P|=l}\binom{l}{P}\prod_i\langle g,f_i\rangle^{p_i}=\sum_{|P|=l}\binom{l}{P}\prod_i g^{p_i}\cdot_{p_i}f_i^{p_i}
	\end{equation*}
	but also
	\begin{equation*}
	g^l\cdot_l f^l=\sum_{|P|=l}\binom{l}{P}g^l\cdot_l\left(f_1^{p_1}\odot\ldots\odot f_N^{p_N}\right).
	\end{equation*}
	Finally, we have to prove relation $(d)$. Define $h=\sum_i h_i$ and $f=\sum_i f_i$ where $h_i,f_i\in\mathscr{E}(W_iM)$. Then
	\begin{equation*}
	h^l\cdot_l f^l=\left(\sum_i\langle h_i,f_i\rangle\right)^l=\sum_{|P|=l}\binom{l}{P}\prod_i\langle h_i,f_i\rangle^{p_i}=\sum_{|P|=l}\sum_{|Q|=l}\delta_{PQ}\binom{l}{P}\prod_i\langle h_i,f_i\rangle^{p_i},
	\end{equation*}
	but also
	\begin{equation*}
	h^l\cdot_l f^l=\sum_{|P|=l}\sum_{|Q|=l}\binom{l}{P}\binom{l}{Q}\left(h_1^{q_1}\odot\cdots\odot h_N^{q_N}\right)\cdot_l\left(f_1^{p_1}\odot\ldots\odot f_N^{p_N}\right).
	\end{equation*}
	Thus
	\begin{equation*}
	\left(h_1^{q_1}\odot\cdots\odot h_N^{q_N}\right)\cdot_l\left(f_1^{p_1}\odot\ldots\odot f_N^{p_N}\right)=0,\quad \textrm{if } P\neq Q
	\end{equation*}
	and, since $h^Q$ is a linear combination of $h_1^{q_1}\odot\cdots\odot h_N^{q_N}$, we have concluded the proof.
\end{proof}

After the first part about notations, in this section we discuss some preliminary results: we briefly recall the notions of jet bundles, just to fix notations, and we present the Peetre-Slov\'ak theorem, which is the most important result that we will use in the following.\\
If $E\to M$ is a smooth bundle, unless otherwise specified, we henceforth denote  the canonical projection by $\pi_E\colon E \to M$, the standard  fiber by $F^{(E)}$ and   $\Gamma(E)$ indicates the  set of the  smooth sections of $E$ (the smooth maps $\psi\colon M \to E$ such that $\pi_E(\psi(x))=x$ for every $x \in M$).

\subsection{Jet bundles}
In the following we use extensively the notion of \emph{jets} and \emph{jet bundles}. Naively, given a bundle $E \to M$ and a section $f\colon M\to E$, the jet of $f$ at a point $p\in M$ collects the information about the coordinate derivatives of $f$ at $p$ up to some order. The collection of all jets then forms the jet bundle associated to $E$. In this part we briefly recall some standard notions about jets and jet bundles~\cite{kms}.
\begin{defn}
	Consider a pair of smooth manifolds  $M$, $E$ and  the class of smooth functions  $f\colon M\rightarrow E$, in particular $E$ may be a bundle with base $M$ and in this case the relevant set of functions $f$ is that of smooth  sections.\\
	The \textbf{germ} of $f$ at $p\in M$ is the equivalence class $[f]_p$ of smooth functions (sections) $M\rightarrow E$ that are equal to $f$ on some neighbourhood of $p$. The $r$-\textbf{jet} of $f$ at $p\in M$, denoted by $j^r_pf$, is the equivalence class $[f]_p^r$ of smooth functions (sections) $M\rightarrow E$ that have the same Taylor expansion at $p$ as $f$ to order $r$ with respect to fixed local coordinate systems  in $M$ and $E$ (this property being  independent from the choice of the coordinate patch).
	When $E\to M$ is a smooth bundle,
$J^rE \to M$ denotes the set of $r$-jets varying the point in the base, itself a smooth bundle.
	Finally, if $\psi \in \Gamma(E)$ is a smooth section, the \textbf{$r$-jet extension} of $\psi$, denoted with $j^r\psi \in  \Gamma(J^r E)$, is the  section of $J^r E$ which collects the $r-$jets of $\psi$ over each point $p\in M$.
\end{defn}
\noindent A fiber $(J^rE)_p$ at $p\in M$ is diffeomorphic to $E_p\times \mathbb{R}^{s_r}$ where $E_p$ is the fiber of $E$ at $p$ and $s_r$ is the number of all (symmetrized) partial derivatives up to order $r$ with respect to any local chart on the base around $p$.\\
The notion of jet extension gives rise to the definition of  {\em local adapted coordinates on jet bundles}. 

\begin{defn}\label{defadptedchartJ}
Let $(x^a,v^i)$ be a local adapted coordinate chart on a bundle $E\rightarrow M$, where $x^a$ are local coordinates on an open domain $U\subseteq M$ and $(x^a,v^i)$ are trivializing coordinates on the fibers over the open domain  $Z\subseteq E$ projecting onto $U$.  This charts extends to an {\bf adapted coordinates chart}
 $(x^a,v_A^i)$ on the jet bundle $J^rE$ defined as follows.  Its  domain is $Z^r\subseteq J^rE$ is diffeomorphic to $Z\times\mathbb{R}^{s_r}$.  Moreover
\begin{equation*}
v^i_A\left(j^r\psi(p)\right)=\partial_Av^i\left(j^r\psi(p)\right)=\frac{\partial}{\partial x^{a_1}}\cdots \frac{\partial}{\partial x^{a_l}}v^i\left(j^r\psi(p)\right)
\end{equation*}
for any section $\psi$ of the bundle $E$ and where  $A=a_1\cdots a_l$ is a multi-index of size $|A|=l$ with $l=0,1,\ldots, r$.
\end{defn}

\subsection{The Peetre-Slov\'ak theorem}
\label{section_peetre}
Let $E\rightarrow M$ be a smooth bundle. We recall that the afore-mentioned $r$-jet extension of sections acts as a map $j^r\colon\Gamma(E) \ni \psi \mapsto j^r\psi \in \Gamma(J^rE)$. 
\begin{defn}
	Let $E\rightarrow M$ and $F\rightarrow M$ be smooth bundles over the same base $M$. Consider a map $D\colon\Gamma(E)\rightarrow\Gamma(F)$.
	\begin{enumerate}
		\item $D$ is a \textbf{differential operator of globally bounded order} if there exists an integer $r\geq 0$, the order, and a smooth map
		$$
			d \colon J^rE \rightarrow F\:,
		$$
		which leaves fixed the base of the transformed point 
		($\pi_F \circ d = \pi_{J^rE}$)
		 such that for any section $\psi\in\Gamma(E)$ we have an associated section of the form $$D[\psi]=d\circ j^r\psi\:.$$
		\item $D$ is a \textbf{differential operator of locally bounded order} if it satisfies a similar condition locally. Namely, if for every $y\in M$ and every $\psi_0\in\Gamma(E)$, there exists
		\begin{itemize}
			\item a neighborhood $U\subseteq M$ of $y$ with compact closure;
			\item an integer $r\geq 0$;
			\item an open neighborhood $Z^r\subseteq J^r(E)$ of $j^r\psi_0(U)$ projecting onto $U$;
			\item a smooth  function  $d\colon Z^r\rightarrow F$ which leaves fixed the base of the transformed point  
		\end{itemize}
		such that $$D[\psi](x)=d\circ j^r\psi(x)$$ for all $x\in U$ and all $\psi\in\Gamma(E)$ with $j^r\psi(U)\subseteq Z^r$.
	\end{enumerate}
\end{defn}
\noindent By elementary reasoning, the function $d$ is actually uniquely
determined by the operator $D$, once the domain $Z^r$ is fixed, since
every point in $Z^r$ lies on the graph of some $j^r\psi$.

A differential operator $D$  transforms sections $\psi$ to sections  $D[\psi]$ with the constraint that the value $D[\psi](x)$  of the transformed section attained at a point $x\in M$ depends only on the value 
of the initial section $\psi$ at the same point $x$ together with  the values of its $M$-derivatives at $x$ up to a certain order $k$, the jet $j^k_x\psi$ evaluated at the said $x$.     A natural question is how to characterize these type of local transformations of sections
among the whole class of maps $\Gamma(E) \to \Gamma(F)$.  An answer is provided by some results known as the Peetre-Slov\'ak theorem we state into two versions (there is a third more complete version we do not consider here~\cite{kms,slovak}). 
\begin{thm}[Linear Peetre's Theorem]
	\label{thm_peetre}
	Let $E\rightarrow M$ and $F\rightarrow M$ be vector bundles over the same base $M$ and $\Psi \colon\Gamma(E)\rightarrow\Gamma(F)$ a map such that $\Psi[\psi](x)\in F$ depends only on the germ of $\psi$ at $x$ for very $\psi \in \Gamma(E)$ and $x\in M$. If $\Psi$ is linear with respect to the natural vector space structures of $\Gamma(E)$ and $\Gamma(F)$, then $\Psi$ is a  differential operator of locally bounded order.
\end{thm}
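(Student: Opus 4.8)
The plan is to reduce the statement to a local, coordinate question and then invoke the classical finite-order estimate that underlies Peetre's original argument. First I would fix a point $y \in M$ and choose trivializing coordinates: a chart $U \subseteq M$ with compact closure over which both $E$ and $F$ are trivial, say $E|_U \cong U \times \mathbb{R}^e$ and $F|_U \cong U \times \mathbb{R}^f$. In these coordinates the linear, germ-local map $\Psi$ restricts to a linear map sending sections supported near $y$ to sections, and locality (dependence only on the germ at $x$) means that $\Psi[\psi](x)$ vanishes whenever $\psi$ vanishes on a neighborhood of $x$. By linearity this is exactly the support-non-increasing property: $\operatorname{supp}\Psi[\psi] \subseteq \operatorname{supp}\psi$.

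The core of the proof is the claim that, after shrinking $U$, there is a finite integer $r$ such that $\psi \mapsto \Psi[\psi](x)$ factors through the $r$-jet $j^r_x\psi$, uniformly in $x \in U$. I would establish this by contradiction in the standard Peetre fashion: if no such uniform bound existed, one could extract a sequence of points $x_n \to x_\infty$ in $\overline{U}$ and sections $\psi_n$ with $j^{n}_{x_n}\psi_n = 0$ but $\Psi[\psi_n](x_n) \neq 0$, then rescale and sum a suitably ``lacunary'' subsequence (shrinking supports around the $x_n$ so the sum converges in $\Gamma(E)$) to produce a single smooth section $\psi$ for which $\Psi[\psi]$ fails to be smooth at $x_\infty$, contradicting $\Psi[\psi] \in \Gamma(F)$. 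The compactness of $\overline{U}$ is what makes the extraction possible. This is precisely the bump-function gluing argument; I would either reproduce it or, more economically, cite it from~\cite{kms,slovak} since it is the classical ingredient.

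Once the uniform order $r$ on $U$ is in hand, linearity finishes the job cheaply. For each fixed $x \in U$ the induced map on $r$-jets $(J^rE)_x \to F_x$ is linear, hence given by a matrix; smoothness of $\Psi[\psi]$ as $\psi$ ranges over a smooth family (e.g. the family of sections whose jets at $x$ trace out a chart of $J^rE$, obtained by varying the coefficients of polynomial sections) forces the matrix entries to depend smoothly on $x$. Patching these local smooth maps $d\colon Z^r \subseteq J^rE \to F$ over a cover of $M$—they automatically agree on overlaps by the uniqueness remark following the definition of differential operator—exhibits $\Psi$ as a differential operator of locally bounded order, which is the assertion. The main obstacle is genuinely the contradiction argument giving the \emph{uniform} finite order on a relatively compact chart; the passage from a pointwise finite order (immediate from germ-locality plus linearity via a Borel-type argument) to a locally uniform one is exactly where the regularity hypothesis ``$\Psi$ maps $\Gamma(E)$ into $\Gamma(F)$'', i.e. smoothness of the output, is indispensable, and everything after it is routine bookkeeping with linear algebra and partitions of unity.
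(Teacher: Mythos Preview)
The paper does not supply its own proof of this theorem: it is stated as a classical result and implicitly deferred to the literature (the references~\cite{kms,slovak} cited alongside the subsequent nonlinear Peetre--Slov\'ak theorem). Your proposal outlines precisely the standard Peetre argument---support preservation from germ-locality plus linearity, the contradiction via a lacunary sum of bump-localized sections to obtain a uniform local order bound, and then the routine extraction of a smooth symbol map---which is correct and is what one would find in those references.
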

\noindent In other words, if $\Psi$ is {\em linear},  even if the values $\Psi[\psi](x)$  potentially depends on the germ of $\psi$ around every considered $x\in M$, actually they only depend on  the \emph{jet} of $\psi$ at $x$ as it is proper of differential operators. 
This noticeable result for a function $\Psi\colon E \to F$,
can be made stronger keeping the requirement of dependence on the germ
but relaxing the {\em linearity} hypothesis (thus also dropping the vector space structure of the fibers of $E$ and $F$) and replacing linearity for a suitable {\em regularity} condition.  This   alternate condition  demands regularity  of $\Psi$ when it acts  on certain smooth  families of sections $\psi_s$  parametrized by $s \in {\mathbb R^k}$ we go to  introduce with the help of an auxiliary bundle used to specify the  joint-smoothness  of these families.
Given a smooth bundle $E \to M$ and  the standard projection $\pi \colon \mathbb R^n \times M \ni (s,x) \mapsto x \in M$, we define an associated smooth bundle, 
called the
{\bf pullback bundle},  $p \colon \pi^*E \to \mathbb{R}^n\times M$
whose canonical fiber is isomorphic to that  of  $E$ and  the base is $\mathbb R^n \times M$.
As a set, $ \pi^*E =  {\mathbb R}^n \times E$  
with  canonical projection onto its base given by $p\colon  \pi^*E \ni   (s, e) \to (s,\pi_E(e)) \in \mathbb R^n \times M$.  The  smooth differentiable 
structure of $\pi^*E$ is defined  accordingly.  The  smooth projection  $q \colon \pi^*E \ni (s,e) \mapsto e\in E$ restricts to  fiber diffeomorphisms $q|_{p^{-1}(s,x)} \colon p^{-1}(s,x) \to \pi^{-1}_E(x)$.  (This way  the following diagram is commutative,
 \begin{equation*}
		\begindc{\commdiag}[50]
		\obj(0,7)[1]{$\pi^*E$}
		\obj(9,7)[2]{$E$}
		\obj(0,0)[3]{${\mathbb R}^n\times M$}
		\obj(9,0)[4]{$M$}
		\mor{1}{2}{$q$}
		\mor{1}{3}{$p$}
		\mor{2}{4}{$\pi_E$}
		\mor{3}{4}{$\pi$}
		\enddc  
		\end{equation*}
and  can be used to abstractly define $\pi^*E$ taking advantage of a certain  universal property of the triple $(\pi^*E, p, q)$.)
It is now clear that a smooth section $\sigma\in \Gamma(\pi^*E)$
uniquely defines a 
$\mathbb R^n$-parametrized jointly-smooth family of  sections $\{\psi_s\}_{s \in \mathbb R^n} \subseteq \Gamma(E)$, where $\psi_s(x) := q\circ \sigma(s, x)$ for $(s,x) \in {\mathbb R}^n \times M$. This observation  justifies the following definition.

\begin{defn}\label{def-variation}  Given smooth bundle $E\rightarrow M$ and the associated pullback bundle $\pi^* E \to \mathbb R^k\times M$,
  $\sigma\in \Gamma(\pi^*E)$  is called 
{\bf smooth $n$-dimensional family} of sections of $E$. 
 If furthermore  there exists a compact subset $K\subseteq  M$ such that $\sigma(s,x)=\sigma(s',x)$  if  $x \not \in K$ and  $s,s' \in \mathbb R^n$, then $\sigma$ is said to be a   smooth {\bf compactly supported $n$-dimensional  variation}. 
\end{defn}
\noindent We are in a position to state the relevant definition about the necessary regularity required in the Peetre-Slov\'ak Theorem~\cite{kms,slovak}.
\begin{defn} \label{def_regular}
	Given smooth bundles $E\rightarrow M$ and $F\rightarrow M$,  a map $\Psi \colon\Gamma(E)\rightarrow\Gamma(F)$ is \textbf{regular} if it maps smooth $n$-dimensional families of sections to  smooth $n$-dimensional families of sections for every natural $n$.  $\Psi$ is \textbf{weakly regular} (cf.~\cite[Apx.A]{KM16}) if it maps smooth compactly supported $n$-dimensional variations to smooth compactly  
		supported $n$-dimensional variations for every natural $n$.
\end{defn}
\begin{thm}[Peetre-Slov\'ak Theorem]
	\label{thm_peetre_nonlin}
	Let $E\rightarrow M$ and $F\rightarrow M$ be smooth bundles over the same base $M$ and $\Psi\colon\Gamma(E)\rightarrow\Gamma(F)$ a map such that,  $\Psi[\psi](x)\in F$ depends only on the germ of  $\psi$ at $x$ for every  $\psi \in \Gamma(E)$ and $x\in M$. If  $\Psi$ is weakly regular, then it is a  differential operator of locally bounded order.
\end{thm}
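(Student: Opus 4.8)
The plan is to follow the classical strategy of Slov\'ak (see also \cite[Ch.~19]{kms}) and split the argument into two stages: first establish that $\Psi$ has locally bounded order, and then exhibit the representing smooth map $d$ on a jet neighbourhood. Both assertions are local, so by germ-dependence we may work in an adapted chart, trivialising $E\cong U\times F^{(E)}$ and $F\cong U\times F^{(F)}$ over an open $U\subseteq\mathbb{R}^m$ and fixing the origin $0\in U$ together with a fixed reference (e.g.\ zero) section $\psi_0$. It is worth recording at the outset the elementary remark already made in the text: as soon as a finite order $r$ is available over a neighbourhood $U$ of $0$, the candidate map $d$ is completely forced, $d(j^r_x\psi):=\Psi[\psi](x)$, and the only two things to check are that this is well defined (which is precisely the finite-order property) and that $d$ is smooth.

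\textbf{Stage 1: locally bounded order.} Argue by contradiction. If no neighbourhood of $0$ carries an order bound, then for each $k\in\mathbb{N}$ there are a point $x_k$, with $x_k\to 0$, and sections $\phi_k,\chi_k$ such that $j^k_{x_k}\phi_k=j^k_{x_k}\chi_k$ but $\Psi[\phi_k](x_k)\neq\Psi[\chi_k](x_k)$. By germ-dependence we may modify $\phi_k,\chi_k$ away from $x_k$ without changing $\Psi[\cdot](x_k)$, so we arrange that $\phi_k=\chi_k=\psi_0$ outside balls $B(x_k,r_k)$ chosen mutually disjoint, not containing $0$, with $r_k\downarrow 0$ rapidly. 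The heart of the argument is to assemble from this data a single compactly supported smooth variation $\sigma\in\Gamma(\pi^*E)$ over $\mathbb{R}\times M$, equal to $\psi_0$ outside a fixed compact set and outside the union of the shrinking balls, which realises, for parameters in a small window around $t_k:=1/k$, the sections $\phi_k$ and $\chi_k$ near $x_k$. The point of the construction is that the increments $\phi_k-\chi_k$ vanish to order $k$ at $x_k$ while their supports shrink, so that with the radii $r_k$ and the parameter cut-offs chosen to decay fast enough (relative to the $C^k$-sizes of the relevant bumps) all partial derivatives of $\sigma$ in $(t,x)$ are controlled and $\sigma$ is jointly smooth, in particular across $t=0$. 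Weak regularity then forces $(t,x)\mapsto\Psi[\sigma(t,\cdot)](x)$ to be smooth, hence continuous; but by construction this function attains, at parameters $t_k\to 0$ and at the nearby points $x_k\to 0$, the two distinct values inherited from $\Psi[\phi_k](x_k)\neq\Psi[\chi_k](x_k)$, contradicting continuity at $(0,0)$. I expect this gluing-and-rescaling step to be the principal obstacle: the simultaneous bookkeeping of shrinking supports, $C^\infty$-convergence of the rescaled bumps, and smoothness of the resulting family at the limiting parameter is where essentially all the content of the theorem resides.

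\textbf{Stage 2: smoothness of $d$.} Fix, as produced by Stage 1, a neighbourhood $U$ of $0$ and an order $r$ such that $\Psi[\psi](x)$ depends only on $j^r_x\psi$ for all $\psi\in\Gamma(E)$ and all $x\in U$; since every jet over $U$ is realised by a polynomial section, this yields a well-defined map $d\colon Z^r:=J^rE|_U\to F$, leaving the base point fixed, with $\Psi[\psi](x)=d(j^r_x\psi)$ for $x\in U$. To see that $d$ is smooth it suffices, working in a chart on $J^rE$ and invoking Boman's theorem, to check that $d\circ c$ is smooth for every smooth curve $c\colon\mathbb{R}\to Z^r$. Such a $c$ is exactly a smooth one-parameter family of $r$-jets with base points $x_t:=\pi_{J^rE}(c(t))\in U$ depending smoothly on $t$; realise it by a compactly supported variation $\psi_t$, obtained by adding to $\psi_0$ a degree-$r$ Taylor polynomial whose coefficients are the (smoothly $t$-dependent) jet coordinates of $c(t)$, centred at $x_t$ and cut off by a fixed bump supported near $\overline U$, so that $j^r_{x_t}\psi_t=c(t)$ and $\psi_t=\psi_0$ outside a fixed compact set. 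By weak regularity, $(t,x)\mapsto\Psi[\psi_t](x)$ is smooth, and restricting to the smooth graph $x=x_t$ shows that $t\mapsto d(c(t))=\Psi[\psi_t](x_t)$ is smooth. Hence $d$ is smooth on $Z^r$, and $\Psi[\psi](x)=d\circ j^r\psi(x)$ for $x\in U$ with $j^r\psi(U)\subseteq Z^r$; as $U$ and $r$ depend on the chosen point, this says exactly that $\Psi$ is a differential operator of locally bounded order. For the full details of the Stage~1 construction we refer to \cite{slovak} and \cite[Ch.~19]{kms}.
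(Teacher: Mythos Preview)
The paper does not supply its own proof of Theorem~\ref{thm_peetre_nonlin}; it is stated as a known result with references to~\cite{slovak} and~\cite{kms}, together with the remark that the strengthening from \emph{regularity} to \emph{weak regularity} is discussed in~\cite[Apx.~A]{KM16}. Your proposal is therefore not competing with an in-paper argument but rather sketching what those references contain, and in broad outline you have the right two-stage structure: a contradiction argument for locally bounded order via a glued compactly supported variation, followed by Boman's theorem for smoothness of the representing map $d$.

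Two points deserve tightening. First, in Stage~1 you set up the contradiction for \emph{arbitrary} sections $\phi_k,\chi_k$ with $k$-jet contact at $x_k$, but the conclusion of the theorem (and the paper's Definition of differential operator of locally bounded order) allows the order $r$ to depend on the reference section $\psi_0$ as well as on the base point; the Slov\'ak argument correspondingly restricts to sections close to $\psi_0$, which is also what makes the $C^\infty$ control of the glued family $\sigma$ at $t=0$ feasible. Without this restriction the $\phi_k,\chi_k$ could grow uncontrollably and the smoothness of $\sigma$ across $t=0$ is not guaranteed. Second, in Stage~2 you take $Z^r=J^rE|_U$, but the finite-order property established in Stage~1 only holds on a jet-neighbourhood of $j^r\psi_0(U)$, consistent with the paper's definition; the domain of $d$ should be restricted accordingly. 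With these adjustments your sketch matches the standard proof the paper defers to.
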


It is worth pointing out that there is an alternative version of the
Peetre-Slov\'ak theorem, which uses \emph{differentiability in the
sense of Bastiani} in place of our \emph{regularity} hypothesis (see
Thmeorem~VII.3 of~\cite{BDLR17}, whose proof is a special case of the
proofs given for the slightly more general Theorems~VII.5,7). It is also
argued in~\cite{BDLR17} that these two hypotheses should in any case be
equivalent, because $\Gamma(E)$ is a Fr\'echet space in the usual
Whitney topology. This alternative version cannot immediately replace
the version that we will need below, because \emph{weak regularity}
coincides with \emph{regularity} only in the physically uninteresting
case of compact $M$. However, there is no reason why Theorem~VII.3
of~\cite{BDLR17} could not be strengthened along the lines of
Appendix~A of~\cite{KM16}, where we discussed strengthening the original
Peetre-Slov\'ak theorem from \emph{regularity} to \emph{weak
regularity}. It remains an open question whether our version of
\emph{weak regularity} or such an alternative version based on Bastiani
differentiability would be easier to verify in practice.

\section{Algebras of quantum observables over spacetimes with background classical fields}\label{section_algebra_bkg}
Within this section we describe the general settings in which we study the renormalization of bosonic fields in curved spacetime. We consider a quantum field over a time-oriented globally-hyperbolic spacetime $(M, \mathbf{g})$ of dimension $n$.
It is convenient to start from $M$ as a simple smooth manifold. In general, in addition to the quantum field,  there are some classical assigned \emph{background fields} on $M$. They  influence the evolution of the  quantum field, for example because they  may be  present in the equation of motion of the quantum field. The first necessary  background field is the metric itself  $\mathbf{g}$, however further   tensor or spinor  fields may enter the theory. Background fields are described as sections $\mathbf{b}$ of a suitable  bundle $BM\rightarrow M$.  Here, however, we do not assume a precise form  for the  fibers of $BM$ which will be completely fixed later.
 Since we are working  with  a locally covariant framework~\cite{BFV},  we have to deal with all bundles of (definite types of) background fields simultaneously and coherently for every globally-hyperbolic spacetime. To this end, we introduce all mathematical structures we need to appropriately describe background fields in a locally covariant framework equipped with other technical structures which will be useful later. We will take advantage of some elementary notions of the theory of categories. 

\subsection{Background geometries} 
\label{sectionbkggeom}
A notion which will play a crucial technical role in our result is the action of the multiplicative group $\mathbb R^+$ as group of physical dilations. We state a general and abstract definition.

\begin{defn}
	A bundle $E\rightarrow M$  is said to be {\bf dimensionful} if it is equipped with a  smooth action 
	of the multiplicative group $\mathbb{R}^+:= (0,+\infty)$
	$$\mathbb{R}^+\times E \ni (\lambda, e) \mapsto e_\lambda \in  E,$$
	 called {\bf scaling}.
	  It is assumed that  every  bundle  diffeomorphism $E \ni e \mapsto e_\lambda \in E$ leaves fixed each fiber of $E$ (so that 
	  the $\lambda$-parametrized family of these restrictions  to a given fiber defines a group representation of $\mathbb R^+$ in terms of fiber diffeomorphisms).
	 \\ A dimensionful  bundle  is said to be {\bf dimensionless} if the action  is chosen to be everywhere  trivial.
\end{defn}
\begin{rem} Every vector bundle or a cone sub-bundle of a vector bundle (a cone is a subset of a vector space that is invariant under multiplication by positive real numbers, \emph{e.g.},\ the cone of metrics of Lorentzian signature in the vector space of symmetric 2-tensors) can be viewed as  dimensionful, since it can be equipped with a well-defined multiplication by scalars with some  fixed power $p \in \mathbb R$  on its fibers: $t \mapsto \lambda^p t$.
\end{rem}
\noindent Next we pass to  introduce some relevant categories which will be specialized later.

\begin{itemize}
	\item $\mathfrak{Man}$ is a category of smooth manifolds. Here  objects are connected smooth manifolds $M$ of fixed dimension $n$ and morphisms are smooth embeddings $\chi \colon M \to M'$.  
	\item $\mathfrak{Bndl}$ is a category of dimensionful smooth bundles.  Here objects $\pi_E \colon E \to M$ are smooth bundles over a smooth base
	of fixed dimension $n$. Since a smooth bundle is locally trivializable, its typical fiber is diffeomorphic to a fixed  manifold  $F$ with possibly some additional structures (\emph{e.g.},\ a vector space structure) compatible with the smooth structure.
	Morphisms are smooth maps $\xi \colon E \to E'$ that are both

	 (i) {\bf fiber preserving}:  $\pi_{E'}\circ \xi = \chi_\xi \circ \pi_E$
	for uniquely associated smooth maps $\chi_\xi \colon M\to M'$ and preserving the additional structure of the fiber if any,

	  (ii) {\bf equivariant with respect to scaling}:  $\xi(e)_\lambda = \xi(e_\lambda)$  for  $\lambda \in \mathbb R^+$ and $e\in E$.
\end{itemize}

\begin{rem}
The definition applies also when a scaling action is not defined. In this case the standard scaling action is assumed to be the trivial one, \emph{i.e.},\ the bundles are supposed to be dimensionless.
\end{rem}
\noindent The interplay of a class of manifolds, which will be later interpreted as spacetimes,  with corresponding bundles, which represent background classical fields (including the metric),
 and implementing the ideas of local covariance is encapsulated in a certain type of functor~\cite{BFV} that we define below into a very general fashion. Later we specialize it to the case relevant to our work.

\begin{defn}\label{defnatbund}
	A natural (dimensionful) bundle is a functor $H\colon\mathfrak{Man}\rightarrow\mathfrak{Bndl}$ 
	such that, using the notation  $HM := H(M)$ for every $M \in \mathfrak{Man}$,
	 a morphism $\chi\colon M\rightarrow M'$ has an associated morphism $H_\chi\colon HM\rightarrow HM'$
	with $\pi_{HM'}\circ H_\chi = \chi \circ \pi_{HM}$
	and $H_\chi$ is  a local diffeomorphism (a diffeomorphism 
	onto its image).
\end{defn}
\noindent  Given a morphism $\chi \colon M \to M'$ and exploiting the fact that  $H_\chi$ is a local diffeomorphism,  it is  possible to construct a 
{\bf pullback action on sections} of the associated 
bundles $$\chi^*\colon\mathscr{E}(HM')\rightarrow\mathscr{E}(HM)$$ which is completely 
defined by requiring that   
\begin{equation}\label{defpullbacksections}\mathbf{h}'\circ\chi=H_\chi\circ(\chi^*\mathbf{h}') \quad  \mbox{for $\mathbf{h}'\in\mathscr{E}(HM')$.}\end{equation}
 Since the morphism $H_\chi$ is equivariant, the scaling commutes with the pull-back, \emph{i.e.},
 $$\chi^*(\mathbf{h}'_\lambda)=(\chi^*\mathbf{h}')_\lambda\quad \mbox{with $\lambda\in\mathbb{R}^+$.}$$
  Furthermore, exploiting the compactness of the support of the elements of $ \mathscr{D}(HM)$, also 
 a natural push-forward map $\chi_{*} \colon \mathscr{D}(HM)\to \mathscr{D}(HM')$ arises immediately. It is defined as follows 
 \begin{equation}
 \label{defpullforwardsections}
 \left(\chi_{*}f\right)(p')= H_{\chi}|_{\chi^{-1}(p')} f(\chi^{-1}(p'))\:,
 \end{equation}
 for $f\in\mathscr{D}(HM)$ and $p'\in \chi(M)$ and the right-hand side is extended to the zero function for $p' \not \in \chi(M)$.\\
Finally, if $H,H'\colon \mathfrak{Man} \to \mathfrak{Bndl}$ are natural bundles, the duals $H^*$, $H'^*$, the direct sum $H\oplus H'$ and tensor product $H\otimes H'$ also define natural bundles.

Dealing with a general framework of relativistic quantum field theory, a relevant
natural bundle, denoted by
 $B\colon\mathfrak{Man}\rightarrow\mathfrak{Bndl}$,
is
\begin{equation} 
\label{background_general}
BM= \mathring{S}^2T^*M\bigoplus_{i=1}^N  \left(T^{\otimes k_i}M \otimes T^{*\otimes l_i}M\right)
\end{equation}
where  $\mathring{S}^2T^*M \subset S^2T^*M$ is the bundle of Lorentzian metrics over $M$ and some choice of tensor powers $k_i$ and $l_i$. We will later use $BM$ as the bundle of background fields for a model of quantum fields.
Scalar fields in particular are admitted when
$k_i=l_i=0$.
As previously observed, the bundles  \eqref{background_general} are naturally dimensionful.  The sections of these type of bundles represent the {\em non-quantized fields} of definite type assigned in every spacetime simultaneously and coherently. The metric is one of these given fields. Let us state a pair of precise definitions adding some further relevant details concerning the effective action of $\mathbb R^+$.

\begin{defn} \label{def_bkg}
	$B\colon \mathfrak{Man} \to \mathfrak{Bndl}$ is the natural bundle
	of the form \eqref{background_general}, with fixed $k_i,l_i$.
	\\
	A \textbf{background field} is a section $\mathbf{b}\colon
	M \to BM$. A pair $(M,\mathbf{b})$ is a \textbf{background geometry},
	provided the section $\mathbf{b} = (\mathbf{g}, \mathbf{t}_1, \ldots,  \mathbf{t}_N)$ is such that $(M,\mathbf{g})$ is a
	time-oriented globally hyperbolic spacetime. \\
 The action of $\mathbb R^+$ on the bundles of the form~\eqref{background_general} is
 such that, for every background field,
\begin{equation}
 (\mathbf{g}, \mathbf{t}_1, \ldots,  \mathbf{t}_N)\longmapsto \left(\lambda^{-2}\mathbf{g},\lambda^{s_1}\mathbf{t}_1, \ldots, \lambda^{s_N}\mathbf{t}_N\right)\:,\quad (\mathbf{g},\mathbf{t}_1, \ldots,  \mathbf{t}_N)\in\mathscr{E}(BM)\quad \lambda \in \mathbb R^+  \label{ps}\:,
\end{equation}
for given reals $s_i$ independent from the section and $M$.
Each such transformation is called \textbf{physical scaling} transformation.
\end{defn}

\begin{defn} \label{def_lc}
 Referring to the natural bundle $B\colon \mathfrak{Man} \to \mathfrak{Bndl}$ of the form~\eqref{background_general}, we define 
	the following associated categories.

	\textbf{(a)} $\mathfrak{BkgG}$ is the \textbf{category of background geometries},
	having time-oriented background geometries as objects and morphisms
	given by smooth embeddings $\chi\colon M\to M'$ that preserve the
	background fields, $\chi^* \mathbf{b}' = \mathbf{b}$ on $M$, the time orientation, and
	causality (every causal curve between $\chi(p)$ and
	$\chi(q)$ in $M'$ is the $\chi$-image of a causal curve between $p$ and
	$q$ in $M$). 
	
	\textbf{(b)}  $\mathfrak{BkgG}^+$ is  the \textbf{category of oriented background
		geometries} having  oriented and time-oriented background geometries as
	objects and morphisms as in $\mathfrak{BkgG}$, but also required to preserve the
	spacetime orientation.
\end{defn}

\begin{rem}
The group of physical scaling transformations acts on the above categories as $(M,\mathbf{b}) \mapsto (M,\mathbf{b}_\lambda)$, for any
	$\lambda \in \mathbb{R}^+$. By equivariance of the pullback of background
	fields, physical scalings actually act as functors, $\mathfrak{BkgG} \to \mathfrak{BkgG}$ and $\mathfrak{BkgG}^+ \to \mathfrak{BkgG}^+$
	respectively.
\end{rem}

\subsection{The net of local quantum observables} The introduced formalism permits us to 
describe the {\em net  algebras of local quantum observables} on our background geometries. 
We explicitly only deal with $\mathfrak{BkgG}$, but everything we say can be trivially re-adapted to $\mathfrak{BkgG}^+$.  
\begin{defn}
	\label{def_netalgebra}
	A \textbf{net of algebras (of local quantum observables)} $\mathcal{W}$ is an assignment of a
	complex unital $*$-algebra $\mathcal{W}(M,\mathbf{b})$ to every background geometry
	$(M,\mathbf{b})$ in $\mathfrak{BkgG}$ together with an assignment of an injective unital
	$*$-algebra homomorphism $\iota_\chi \colon \mathcal{W}(M,\mathbf{b}) \to \mathcal{W}(M',\mathbf{b}')$ to
	every morphism in $\mathfrak{BkgG}$, respecting compositions and associating identities  to identities. In other words $\mathcal{W}\colon\mathfrak{BkgG} \to \mathfrak{Alg}$ is a functor
	from the category of background geometries into the category of {(complex)
		unital $*$-algebras} whose morphisms are injective unital $*$-algebra
	homomorphisms. Further, we require
	that $\mathcal{W}$ respects (i) {\bf scaling} and (ii) the {\bf time slice axiom}, as described below.	
	\begin{itemize}
		\item[\textbf{(i)}]
			Physical scaling transformations $(M,\mathbf{b}) \mapsto (M,\mathbf{b}_\lambda)$ are represented  in terms of 
		$*$-algebra isomorphisms $\sigma_\lambda\colon \mathcal{W}(M,\mathbf{b}) \to
		\mathcal{W}(M,\mathbf{b}_\lambda)$ such that $\sigma_1=id$ and $\sigma_\lambda \circ \sigma_{\lambda'}= \sigma_{\lambda\lambda'}$.  Varying $(M, \mathbf{b})$, scaling transformations must commute with embeddings, \emph{i.e.},\ they  act as natural isomorphisms
		$\sigma_\lambda \colon \mathcal{W} \to \mathcal{W}_\lambda$ between the $*$-algebra
		valued functors $\mathcal{W}$ and $\mathcal{W}_\lambda$, the latter defined by
		$\mathcal{W}_\lambda(M,\mathbf{b}) = \mathcal{W}(M,\mathbf{b}_\lambda)$.
		\item[\textbf{(ii)}]
		Given a morphism $\chi \colon M \to M' $ between the background
		geometries  $(M,\mathbf{b})$ and  $(M',\mathbf{b}')$, if the image $\chi(M) \sse M'$ contains a Cauchy surface
		for $(M',\mathbf{g}')$, then the induced $*$-homomorphism $\iota_\chi \colon
		\mathcal{W}(M,\mathbf{b}) \to \mathcal{W}(M',\mathbf{b}')$ is a $*$-isomorphism.
	\end{itemize}
	We refer to a similar functor $\mathcal{W} \colon \mathfrak{BkgG}^+ \to \mathfrak{Alg}$ with analogous 
	properties as a net of algebras as well.
\end{defn}
\begin{rem}$\null$\\
{\bf (1)}  The unit of every  algebra $\mathcal{W}(M,\mathbf{b})$ will be simply denoted by $1$ in place of a cumbersome notation $1_{(M,\mathbf{b})}$.\\ 
{\bf (2)} The scaling axiom is necessary because we will be required to  compare  local algebras defined on a given manifold  which are identified by scaling transformations. These algebras must be viewed as distinct since  their background fields are different. Therefore to compare them we need to assume that there is an isomorphism $\sigma_\lambda$ identifying them. In more physically minded presentations this structure is not discussed  and the said identification   is 
hidden in the formalism.
\end{rem}
\noindent The time slice axiom has a consequence which will play a fundamental role in the sequel, in particular for the application of the Peetre-Slov\'ak Theorem.

\begin{prop}
	\label{tau} Referring to  Definition~\ref{def_netalgebra} consider  $(M, \mathbf{b})\:,  (M, \mathbf{b}')\in \mathfrak{BkgG}$ 
	(resp. $\mathfrak{BkgG}^+$)
	such that $\mathbf{b}= \mathbf{b}'$ with identical temporal orientation outside a compact region $K \subseteq M$.	 There exists a unital $*$-algebra isomorphism 
	$$
		\tau \colon  \mathcal{W}(M, \mathbf{b}) \to \mathcal{W}(M, \mathbf{b}')
	$$
	such that $\tau|_{ \mathcal{W}(N, \mathbf{b}|_N)} \colon
	\mathcal{W}(N, \mathbf{b}|_N) \to \mathcal{W}(N, \mathbf{b}'|_N)$ is the identity for every  $(N, \mathbf{b}|_N)
	\in  \mathfrak{BkgG}$ (resp. $\mathfrak{BkgG}^+$) satisfying  $N \cap J_{(M,\mathbf{b})}^+(K) = \emptyset$. 
\end{prop}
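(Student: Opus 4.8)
The plan is to build $\tau$ by factoring the two background geometries $(M,\mathbf{b})$ and $(M,\mathbf{b}')$ through a common subregion on which they agree, and then invoking the time-slice isomorphism twice. First I would choose a Cauchy surface $\Sigma$ for $(M,\mathbf{g})$ lying entirely in the past of $K$, i.e. with $\Sigma \cap J^+_{(M,\mathbf{b})}(K) = \emptyset$; such a surface exists because $K$ is compact and $(M,\mathbf{g})$ is globally hyperbolic (the metrics agree outside $K$, but near and below $\Sigma$ we are outside $K$, so $\mathbf{g}=\mathbf{g}'$ there and $\Sigma$ is simultaneously Cauchy for $(M,\mathbf{g}')$). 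Let $N = I^-_{(M,\mathbf{b})}(\Sigma)$ be an open globally hyperbolic neighborhood of $\Sigma$ to its past (one may also take $N$ to be a slightly thickened two-sided neighborhood of $\Sigma$ disjoint from $J^+(K)$); then $\mathbf{b}|_N = \mathbf{b}'|_N$, so $(N,\mathbf{b}|_N)$ is literally the same background geometry whether viewed as a sub-object of $(M,\mathbf{b})$ or of $(M,\mathbf{b}')$, and the inclusions $\chi\colon N \hookrightarrow M$ (into $(M,\mathbf{b})$) and $\chi'\colon N \hookrightarrow M$ (into $(M,\mathbf{b}')$) are morphisms in $\mathfrak{BkgG}$.

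Next, since $N$ contains the Cauchy surface $\Sigma$, which is also a Cauchy surface for both $(M,\mathbf{g})$ and $(M,\mathbf{g}')$, the time-slice axiom (part (ii) of Definition~\ref{def_netalgebra}) tells us that the induced homomorphisms $\iota_\chi\colon \mathcal{W}(N,\mathbf{b}|_N) \to \mathcal{W}(M,\mathbf{b})$ and $\iota_{\chi'}\colon \mathcal{W}(N,\mathbf{b}|_N) \to \mathcal{W}(M,\mathbf{b}')$ are both $*$-isomorphisms. We then define
\begin{equation*}
\tau := \iota_{\chi'} \circ \iota_\chi^{-1} \colon \mathcal{W}(M,\mathbf{b}) \to \mathcal{W}(M,\mathbf{b}'),
\end{equation*}
which is a unital $*$-algebra isomorphism by construction.

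It remains to verify the locality statement: for $(N',\mathbf{b}|_{N'}) \in \mathfrak{BkgG}$ with $N' \cap J^+_{(M,\mathbf{b})}(K) = \emptyset$, the restriction $\tau|_{\mathcal{W}(N',\mathbf{b}|_{N'})}$ is the identity. The key point is that any such $N'$ lies in the past region where $\mathbf{b}=\mathbf{b}'$; moreover, using that $N'$ is causally disjoint from $J^+(K)$, every point of $N'$ can be joined to $N$ through the region avoiding $J^+(K)$, so one can find a common globally hyperbolic neighborhood $\tilde N \supseteq N' \cup N$ (or treat $N' \cup N$ directly) on which $\mathbf{b}=\mathbf{b}'$ and which still contains a Cauchy surface of $M$. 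Then functoriality of $\mathcal{W}$ (respecting compositions of the inclusions $N' \hookrightarrow \tilde N \hookrightarrow M$ and $N \hookrightarrow \tilde N \hookrightarrow M$) forces $\iota_\chi$ and $\iota_{\chi'}$ to agree on the image of $\mathcal{W}(N',\mathbf{b}|_{N'})$, whence $\tau$ acts as the identity there; here one uses that the abstract algebra $\mathcal{W}(N',\mathbf{b}|_{N'})$ and all the inclusion maps into it are genuinely identical for $\mathbf{b}$ and $\mathbf{b}'$. The main obstacle I expect is a purely geometric one: carefully arranging the auxiliary globally hyperbolic neighborhoods ($\Sigma$, $N$, and the common neighborhood of $N' \cup N$) so that they lie in the region where the two backgrounds coincide, contain the requisite Cauchy surface, and fit together compatibly with the functorial structure — once the geometry is set up correctly, the algebraic part is a formal diagram chase using only functoriality and the time-slice axiom.
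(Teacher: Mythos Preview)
Your proposal is correct and follows essentially the same approach as the paper, which simply refers to the time-slice axiom and cites \cite[Sec.3]{KM16} for details; the construction there is exactly the one you describe (pick a Cauchy surface in the past of $K$, apply the time-slice isomorphism twice, and compose). For the locality statement, the cleanest choice of the auxiliary region is $\tilde N = M \setminus J^+_{(M,\mathbf{b})}(K)$, which is open, causally convex, globally hyperbolic, contains $\Sigma$ as a Cauchy surface, and contains every admissible $N'$ as well as $N$; this makes the diagram chase you outline immediate.
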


\begin{proof}
	The proof is based mainly on the time slice axiom (Definition~\ref{def_netalgebra}). See~\cite[Sec.3]{KM16} for more details.
\end{proof}

\section{Quantum fields}\label{section_quantum_fields}

{}
 We have so far discussed all the mathematical structures  we need to describe background fields  and the abstract notion of a net of quantum observables.  At this abstract level we may introduce the definition of quantum fields as special elements of the algebras of observables~\cite{BFV}.

\begin{defn} \label{def_quant_field}
Fix a net of local quantum observables $ \mathcal{W}$ as in Definition~\ref{def_netalgebra}
and a natural vector bundle $V$.
A {\bf quantum $V$-field} is an assignment $\Phi_{(M,\mathbf{b})}$ of an algebra-valued distribution
\begin{equation*}
\Phi_{(M,\mathbf{b})}\colon\mathscr{D}(V^*M)\rightarrow \mathcal{W}(M,\mathbf{b})
\end{equation*}
to each background geometry $(M,\mathbf{b}) \in \mathfrak{BkgG}$.
\end{defn}
The given definition does not yet assume any particular relation between $\Phi_{(M,\mathbf{b})}$ and $\Phi_{(M',\mathbf{b}')}$
when $(M,\mathbf{b})$ and $(M',\mathbf{b}')$ are connected by a morphism $\chi$ of  $\mathfrak{BkgG}$. 
A quantum field $\Phi$ is said to be {\em locally covariant} when each pair of $\Phi_{(M,\mathbf{b})}$ and $\Phi_{(M',\mathbf{b}')}$ is in fact 
connected according to 
a natural rule arising from the  definition of a natural bundle, translating into the mathematical language the ideas of locality and general covariance.

\begin{defn}
	\label{local_field}
	A quantum $V$-field $\Phi$ (Definition~\ref{def_quant_field}) with  respect to the net of local quantum observables $ \mathcal{W}$ as in Definition~\ref{def_netalgebra} is said to be {\bf locally covariant}
if it satisfies the following identity for each morphism $\chi\colon(M,\mathbf{b})\rightarrow (M',\mathbf{b}')$ where $\mathbf{b}= \chi^*\mathbf{b}'$,
	\begin{equation}\label{Akloccov}
	\iota_\chi\left(\Phi_{(M,\mathbf{b})}(f)\right)=\Phi_{(M',\mathbf{b}')}\left(\chi_*f\right),\qquad \forall f\in\mathscr{D}(V^*M).
	\end{equation}
	where $\chi_*$ is the push-forward with respect to a natural bundle, as in \eqref{defpullforwardsections}.
\end{defn}

\begin{rem}$\null$

\noindent{\bf (1)} It is usual to require that the field $\Phi$ has a definite \textbf{scaling degree} $d_{\Phi} \in\mathbb R$ with respect to the action of physical scaling. However, when $\Phi$ has multiple components, different components of a $V$-field can be grouped together by their scaling degree, giving rise to the decomposition of the field bundle as in Remark~\ref{rem_decomp}. Then the role of the scaling degree is played by a globally diagonalizable endomorphism $\mathbf{d}_\phi \colon VM \to VM$, whose eigen-subspaces constitute the bundle decomposition $VM = \bigoplus_{i=1}^N W_i M$ and whose eigenvalues correspond to the weights of these field sub-bundles. Alternatively, once this field bundle decomposition is known, the endomorphism $\mathbf{d}_{\Phi}$ can be identified with its eigenvalues $(d_{\Phi_1},\ldots,d_{\Phi_N})$. Informally written, the relation between the scaling of background fields and $\Phi$ means that
\begin{equation}\label{sloppyscaling}
(\mathbf{g}, \mathbf{t}_1, \ldots,  \mathbf{t}_N)\longmapsto \left(\lambda^{-2}\mathbf{g},\lambda^{s_1}\mathbf{t}_1, \ldots, \lambda^{s_N}\mathbf{t}_N\right)\quad\Longrightarrow\quad \Phi\longmapsto \lambda^{\mathbf{d}_{\Phi}}\Phi,\qquad (\mathbf{g},\mathbf{t})\in\Gamma(BM)\:.
\end{equation}
To formulate a precise statement  valid also for  $V$-fields exploiting our formalism we need a precise scaling procedure based on the isomorphism
$\sigma_\lambda$ introduced in Definition~\ref{def_netalgebra}. If $\Phi$ is a quantum $V$-field, we can define the rescaled quantum $V$-field $S_\lambda \Phi$ as
\begin{equation}
\label{S_lambda}
(S_\lambda \Phi)_{(M,\mathbf{b})}(f)=\sigma^{-1}_\lambda(\Phi_{(M,\mathbf{b}_\lambda)}(\lambda^nf))\:, \quad \lambda \in \mathbb R^+\:,
\end{equation}
where $n$ is the dimension of the spacetime $M$ and $\sigma_\lambda$ is the algebra isomorphism introduced in Definition~\ref{def_netalgebra}. It should be clear here  that both $\Phi_{(M,\mathbf{b})}$ and  $(S_\lambda \Phi)_{(M,\mathbf{b})}$ are element of the same algebra  $\mathcal{W}(M,\mathbf{b})$ due to the presence of $\sigma_\lambda^{-1}$ in the second case. The factor $\lambda^n$ just compensates the scaling of the volume form $dg\mapsto \lambda^{-n}dg$ when $\mathbf{g}\mapsto\lambda^{-2}\mathbf{g}$.  A mathematically rigorous version of \eqref{sloppyscaling} is now
$$
	(S_\lambda \Phi)_{(M,\mathbf{b})}(f) = \lambda^{d_\Phi}\Phi_{(M,\mathbf{b})}(f)\:, \quad (M, \mathbf{b})\in \mathfrak{BkgG}\:, \quad  f \in\mathscr{D}(VM)\:, \quad \lambda \in \mathbb R^+ \:.
$$

\noindent{\bf (2)} As usual, it is convenient  to think of the  algebra-valued distribution $\Phi$  as a formal point-like field $\Phi^\mu(x)$ smeared with a test section $f\in \mathscr{D}(V^*M)$. In the sequel, we will make extensive use of the case when $V$ is replaced by $S^k V$. Then we may write the formal point-like field
$\Phi^{\mu_1\cdots \mu_k}(x)$ smeared with a test section $f\in\mathscr{D}(S^kV^*M)$ as
\begin{equation*}
\Phi_{(M, \mathbf{b})}(f)=\int_M \Phi^{\mu_1\cdots \mu_k}(x)f_{\mu_1\cdots \mu_k}(x)\; dg(x) \:,
\end{equation*}
where $dg(x)$ is the volume form induced by the metric $\mathbf{g}$ on $M$.\\
Similarly, if $\mathbf{C}$ is any function that maps a background geometry $\mathbf{b}$ to a distribution on a certain space of test functions $\mathbf{C}_{(M,\mathbf{b})}\colon \mathscr{D}(S^kVM)\to \mathbb R$, 
it is heuristically convenient to use the distributional notation  
\begin{equation}\label{formalC}
\mathbf{C}_{(M,\mathbf{b})}(f):= \int_M C[M,\mathbf{b}]_{\mu_1\cdots \mu_k}(x)f^{\mu_1\cdots \mu_k} (x)\; dg(x)\:,  \quad 
f\in\mathscr{D}(S^kV^*M), 
\end{equation}
where the formal  $c$-\emph{number field} $C[M,\mathbf{b}]_{a_1\cdots a_k}(x)$ may not be smooth.  Such a function $\mathbf{C}$ will be named a {\bf $c$-number $S^kV$-field}. If the sections $C[M,\mathbf{b}]$ are smooth for every $(M,\mathbf{b})$ (\emph{i.e.},\ $C[M,\mathbf{b}]\in\mathscr{E}(S^kV^{*} M)$), $\mathbf{C}$ is said to be a {\bf smooth $c$-number $S^kV$-field}. From now on we systematically identify $\mathbf{C}_{(M,\mathbf{b})}(f)$ with  the corresponding  trivial element, a so called  {\em $c$-number}, $\mathbf{C}_{(M,\mathbf{b})}(f)1$ of  $\mathcal{W}(M,\mathbf{b})$. In this sense a $c$-number $S^kV$-field is a sub-case 
of a quantum $S^kV$-field and,  for instance,  the scaling action \eqref{S_lambda} applies to these particular quantum fields as well.\\

\noindent{\bf (3)} If in addition $\mathbf{C}$ satisfies the identity 
\begin{equation}\label{Cloccov}
\mathbf{C}_{(M,\chi^*\mathbf{b}')}(f)=\mathbf{C}_{(M',\mathbf{b}')}(\chi_* f)\:, \end{equation}
for every  background morphism $\chi\colon(M,\mathbf{b})\rightarrow (M',\mathbf{b}')$ (so $\mathbf{b} = \chi^* \mathbf{b}'$) and every test section $f \in \mathscr{D}(S^kV^*M)$, then $\mathbf{C}$ defines a \textbf{locally covariant} $c$-number  $S^kV$-field.\\
Using the definitions of pull-back $\chi^{*} \colon \mathscr{E}(S^kV^*M')\to \mathscr{E}(S^kV^*M)$ and push forward 
$\chi_{*} \colon \mathscr{D}(S^kVM)\to \mathscr{D}(S^kVM')$
, it is easy to prove that if $\mathbf{C}$ is described by $ C[M,\mathbf{b}] \in \mathscr{E}(S^kV^*M)$
for every choice of $(M,\mathbf{b})$ by means of \eqref{formalC}, then \eqref{Cloccov}
is equivalent to 
\begin{equation}\label{Cloccov2}
C[M,\chi^*\mathbf{b}'](x)= \left(\chi^* C[M',\mathbf{b}']\right)(x)\:, \end{equation} for every  background morphism $\chi\colon(M,\mathbf{b})\rightarrow (M',\mathbf{b}')$ (with $\mathbf{b} = \chi^* \mathbf{b}'$) and $x\in M$.
\end{rem}

\section{Wick powers of quantum boson fields}\label{section_wick}
In this section we want to study the renormalization of Wick powers of a generic quantum boson field. Before discussing this problem, we have to introduce a useful definition concerning the notion of physical scaling introduced in Section~\ref{sectionbkggeom}.

\paragraph{Physical scaling degree.}
Physical scaling will be used together with the notion of {\em homogeneous} and {\em almost homogeneous} scaling degree. Since these notions are quite abstract we can present them into a general inductive  definition~\cite[Def.2.3]{KM16}.
\begin{defn}
	\label{def_homogeneous}
	Consider a linear representation $\rho\colon\mathbb{R}^+\rightarrow GL(W)$ of the multiplicative group $\mathbb{R}^+$ on a vector space $W$ whose action is indicated by   $W\ni F\mapsto F_\lambda := \rho(\lambda)F \in W$, for every $\lambda\in\mathbb{R}^+$.
	\begin{enumerate}
		\item[\textbf{(a)}]An element $F\in W$ is said to have \textbf{homogeneous degree} $k\in\mathbb{R}$ if 
		\begin{equation*}
		F_\lambda=\lambda^kF,\quad \mbox{for all } \lambda\in\mathbb{R}^+.
		\end{equation*}
		\item[\textbf{(b)}]An element $F\in W$ is said to have \textbf{almost homogeneous degree} $k\in\mathbb{R}$ \textbf{and order} $l\in\mathbb{N}$ if $l\geq 0$ is an integer such that (with the sum over $j$ is omitted when $l=0$)
		\begin{equation*}
		F_\lambda=\lambda^kF+\lambda^k\sum_{j=1}^l\left(\log^j\lambda\right)G_j,\quad \mbox{for all } \lambda\in\mathbb{R}^+,
		\end{equation*}
		and for some $G_j\in W$ depending on $F$, which have respectively almost homogeneous degree $k$ and order $l-j$. An element that is almost homogeneous of order $l=0$ is   homogeneous by definition.
	\end{enumerate}
\end{defn}
\noindent In the rest of the paper we will exploit  several technical results about  physical scaling in concrete application, all reported in Appendix~\ref{section_scaling}, with the exception of the next general lemma,  proved in~\cite[Lem.2.5]{KM16}.
\begin{lem}
	\label{lemma_product}
	Referring to Definition~\ref{def_homogeneous}, consider a pair of vector spaces $W, W'$ endowed with corresponding representations of $\mathbb{R}^+$. Concerning (b) below, assume also that there is a product  $W\times W' \rightarrow V$ such that  (i) $V$ admits a representation of $\mathbb{R}^+$ and (ii) the map  $W\times W' \to V$ is equivariant: $F_\lambda F'_\lambda = (FF')_\lambda$ for $F\in W$, $F'\in W'$ and $\lambda \in \mathbb{R}^+$. The following facts hold.
	\begin{itemize}
		
		\item[\textbf{(a)}] A linear combination of two elements $F, F' \in W$ of almost homogeneous degree $k$ and order $l$ is of almost homogeneous degree $k$ and order $l$.
		
		\item[\textbf{(b)}]  A product of an element $F\in W$, of almost homogeneous degree $k$ and order $l$, and an element $F'\in W'$, of
		almost homogeneous degree $k'$ and order $l'$, has almost homogeneous degree $k+k'$ and order $l+l'$.
	\end{itemize}
\end{lem}

\paragraph{General settings.} Our general setting is the following: 
\begin{enumerate}
	\item We start with a bundle $VM$ which is constructed as a direct sum of vector bundles
		\begin{equation}
			\label{bundle_decomposition}
			VM=\bigoplus_{i=1}^N W_iM.
		\end{equation}

	\item We consider a locally covariant quantum $V$-field $A_{(M,\mathbf{b})}\colon\mathscr{D}(VM)\rightarrow \mathcal{W}(M,\mathbf{b})$ and we characterize it as a quantum \emph{boson field} in the following way. We assume that the  commutator of two $V$-fields $[A_{(M,\mathbf{b})}(f),A_{(M,\mathbf{b})}(g)]$  is a $c$-number, \emph{i.e.},
	\begin{equation}
	\label{rem_commutator}
	[A_{(M,\mathbf{b})}(f),A_{(M,\mathbf{b})}(g)]=C_{(M,\mathbf{b})}(f\otimes g)1,
	\end{equation}
	where $C_{(M,\mathbf{b})}\in\mathscr{D}'(VM\times VM)$ is a distribution with some suitable properties (\emph{e.g.},\ for {\em boson fields} it vanishes for spacelike separated arguments).
	Thus,  Schwartz' kernel theorem implies that  a unique continuous linear map $\Delta_{(M,\mathbf{b})}\colon \mathscr{D}(VM)\rightarrow\mathscr{D}'(VM)$ exists such that $[\Delta_{(M,\mathbf{b})}(g)](f)=C_{(M,\mathbf{b})}(f\otimes g)$. We require moreover that $\Delta_{(M,\mathbf{b})}(g)$ is \textbf{regular} in the sense that
	\begin{equation}\label{kernel}
	\Delta_{(M,\mathbf{b})}\colon  \mathscr{D}(VM)\rightarrow\mathscr{E}(V^*M) ,
	\end{equation}
	where we used the fact that $\mathscr{E}(V^*M)\subset\mathscr{D}'(VM)$.
	There are many ways to implement this requirement in practical cases, for example our assumption holds when the dynamics of the field $A$ is ruled by any hyperbolic field equation in view of the theorem of propagation of singularities (in this case $\Delta_{(M,\mathbf{b})}$ is simply the causal propagator, see Section \ref{proca_section}  for an example). More generally it holds when some microlocal spectrum (cf.~\cite{AAQFT15Ch5}, \cite[Ch.4]{BF09}) hypothesis on the wavefront set  of $n$-point functions is  assumed with respect to relevant classes of states even in the absence of a field equation\footnote{\eqref{kernel} is valid when $WF(C_{(M,\mathbf{b})}) \not \ni (x,y,p_x,p_y)$ with either $p_x=0$ or $p_y=0$ and this is guaranteed as soon as some standard  microlocal spectrum condition on $C_{(M,\mathbf{b})}$ is valid, in particular if $C_{(M,\mathbf{b})}$ is a bisolution of a hyperbolic field equation.}.
	
	If we use explicitly the decomposition \eqref{bundle_decomposition} the map $\Delta_{(M,\mathbf{b})}$ can be seen as a direct sum of maps
	\begin{equation*}
	\Delta_{(M,\mathbf{b})}=\bigoplus_{l=1}^N\sum_{j=1}^N\Delta_{(M,\mathbf{b})}^{lj},
	\end{equation*}
	where $\Delta_{(M,\mathbf{b})}^{lj}\colon  \mathscr{D}(W_jM) \rightarrow\mathscr{E}(W^*_lM)$.
	
	\item Since we assumed a bundle constructed as in \eqref{bundle_decomposition}, the $V$-field $A_{(M,\mathbf{b})}$ can be written as a $N$-tuple of $W_i$-fields
	\begin{equation*}
		A_{(M,\mathbf{b})}=\left((A_1)_{(M,\mathbf{b})},\ldots,(A_N)_{(M,\mathbf{b})}\right).
	\end{equation*}
	We assume that each $W_i$-field $(A_i)_{(M,\mathbf{b})}$ scales homogeneously under physical scaling with degree $d_{A_i}\in\mathbb{R}$, \emph{i.e.},
	$$
		(S_\lambda A_i)_{(M,\mathbf{b})}(f) = \lambda^{d_{A_i}}(A_i)_{(M,\mathbf{b})}(f)\:, \quad (M, \mathbf{b})\in \mathfrak{BkgG}\:, \quad  f \in\mathscr{D}(W_iM)\:, \quad \lambda \in \mathbb R^+ \:.
	$$
	We will say that the $V$-field $A_{(M,\mathbf{b})}$ scales homogeneously with degree $\mathbb{R}^N\ni \mathbf{d}_A=(d_{A_1},\ldots d_{A_N})$ under physical scaling.
	
	\item We then consider the Wick powers $A^k$ of $A$. These quantum fields $A^k$ have the physical interpretation of products of $k$ factors $A$ evaluated at the same point $x$. Formally, assuming a geometric background $(M, \mathbf{b})$ has been fixed,  
	$$A^k_{\mu_1\ldots \mu_k}(x) = (A_{\mu_1}\cdots A_{\mu_k})(x) \:.$$
	It is worth stressing that these quantum fields are not elements of the sub unital $*$-algebra generated by $1$ and elements $A(f)$ since these elements  are associated with kernels formally evaluated at  {\em different} points of spacetime, \emph{i.e.},\ they are linear combinations of objects $A_{\mu_1}(x_1)\cdots A_{\mu_k}(x_k)$. Thus Wick powers need a specific definition which, as is well-known, involves some \emph{renormalization} procedure.\\
	Finally, we stress that, using the decomposition introduced in Remark~\ref{rem_decomp}, the Wick powers $A^k$ can be written as a sum
	\begin{equation}
	\label{field_decomp}
	A^k(f)=\sum_{|P|=k}\binom{k}{P}A^k\left(f_1^{p_1}\odot\cdots\odot f_N^{p_N}\right)=:\sum_{|P|=k}\binom{k}{P}A^P\left(f_1^{p_1},\cdots, f_N^{p_N}\right),
	\end{equation} 
	where $P=(p_1,\ldots,p_N)$ is a multi-index and $\binom{k}{P}=\frac{k!}{\prod_{i=1}^N p_i!}$. The last equality is intended as a definition.
\end{enumerate}

  \begin{figure}
	\begin{center}
		\def\svgwidth{10cm}
		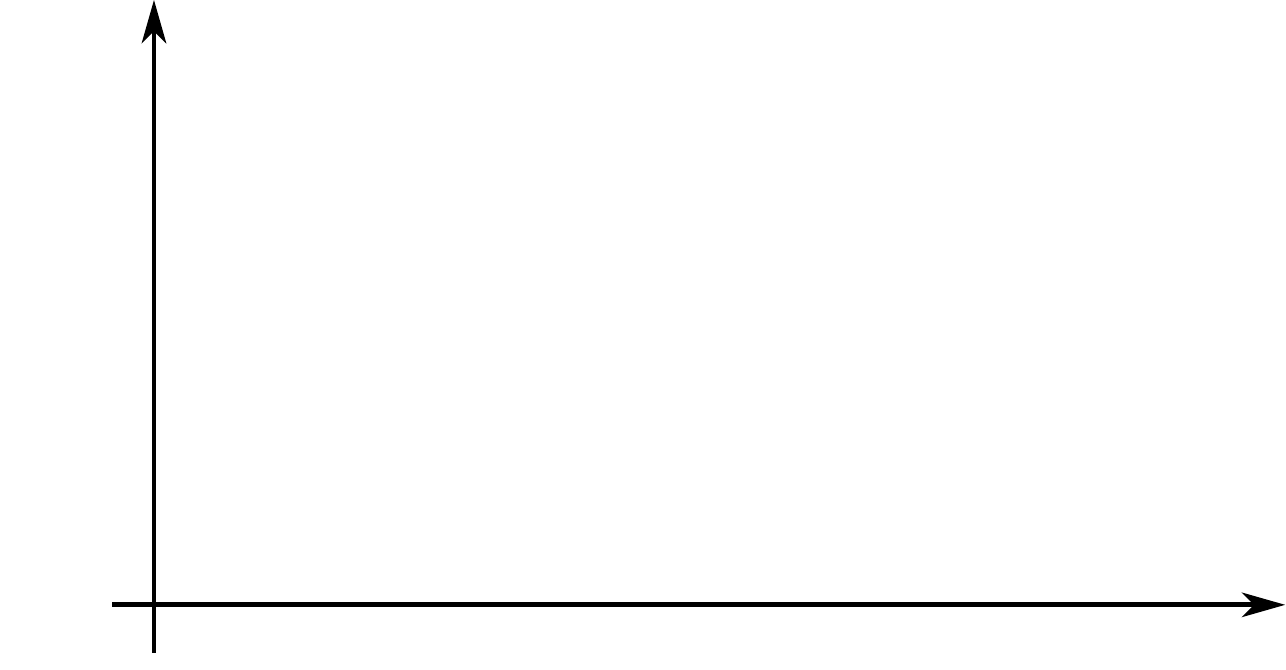
		\caption{We consider compactly supported variations of background fields.}
	\end{center}
\end{figure}
We assume an axiomatic viewpoint stating five axioms regarding Wick powers. These axioms do not determine them, but determine the degrees of freedom due to the different possible choice of renormalization procedures and classify the finite renormalization counterterms.
 Regarding the 5th requirement in the definition below, for clarity we recall the notion of a compactly supported variation from Definition~\ref{def-variation}. If $(M,\mathbf{b})$ is a background geometry, the jointly smooth function $\mathbf{B}= \mathbf{b}_s(x)$ with $s \in \mathbb R^m$ and $x\in M$ defines  a smooth $m$-dimensional ($m\geq 0$ integer) family of smooth compactly supported variations of $\mathbf{b}$ if $\mathbf{b}_s(x) =\mathbf{b}(x)$ for $x \in M$ and  $\mathbf{b}_s(x)=\mathbf{b}_{s'}(x)$ for $s, s' \in \mathbb R^n$ and $x \not \in K$ for a fixed compact $K \subseteq M$ depending on the family.
According to Proposition~\ref{tau}, we can identify each algebra $\mathcal{W}(M,\mathbf{b}_s)$ with $\mathcal{W}(M,\mathbf{b})$ by means of 
the unital $*$-algebra isomorphism 
\begin{equation}\label{taus}
\tau_s \colon \mathcal{W}(M,\mathbf{b}) \to \mathcal{W}(M,\mathbf{b}_s)  \:,
\end{equation}
 which reduces to the identity 
on every spacetime $(N, \mathbf{g}_s|_N)$ if $N \cap J_{(M,\mathbf{g})}^+(K)= \emptyset$.

\begin{defn}[Wick powers for general boson fields]
	\label{def_wick}
	Consider  a net of algebras $\mathcal{W}$ on the category of background
	geometries $\mathfrak{BkgG}$ (resp. $\mathfrak{BkgG}^+$) and a locally-covariant quantum  $V$-field $A$ (Definition~\ref{local_field})  with $A_{(M,\mathbf{b})} \colon \mathscr{D}(VM) \to \mathcal{W}(M,\mathbf{b})$ for every $(M, \mathbf{b}) \in \mathfrak{BkgG}$ (resp. $\mathfrak{BkgG}^+$).\\
	 A class of \textbf{Wick powers} $\{A^k\}$ of $A$, for $k=0,1,2,\dots$ is a
	 family of	 \emph{symmetric} locally-covariant quantum $S^kV$-fields (so that each $k$ defines an assignment  of algebra-valued distribution $A_{(M,\mathbf{b})}^k \colon \mathscr{D}(S^kVM) \to \mathcal{W}(M,\mathbf{b})$ to every 
$(M, \mathbf{b}) \in \mathfrak{BkgG}$ (resp. $\mathfrak{BkgG}^+$) respecting \eqref{Akloccov}) satisfying the following requirements.

	\begin{enumerate}
		\item \textbf{Low powers.} $A^0=\mathbf{1}$, the unit $c$-number field,  $A^1=A$, the $V$-field.
		
		\item \textbf{Scaling.} With respect to the decomposition \eqref{field_decomp}, each component $A^k_P$ of the Wick power $A^k$ is almost homogeneous of degree $\langle P,\mathbf{d}_A\rangle=p_1d_{A_1}+\cdots p_Nd_{A_N}$, with respect to the action of physical scalings $S_\lambda$ in \eqref{S_lambda}; that is, there exists an integer $l\geq 0$ and quantum $k$-tensor fields $B_j$ such that
		\begin{equation*}
		S_\lambda A^P=\lambda^{\langle P,\mathbf{d}_A\rangle} A^P+ \lambda^{\langle P,\mathbf{d}_A\rangle}\sum_{j=1}^{l}\left(\log^j\lambda\right)B_j,
		\end{equation*}
		where each $B_j$ is itself almost homogeneous of degree $\langle P,\mathbf{d}_A\rangle$ and order $l-j$.
		(Every degree is supposed to be independent from the choice of the background geometry).

		\item\textbf{Kinematic completeness.} For any $(M,\mathbf{b})$, an element $a\in\mathcal{W}(M,\mathbf{b})$ satisfies $$[a,A_{(M,\mathbf{b})}(f)]=0\quad  \mbox{for every $f\in\mathscr{D}(VM)$}$$ iff $a=\alpha 1$, with $\alpha\in\mathbb{C}$ and $1$ the unit element of the algebra.

		\item \textbf{Commutator expansion.} Each Wick power $A^k$ also satisfies the following properties\footnote{We recall that there is a factor $k$ hidden in the contraction product.}:			
		\begin{equation}
		\label{comm_exp}
		\left[A^k_{(M,\mathbf{b})}(f),A_{(M,\mathbf{b})}(g)\right]=iA_{(M,\mathbf{b})}^{k-1}(\Delta_{(M,\mathbf{b})}(g)\cdot_1f),\qquad f\in\mathscr{D}(S^kVM),\quad g\in\mathscr{D}(VM)
		\end{equation}
		where $\Delta_{(M,\mathbf{b})}\colon\mathscr{D}(VM)\rightarrow\mathscr{E}(V^*M)$ is a given map.
		\item \textbf{Smoothness.}		If  $(M, \mathbf{b}) \in \mathfrak{BkgG}$  (resp. $\mathfrak{BkgG}^+$), we require that there exist a class of states $\mathcal{S}_{(M,\mathbf{b})}$ on  $\mathcal{W}(M,\mathbf{b})$ such that if $\omega\in\mathcal{S}_{(M,\mathbf{b})}$,  the expectation values $\omega\; \circ\; \tau^{-1}_s\left(A_{(M, \mathbf{b}_s)}^k(f)\right
		)$ (with $f\in\mathscr{D}(S^kVM)$)
		can be written as 
		$$\omega \circ \tau^{-1}_s\left(A^k_{(M, \mathbf{b}_s)}(f)\right) = \int_M \omega_{\mu_1\cdots \mu_k}(s,x) f^{\mu_1\cdots \mu_k}(x) dg(x),$$
		for some jointly smooth kernels
		$$\mathbb R^m \times M \ni (s,x) \mapsto \omega_{\mu_1\cdots \mu_k}(s,x) \in \mathbb R, $$  for every  smooth $m$-parameter family of compactly supported variations $\mathbf{b}_s$ of $\mathbf{b}$ on $M$ and $\tau_s \colon \mathcal{W}(M,\mathbf{b}) \to \mathcal{W}(M,\mathbf{b}_s)$ defined as in \eqref{taus} and every integer $m\geq 0$.
		\item \textbf{Hermiticity.} For all background geometry $(M,\mathbf{b})$ and $f \in \mathscr{D}(V^*M)$, we require that
		$$
		A_{(M,\mathbf{b})}(f) = A_{(M,\mathbf{b})}(f)^*\:
		$$
		where * denotes the corresponding operation in the $*$-algebra $\mathcal{W}$.
		\end{enumerate}
\end{defn}

\begin{rem}
	While the first four axioms, and also the last one, are standard requirements, we would like to comment briefly the smoothness axiom. We require that  any Wick powers has \emph{smooth} expectation value both with respect to $x$, the coordinate on spacetime manifold, and $s$, the parameter that labels the variations of $\mathbf{b}$. The smoothness with respect to $x$ reflect the physical idea that a renormalized observables is smooth since we have removed all singularity in the renormalization procedure. The joint smoothness in $(x,s)$ is a version of the \emph{parametrized microlocal spectrum condition} that was introduced in~\cite[Def.3.5(iv)]{KM16}, as a substitute for the old analyticity condition of Hollands and Wald~\cite{HW1}.
\end{rem}

\begin{rem}
	In components, \emph{i.e.},\ with respect to equation \eqref{field_decomp}, the commutator expansion axiom, for $j\in{1,\ldots, N}$ and a multi-index $P$, becomes 
	\begin{equation}
	\label{commutator_components}
	\begin{gathered}
	\left[A_1^{p_1}\cdots A_N^{p_N} \left(f_1^{p_1},\cdots, f_N^{p_N}\right),A_j(g_j)\right]=\\
	i\sum_{l=1}^{N}A_1^{p_1}\cdots A_l^{p_l-1}\cdots A_N^{p_N}\left(f_1^{p_1},\cdots , (\Delta^{lj}(g_j)\cdot_1f_l^{p_l}), \cdots ,f_N^{p_N}\right).
	\end{gathered}
	\end{equation}
	To show this, fixing a background geometry $(M,\mathbf{b})$, consider  $g_j\in (W^*M)_j$ and $f=\sum_j f_l$ with $f_l\in W_lM$, where we have used the identification introduced in Remark~\ref{rem_identif}, \emph{i.e.},
	\begin{equation*}
	g_j=(0,\ldots,g_j,\ldots,0),\quad 	f_j=(0,\ldots,f_j,\ldots,0).
	\end{equation*}
	 We recall that, by definition,
	\begin{equation*}
		A_1^{p_1}\cdots A_N^{p_N} \left(f_1^{p_1},\cdots, f_N^{p_N}\right):=A^k\left(f_1^{p_1}\odot\cdots\odot f_N^{p_N}\right),
	\end{equation*}
	then, using Proposition~\ref{prop_contrac},
	\begin{flalign*}
	\left[A_1^{p_1}\cdots A_N^{p_N} \left(f_1^{p_1},\cdots, f_N^{p_N}\right),A_j(g_j)\right]&=\left[A^k\left(f_1^{p_1}\odot\cdots\odot f_N^{p_N}\right),A(g_j)\right]\\
		&=iA^{k-1}\left(\sum_{l=1}^N\Delta^{lj}(g_j)\cdot_1\left(f_1^{p_1}\odot\cdots\odot f_N^{p_N}\right)\right)\\
		&=i\sum_{l=1}^NA^{k-1}\left(f_1^{p_1}\odot\cdots\odot \left(\Delta^{lj}(g_j)\cdot_1f_l^{p_l}\right)\odot\cdots \odot f_N^{p_N}\right),
	\end{flalign*}
	which by definition is equal to \eqref{commutator_components}.

\end{rem}

\subsection{General renormalization formula for Wick products of a quantum boson field} If  $\{\tilde{A}^k\}_{k=1,2,\ldots}$ and $\{A^k\}_{k=1,2,\ldots}$ are two families  of Wick powers of the same quantum $V$-field $A$, our task is now to find a formula relating these two pairs of Wick powers relying on the fact that both classes satisfy the above set of general axioms. The following theorem is a generalization of~\cite[Lem.3.3]{KM16}
\begin{thm}
	\label{lemma_ipotesi}
	 Let $\{\tilde{A}^k\}_{k=1,2,\ldots}$ and $\{A^k\}_{k=1,2,\ldots}$ be two families  of Wick powers (Definition~\ref{def_wick}) of the same locally-covariant quantum $V$-field $A$ (Definition~\ref{local_field}) of homogeneous scaling degree $\mathbf{d}_A \in \mathbb R^N$. Then   there exists a family of smooth locally-covariant $c$-number $S^kV$-fields $\{C_k\}_{k=1,2,\ldots}$, where $C_1=0$, such that, for every $k=1,2, \ldots$,
	\begin{equation}
	\label{formula_generic} 
	 \widetilde{A}_{(M, \mathbf{b})}^k(f) = A_{(M, \mathbf{b})}^k(f) + \sum_{l=0}^{k-1} A_{(M, \mathbf{b})}^l\left(C_{k-l}[M, \mathbf{b}]\cdot_{k-l}f\right),
	 \end{equation}
	  where $(M, \mathbf{b}) \in  \mathfrak{BkgG}$ 
	(resp. $\mathfrak{BkgG}^+$) and $f \in \mathscr{D}(S^kV^*M)$. In components equation \eqref{formula_generic} turns out to be
	\begin{equation}\label{formula_generic_comp}
	\begin{split}
	 \widetilde{A}^P_{(M, \mathbf{b})}\left(f_1^{p_1},\ldots f_N^{p_N}\right)= &A^P_{(M, \mathbf{b})}\left(f_1^{p_1},\ldots f_N^{p_N}\right)\\
	 & + \sum_{l=0}^{k-1}\sum_{\substack{|Q|=l\\ q_i\leq p_i}}\left(\prod_{i=1}^N\binom{p_i}{q_i}\right)A^Q_{(M, \mathbf{b})}\left(\left(C_{k-l}^{P-Q}[M, \mathbf{b}]\cdot_{k-l}f^{P-Q}\right)f_1^{q_1},\ldots ,f_N^{q_N}\right)
	\end{split}
	\end{equation}
	where $Q=(q_1,\ldots, q_N)$, $P=(p_1,\ldots,p_N)$ are multi-indices and $C_{k}=\sum_{|Q|=k}C_{k}^{Q}$.\\
	 Finally, for every fixed $M \in \mathfrak{Man}$,
	
	 (i)  the map 
	  $$\Gamma(BM) \ni  \mathbf{b} \mapsto C_k[M,\mathbf{b}] \in   \mathscr{E}(S^kV^*M)$$
	  is a differential operator of locally bounded order. Regarding components of the coefficients $C_{k}^{Q}[M,\mathbf{b}]\in\mathscr{E}(\bigodot_{i=1}^NS^{q_i}W^*_iM)$;

	   (ii) each $C_k^Q[M,\mathbf{b}]$ scales almost homogeneously of degree $\langle Q,\mathbf{d}_A\rangle$ under the physical scaling transformation on $\mathbf{b}$. 
\end{thm}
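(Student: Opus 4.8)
The plan is an induction on $k$, where the inductive hypothesis is the full conclusion of the theorem up to degree $k-1$. The base case $k=1$ is immediate: the \emph{Low powers} axiom forces $\widetilde A^1=A=A^1$, so $C_1=0$. For the inductive step I fix $k\ge 2$ and, with the lower-order $C_{k-l}$ ($l\ge 1$) already in hand, set
\[
D^k_{(M,\mathbf b)}(f):=\widetilde A^k_{(M,\mathbf b)}(f)-A^k_{(M,\mathbf b)}(f)-\sum_{l=1}^{k-1}A^l_{(M,\mathbf b)}(C_{k-l}[M,\mathbf b]\cdot_{k-l}f).
\]
The first goal is to show $D^k_{(M,\mathbf b)}(f)$ is a $c$-number. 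Both $\widetilde A^k$ and $A^k$ satisfy the \emph{Commutator expansion} axiom with the same $\Delta_{(M,\mathbf b)}$ (determined by $A$ alone via the $k=1$ case and \eqref{rem_commutator}), so computing $[D^k_{(M,\mathbf b)}(f),A_{(M,\mathbf b)}(g)]$ with \eqref{comm_exp}, using Proposition~\ref{contrac_associativity} to commute the contraction $\Delta(g)\cdot_1(\cdot)$ past $C_{k-l}\cdot_{k-l}(\cdot)$, and substituting the degree-$(k-1)$ instance of \eqref{formula_generic} applied to the section $\Delta_{(M,\mathbf b)}(g)\cdot_1 f$, makes the commutator vanish for every $g\in\mathscr D(VM)$ after an elementary reindexing of the sum. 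Then \emph{Kinematic completeness} yields $D^k_{(M,\mathbf b)}(f)=\mathbf C_{k,(M,\mathbf b)}(f)\,1$ for a linear functional $\mathbf C_{k,(M,\mathbf b)}$ on $\mathscr D(S^kVM)$ (real, by \emph{Hermiticity}), which is exactly \eqref{formula_generic}. The component identity \eqref{formula_generic_comp} then follows formally by expanding $C_{k-l}\cdot_{k-l}f$ through Proposition~\ref{prop_contrac}(c),(d) and Remark~\ref{rem_decomp}.

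Next I would promote $\mathbf C_{k,(M,\mathbf b)}$ to a smooth section and record the regularity required by Peetre--Slov\'ak. Choosing a state $\omega\in\mathcal S_{(M,\mathbf b)}$ and applying $\omega$ (and, along a compactly supported variation $\mathbf b_s$, the state $\omega\circ\tau_s^{-1}$) to the identity $D^k=\mathbf C_k\,1$ expresses $\mathbf C_{k,(M,\mathbf b_s)}(f)$ as a finite combination of expectation values of $\widetilde A^k$, $A^k$ and of the $A^l$ smeared against $C_{k-l}[M,\mathbf b_s]\cdot_{k-l}f$. By the inductive smoothness of $C_{k-l}$ the latter smearings are jointly smooth in $(s,x)$, and the \emph{Smoothness} axiom presents all these expectation values as integrals of their test sections against kernels jointly smooth in $(s,x)$. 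Contracting out the $C_{k-l}$ yields $\mathbf C_{k,(M,\mathbf b_s)}(f)=\int_M C_k[M,\mathbf b_s](x)\cdot f(x)\,dg_s(x)$ with $(s,x)\mapsto C_k[M,\mathbf b_s](x)$ jointly smooth; in particular $C_k[M,\mathbf b]\in\mathscr E(S^kV^*M)$, and $\mathbf b\mapsto C_k[M,\mathbf b]$ sends smooth families of backgrounds to smooth families of sections in the sense of Definition~\ref{def_regular}.

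It remains to establish locality and deduce (i)--(ii). Locality — that $C_k[M,\mathbf b](x)$ depends only on the germ of $\mathbf b$ at $x$ — follows from local covariance together with the time-slice consequence Proposition~\ref{tau}, exactly along the lines of~\cite[Lem.3.3]{KM16}: if the backgrounds agree near $x$ one compares the two algebras via the isomorphism $\tau$, which restricts to the identity on subalgebras localised away from the causal shadow of the region where the backgrounds differ, and invokes the local covariance of $\widetilde A^k$, $A^k$ and the $C_{k-l}$; the same reasoning applied to morphisms gives \eqref{Cloccov2}, i.e.\ local covariance of $\mathbf C_k$. Once locality is known, $C_k[M,\mathbf b_s]$ coincides with $C_k[M,\mathbf b]$ off the (compact) support of the variation, so the family-smoothness above in fact carries compactly supported variations to compactly supported variations, and the non-linear Peetre--Slov\'ak theorem (Theorem~\ref{thm_peetre_nonlin}) gives that $\mathbf b\mapsto C_k[M,\mathbf b]$ is a differential operator of locally bounded order, proving (i) (extending the definition from globally hyperbolic backgrounds to all of $\Gamma(BM)$ is harmless, since the statement is local and every metric germ sits inside a globally hyperbolic background, on which locality makes the extension well-defined). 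For (ii) I would apply $\sigma_\lambda$, equivalently $S_\lambda$, to \eqref{formula_generic_comp} at $\mathbf b$, obtaining the corresponding relation at $\mathbf b_\lambda$; matching the $c$-number parts through Kinematic completeness and inserting the \emph{Scaling} axiom for $\widetilde A^P,A^P,A^Q$ and the inductive almost-homogeneity of $C_{k-l}$, Lemma~\ref{lemma_product} forces each $C_k^Q[M,\mathbf b]$ to be almost homogeneous of degree $\langle Q,\mathbf d_A\rangle$.

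I expect the locality step to be the main obstacle: upgrading ``$\mathbf b=\mathbf b'$ near $x$'' to ``$C_k[M,\mathbf b](x)=C_k[M,\mathbf b'](x)$'' is the one place where genuine spacetime geometry — causality and the time-slice axiom, as packaged in Proposition~\ref{tau} — enters, rather than formal algebra, and it is precisely what makes the hypotheses of the Peetre--Slov\'ak theorem available. Everything else is either a bookkeeping exercise with the contraction-product identities of Section~\ref{section_geom} or a direct invocation of one of the Wick-power axioms.
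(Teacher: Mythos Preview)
Your proposal is correct and follows essentially the same inductive strategy as the paper: define the difference, kill its commutator with $A(g)$ via Proposition~\ref{contrac_associativity}, invoke Kinematic completeness, then establish smoothness via states, locality, and apply Peetre--Slov\'ak; the scaling argument through Lemma~\ref{lemma_product} is also the paper's.

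One correction on the locality step, which you single out as ``the main obstacle'': the paper does \emph{not} use Proposition~\ref{tau} here, and your proposed use of it is problematic. Proposition~\ref{tau} applies when two backgrounds agree \emph{outside} a compact set; but to show germ-dependence you need to compare backgrounds that agree \emph{near} $x$, and their difference need not be compactly supported. The paper's argument is much simpler: local covariance of $\mathbf C_k$ with respect to the inclusion embedding $\chi\colon U\hookrightarrow M$ of any globally hyperbolic neighbourhood $U\ni x$ gives $C_k[M,\mathbf b](x)=C_k[U,\mathbf b|_U](x)$ directly from~\eqref{Cloccov2}, so $C_k[M,\mathbf b](x)$ depends only on $\mathbf b|_U$ for arbitrarily small $U$, i.e.\ on the germ. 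The time-slice isomorphism $\tau_s$ enters only where you already correctly place it---in the smoothness step, to pull back $A^k_{(M,\mathbf b_s)}$ to a fixed algebra so that a single state $\omega$ can be applied across the family. With that adjustment your argument matches the paper's.
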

\begin{proof}
In the first part of the proof we write $A(f)$ in place of $A_{(M,\mathbf{b})}(f)$ and we adopt similar notations for the other involved fields, for the sake of notational simplicity.
For all $k$, the difference 
\begin{equation*}
\widetilde{A}^k(f)-A^k(f)=D_{k}(f),\qquad f\in\mathscr{D}(S^kVM)
\end{equation*}
defines, by construction, a symmetric  locally-covariant quantum $V^{\otimes k}$-field of order $k$, in particular is self-adjoint. Using Axiom 1 and the commutator expansion \eqref{comm_exp} in Axiom 4, it is easy to show that
\begin{equation} \label{relationD}
\left[D_k(f),A(g)\right]=iD_{k-1}\left(\Delta(g)\cdot_1 f\right).
\end{equation}
	$D_k(f)$ is an element of the algebra $\mathcal{W}(M,\mathbf{b})$ and we go to prove that it can be expanded as a linear combination of elements of the form $A^l$.\\
	We proceed by induction in $k$. The thesis holds for $k=1$ and $C_{1}=0$ since, using again Axiom 1, $D_1(f)=0$ for all $f\in\mathscr{D}(V^*M)$. Suppose now that \eqref{formula_generic}  holds for $k-1$ with respect to  some functions $C_i\colon\Gamma(BM)\rightarrow\mathscr{E}(S^iV^*M)$, $i=1,2,\ldots,k-1$, that satisfy all the desired properties. We intend to establish the validity of the thesis also for $i=k$.
	Consider the Wick polynomial, for $f\in\mathscr{D}(S^kVM)$, 
	\begin{equation*}
	W_k(f) :=\sum_{l=1}^{k-1}A^l(C_{k-l}[M, \mathbf{b}]\cdot_{k-l}f).
	\end{equation*}
	We stress that the sections 	$C_{k-1}[M, \mathbf{b}],C_{k-2}[M, \mathbf{b}],\ldots, C_1[M, \mathbf{b}]$ appearing  in the sum,  by hypotheses are smooth and have all the desired properties stated in the theorem. Writing $C_{k-1}$ in place of $C_{k-1}[M, \mathbf{b}]$, we have:
	\begin{flalign*}
	&\left[D_k(f)-W_k(f), A(g)\right]=iD_{k-1}(\Delta(g)\cdot_1 f)-\sum_{l=1}^{k-1}\left[A^l(C_{k-l}\cdot_{k-l}f),A(g)\right]\\
	&=i\sum_{l=0}^{k-2} A^l\left(C_{k-1-l}\cdot_{k-1-l}\Delta(g)\cdot_1f\right)-\sum_{l=1}^{k-1}iA^{l-1}(\Delta(g)\cdot_1C_{k-l}\cdot_{k-l}f)\\
	&=i\sum_{l=1}^{k-1}A^{l-1}\left(C_{k-l}\cdot_{k-l}\Delta(g)\cdot_1f\right)-A^{l-1}\left(\Delta(g)\cdot_1C_{k-l}\cdot_{k-l}f\right)=0,
	\end{flalign*}
	where we have used Proposition~\ref{contrac_associativity}. Thus, we can conclude that $[D_k(f) - W_k(f), A(g)]= 0$ for any test function $g$.  Due to
	Axiom 3 we must therefore have 
	\begin{equation}
	\label{C_k_costruction}
	D_k(f)-W_k(f)=C_k(f)1,
	\end{equation}
	where $C_k(f)$ is real and must define a locally-covariant $c$-number $S^kV$-field since it is a difference of that type of fields. 

	Next, we will appeal to the Peetre-Slov\'ak theorem to characterize
	the dependence of $C_k(f)$ on the background field $\mathbf{b}$. This
	theorem has two main hypotheses: \emph{locality} and \emph{weak
	regularity}, which we verify by the covariance and smoothness axioms,
	respectively. Let us consider any smooth variation of the background
	field $\mathbf{b}_s$, together with a corresponding family of
	distinguished states $\omega^{(s)} = \omega \circ \tau_s^{-1}$. Then,
	the smoothness axiom implies that the left-hand side of
	\begin{equation*}
	\omega^{(s)}\left(	D_{k,\mathbf{b}_s}(f)-W_{k,\mathbf{b}_s}(f)\right)
		= C_{k,\mathbf{b}_s}(f)\:,
	\end{equation*}
	is a distribution with smooth real kernel (it is real due to the \textit{hermiticity} axiom), meaning that so is
	$C_{k,\mathbf{b}_s}(f)$, with $C_k[M, \mathbf{b}_s] \in
	\mathscr{E}(S^kV^*M)$ denoting its integral kernel for fixed $s$.
	Moreover, the axiom also implies that $C_k[M,\mathbf{b}_s](x)$ is
	jointly smooth in $(x,s)$ and hence weakly regular. On the other hand,
	the covariance requirement implies that the dependence of
	$C_k[M, \mathbf{b}](x)$ at any point $x\in M$ is local, for any fixed
	manifold $M$. Namely, using smaller and smaller neighborhoods $M' \ni
	x$ with $M' \subseteq M$ viewed as background geometries  on their own
	right (when equipped with the restriction of $\mathbf{b}$ to $M'$),
	covariance with respect to the inclusion embeddings $\chi \colon M'
	\ni x' \mapsto x' \in M$ implies that $C_k[M,\mathbf{b}](x)$ depends
	only on the germ of $\mathbf{b}$ at $x$.

	Thus the map $C_k[M,\cdot]$ is local and weakly regular. The Peetre-Slov\'ak theorem implies that   $\Gamma(BM) \ni  \mathbf{b} \mapsto C_k[M,\mathbf{b}] \in\mathscr{E}(S^kV^*M)$
	  is a differential operator of locally bounded order.
	Summing up,  we have proved that
	\begin{equation*}
	 \widetilde{A}^k(f) - A^k(f) = D_k(f)=W_k(f)+C_k(f)1=\sum_{l=0}^{k-1} A^l\left(C_{k-l}\cdot_{k-l}f\right) ,
	\end{equation*}
		where {\em all} coefficients $C_l[M, \mathbf{b}]$ from $l=0$ to $l=k$ have all properties stated in the thesis, but the scaling property which must be  still  established  for $C_k$ only. Choosing as test function $f=f_1^{p_1}\odot\cdots\odot f_N^{p_N}$ and using relation \emph{(c)} and \emph{(d)} from Proposition~\ref{prop_contrac}, we obtain immediately the formula \eqref{formula_generic_comp}.
		
	Proceeding again by induction, thanks to the \emph{scaling} property of $A^P$, $\widetilde{A}^P$ and $(C_{k-l}^Q)_{l=1}^{l=k-1}$, $C_k^Q$ is a linear combination of products of terms with almost homogeneous degree that add up to $\langle Q,\mathbf{d}_A\rangle$. Thus, by Lemma~\ref{lemma_product}, $C_k^Q$ itself has almost homogeneous degree $\langle Q,\mathbf{d}_A\rangle$, and thus
	\begin{equation*}
	S_\lambda C_k^Q 1 = \lambda^{\langle Q,\mathbf{d}_A\rangle} C_k^Q 1 + \lambda^{\langle Q,\mathbf{d}_A\rangle}\sum_{j=1}^{l}\left(\log^j\lambda\right)B_j^Q,
	\end{equation*}
	where $S_\lambda$ is the action of physical scalings on fields here applied to a $c$-number field, with $B_j^Q$ some other quantum fields of almost homogeneous degree $\langle Q,\mathbf{d}_A\rangle$. Using again the kinematic completeness of $A$, we find that $B_{j}^Q=F_j^Q1$ are also $c$-number fields. Now, exploiting the definition of $S_\lambda$ as in \eqref{S_lambda}, we find that $S_\lambda C_k^Q 1 = C_k^Q 1$, and similarly for the $F_j^Q$. Therefore, we find that for every $x\in M$,
	\begin{equation*}
	C_k^Q[M,\mathbf{b}_\lambda](x)=\lambda^{\langle Q,\mathbf{d}_A\rangle} C_k[M,\mathbf{b}](x)+ \lambda^{\langle Q,\mathbf{d}_A\rangle}\sum_{j=1}^{l}\left(\log^j\lambda\right)F_j^Q[M,\mathbf{b}](x),
	\end{equation*}
	is an almost homogeneous element of degree $\langle Q,\mathbf{d}_A\rangle$ of the vector space of maps $\Gamma(BM)\rightarrow\mathscr{E}(\odot_{i=1}^NS^{q_i}W^*_iM)$ under the action $F[M,\mathbf{b}] \mapsto F[M,\mathbf{b}_\lambda]$. The proof is concluded.
\end{proof}

\noindent We have finally obtained a general formula, \eqref{formula_generic},  that classifies all finite renormalizations counterterms  of Wick powers of  a generic locally-covariant   boson vector field $A$, 
where the coefficients $C_k[M,\mathbf{b}]$ depends on the type of vector bundle $VM$ and the nature of background fields $\mathbf{b}$
 of the field $A$. For this reason, in order to study in detail these coefficients, we have to consider physically relevant models.

 \section{Tensor fields and renormalizations of their Wick powers}
 \label{section_tensor_field}
 In this section we consider a class of physically relevant models and we study in detail the renormalization counterterms $C_k$ introduced in the last section. We choose as bundles
 \begin{equation*}
 VM=\bigoplus_{i=1}^N T^{*\otimes k_i}M,\qquad BM= \mathring{S}^2T^*M\oplus\left(\bigoplus_{j=1}^{K}  T^{*\otimes l_j}M\right)
 \end{equation*}
 which means that we are considering as fields an $N$-tuple of tensor fields with different tensor ranks
 \begin{equation}
 \label{ntuple_fields_general}
 A=\left(A_1,\ldots, A_N\right)\qquad A_i\colon\mathscr{D}(T^{\otimes k_i}M)\longrightarrow \mathcal{W}(M,\mathbf{b})
 \end{equation}
 and we will say that $A$ has tensor rank $\mathbf{k}=(k_1,\ldots,k_N)$.
 As background fields we consider the metric $\mathbf{g}$ together with other (covariant) tensor fields $\mathbf{t}_j$ of rank $l_j$
 \begin{equation*}
 \left(\mathbf{g},\mathbf{t}_1,\ldots, \mathbf{t}_K\right),\qquad \mathbf{g}\in\mathscr{E}(\mathring{S}^2T^*M),\; \mathbf{t}_j\in\mathscr{E}(T^{*\otimes l_j}M).
 \end{equation*}
 Regarding \emph{physical scaling}, we assume the most general situation, \emph{i.e.},
\begin{equation*}
A_i\longmapsto \lambda^{d_{A_i}}A_i\qquad (\mathbf{g}, \mathbf{t}_1, \ldots,  \mathbf{t}_K)\longmapsto \left(\lambda^{-2}\mathbf{g},\lambda^{s_1}\mathbf{t}_1, \ldots, \lambda^{s_K}\mathbf{t}_K\right)\qquad \lambda\in\mathbb{R}^+
\end{equation*}
under physical scaling transformation, where $s_j\in\mathbb{R}$ for $j=1,\ldots, K$. We require also another property of the background fields, encoded in the following
\begin{defn}
	\label{def_admissible}
	A background field $\mathbf{t}_j$ is called \textbf{admissible} if its rank $l_j$ and its degree under physical scaling $s_j$ fulfill the following condition
	\begin{equation*}
	l_j+s_j\geq 0.
	\end{equation*}
	If the above relation is an equality, then we also call $\mathbf{t}_j$
	\textbf{marginal}. By convention, let us order the background fields
	such that each $\mathbf{t}_j$ for $j=1,\ldots,K_0 \le K$ is marginal
	and collectively denote them by $\mathbf{z} = (\mathbf{t}_1,\ldots,
	\mathbf{t}_{K_0})$. To emphasize their distinction from other
	background fields, we will also use the notation $\mathbf{z}_j =
	\mathbf{t}_j$.
\end{defn}

\begin{rem} \label{rem_contravar}
	We have chosen all dynamical and background tensor fields to be
	purely covariant, \emph{i.e.},\ to be sections of powers of the
	cotangent bundle $T^*M$. This choice is motivated purely by
	convenience and the desire not to complicate our notation even
	further. Our main results, Theorems~\ref{lemma_ipotesi}
	and~\ref{thm_uniqueness}, hold in easily adapted forms also for
	contravariant or mixed tensors, as well as for tensors of symmetric,
	antisymmetric, or any other symmetry type. One does have to note that,
	in the definition of \emph{admissible} and \emph{marginal} background
	fields (Definition~\ref{def_admissible}), the tensor rank $l_j$ must
	be taken to be the number of covariant tensor indices minus the number
	of contravariant indices of $\mathbf{t}_j$.
\end{rem}

 Before studying the exact form for the renormalization counterterms in this case, we need to recall some results.

\paragraph{Preparatory definitions and results.}\label{section_coordinates_bundle}
In this paragraph, we introduce various local coordinate systems on $BM$ and $J^rBM$ together with the description of a particular class of diffeomorphisms called {\em coordinate scalings}.\\
Let $(x^a)$ be a local coordinate chart on the open set $U\subseteq M$ and let $(x^d,g_{ab},\ldots, (t_j)_{a_1\ldots a_{l_j}}, \ldots)$ be the corresponding adapted local coordinates on $Z\subseteq BM$ where by definition  $Z$ projects onto $U$ and, more strongly, each fiber $BM_x$ is completely included in $Z$ if $x\in U$.

\begin{itemize}
	\item \textbf{Covariant coordinates.} 
	According to Definition~\ref{defadptedchartJ},  the chart $(x^a)$ on $U$  induces corresponding adapted local coordinates on $J^rBM$ called  \emph{covariant coordinates}
	\begin{equation*}
	\left(x^a,g_{ab,A},  (t_j)_{a_1\ldots a_{l_j},A}\right)\quad \text{on the afore-mentioned domain}\; Z^r\subseteq J^rBM,
	\end{equation*}
	where only $n(n+1)/2$ metric components are considered because $g_{ab}$ is a symmetric tensor.
	
	\item \textbf{Contravariant coordinates.} 
	Since Lorentzian metrics are non-degenerate, they admit an inverse denoted, using a standard notation, with $g^{ab}$.  We correspondingly obtain induced coordinates $g^{ab}_{A}$ on jets of the inverse-metric bundle. Using the notation $g^{AB}=g^{a_1b_1}\cdots g^{a_lb_l}$, for $|A|=|B|=l$, we define the following functions:
	\begin{equation*}
	g=\left|\det g_{ab}\right|, \qquad g^{ab,A}=g^{AB}g^{ab}_{B},\qquad (t_j)^{a_1\ldots a_{l_j},A}=g^{AB}(t_j)^{a_1\ldots a_{l_j}}_B,
	\end{equation*}
	where we have chosen fully contravariant coordinates for tensor bundles. We have then obtained the set of local \emph{contravariant coordinates}	
	\begin{equation*}
	\left(x^a,g^{ab,A},(t_j)^{a_1\ldots a_{l_j},A}\right)\quad \text{on}\; Z^r\subseteq J^rBM.
	\end{equation*}
	
	\item \textbf{Rescaled contravariant coordinates.} 
	We can obtain another coordinate set by a suitable rescaling of the previous one: we introduce a factor of the form $g^\alpha$, with $\alpha\in\mathbb{R}$, to rescale the coordinates ($n$ is the dimension of $M$):
	\begin{equation*}
	\left(x^a,g,g^{-\frac{1}{n}}g_{ab},g^{\frac{1}{n}+\frac{1}{n}|A|}g^{ab,A},g^{\frac{l_j}{n}+\frac{s_j}{2n}+\frac{1}{n}|A|}(t_j)^{a_1\ldots a_{l_j},A}\right)\quad \text{on}\; Z^r\subseteq J^rBM.
	\end{equation*}
	It should be noticed that the functions $g$ and $g^{-\frac{1}{n}}g_{ab}$ are not functionally independent due to the identity $g^{-1}=\left|\det g^{ab}\right|$. So, to make an honest the coordinate system, we (implicitly) omit one of the components of $g^{-\frac{1}{n}}g_{ab}$ and replaced by $g$.
	The relevance of the rescaled contravariant coordinates consists of the fact since $s_j$ is the (physical) scaling degree of $\mathbf{t}_j$ these coordinates without the coordinate $g$, 
	$$	\left(x^a,g^{-\frac{1}{n}}g_{ab},g^{\frac{1}{n}+\frac{1}{n}|A|}g^{ab,A},g^{\frac{l_j}{n}+\frac{s_j}{2n}+\frac{1}{n}|A|}(t_j)^{a_1\ldots a_{l_j},A}\right)$$
	are {\em invariant} under physical scaling. 
	
	\item \textbf{Curvature coordinates.} 
	Since we have a Lorentzian metric $\mathbf{g}$, we can always define the Levi-Civita connection $\nabla$ and the Riemann tensor $\mathbf{R}$. By well-known formulas, we can also regroup the second order jet coordinates of the metric into the components of the Christoffel symbols $\Gamma^a_{bc}$ and the components of the fully covariant Riemann tensor $\bar{R}_{abcd}$. An alternative way to regroup the components of the Riemann tensor is into the following fully contravariant tensor $\mathbf{S}$, with components
	\begin{equation*}
	\bar{S}^{abcd}:=g^{aa'}g^{bb'}\bar{R}_{a'\quad b'}^{\;\;\;\;(c\quad d)}.
	\end{equation*}
	We denote by $\Gamma^a_{bc,A}$ the components of the $\partial_A$ coordinate derivatives of $\Gamma^a_{bc}$, by $\bar{S}^{abcd,A}$ the components of the symmetrized contravariant $\nabla^A=\nabla^{(a_1}\cdots\nabla^{a_l)}$ derivatives of $\mathbf{S}$, with $(\bar{t}_j)^{a_1\ldots a_{l_j},A}$ the components of the symmetrized contravariant derivatives of $(t_j)^{a_1\ldots a_{l_j}}$. It is possible to prove~\cite{at-report,at} that
	\begin{equation*}
	\left(x^a, g_{ab},\Gamma^a_{(bc,A)},\bar{S}^{ab(cd,A)},(\bar{t}_j)^{a_1\ldots a_{l_j},A}\right)
	\end{equation*}
	defines a complete coordinate system on $Z^r\subseteq J^rBM$, which we call \emph{curvature coordinates}.
	
	\item \textbf{Rescaled curvature coordinates.}
	Analogously to rescaled contravariant coordinates, we can rescale the curvature coordinates obtaining a new coordinate system 
	\begin{equation*}
	\left(x^a,g,g^{-\frac{1}{n}}g_{ab},\Gamma^a_{(bc,A)},g^{\frac{3}{n}+\frac{1}{n}|A|}\bar{S}^{ab(cd,A)},g^{\frac{l_j}{n}+\frac{s_j}{2n}+\frac{1}{n}|A|}(\bar{t}_j)^{a_1\ldots a_{l_j},A}\right).
	\end{equation*}
	As before, removing $g$ form the set of rescaled curvature coordinates we find a set of coordinates which is fixed under physical scaling (since $s_j$ is the scaling degrees of $\mathbf{t}_j$).

	\item We call a diffeomorphism $M \to M$ a \textbf{coordinate scaling} around of $p\in M$ if, in a neighborhood of $p$
	whose closure is included in the domain  $U\subseteq M$ of local coordinates $(x^d)$ centered at $p$ itself, it acts as
	\begin{equation*}
	x^a\longmapsto \mu^{-1} x^a \quad (a=1,\ldots, n)
	\end{equation*}
	for some $\mu>0$,
	and smoothly extends  to the identity before reaching the boundary of $U$.
	More precisely, defining $t:=-\ln \mu$, the \textbf{class of coordinate scaling} around $p$ is represented by the one-parameter group of diffeomorphisms $\{\phi_t\}_{t\in \mathbb R}$  of the whole  $M$ leaving $p$ fixed generated by the globally defined vector field $V^a = -h x^a \frac{\partial}{\partial x^a}$,
	where $h \in C_0^\infty(M)$ vanishes before reaching the boundary of $U$ and attains the constant value $1$ in a neighborhood of $p$.
	We stress that, unlike physical scaling, these transformation are induced by diffeomorphisms of $M$.
\end{itemize}

\begin{lem} \label{lem:no-neg-weight}
	Consider an admissible background field $\mathbf{t}_j$. Then all its rescaled coordinates have positive or null scaling weight under coordinate scaling. In particular the rescaled coordinates scale as
	\begin{equation*}
	g^{\frac{l_j}{n}+\frac{s_j}{2n}+\frac{1}{n}|A|}(\bar{t}_j)^{a_1\ldots a_{l_j},A}\longmapsto \mu^{l_j+s_j+|A|}g^{\frac{l_j}{n}+\frac{s_j}{2n}+\frac{1}{n}|A|}(\bar{t}_j)^{a_1\ldots a_{l_j},A}
	\end{equation*}
\end{lem}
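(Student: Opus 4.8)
The plan is to compute, for each multi-index $A$, the coordinate-scaling weight of the rescaled coordinate of $\mathbf{t}_j$ indexed by $A$ explicitly, obtain the value $l_j+s_j+|A|$, and then read off non-negativity directly from the admissibility inequality.

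First I would record the elementary scaling weights. A coordinate scaling around $p$ acts, on a neighbourhood of $p$, as the \emph{linear} coordinate transformation $x^a\mapsto\mu^{-1}x^a$; since this change is linear, the tensor transformation law carries no inhomogeneous (Christoffel-type) correction terms, and one reads off at once that a purely covariant tensor coordinate of rank $k$ scales by $\mu^{k}$, a purely contravariant one by $\mu^{-k}$, the inverse metric $g^{ab}$ by $\mu^{-2}$, and $g=|\det g_{ab}|$ by $\mu^{2n}$; equivalently $g^{1/(2n)}$ carries the weight of exactly one covariant index. The same linearity shows that $\partial_c$ applied to tensor components merely adjoins one covariant index of scaling weight, and that $\Gamma^a_{bc}$ --- being algebraically assembled from $g^{ab}$ and $\partial_c g_{ab}$ --- transforms homogeneously like a $(1,2)$-tensor coordinate; hence each covariant derivative $\nabla_c$, and so the symmetrized tower $\nabla^A$, adjoins exactly $|A|$ units of covariant weight.

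Applying this to $(\bar{t}_j)^{a_1\ldots a_{l_j},A}$: it is the fully contravariant tensor obtained from $\mathbf{t}_j$ by taking $|A|$ symmetrized covariant derivatives and raising all indices, so it carries $l_j+|A|$ contravariant indices and scales by $\mu^{-(l_j+|A|)}$. The prefactor $g^{\frac{l_j}{n}+\frac{s_j}{2n}+\frac{1}{n}|A|}$ scales by $\mu^{2n\left(\frac{l_j}{n}+\frac{s_j}{2n}+\frac{1}{n}|A|\right)}=\mu^{2l_j+s_j+2|A|}$. Multiplying, the rescaled curvature coordinate scales by $\mu^{w}$ with
\[
w=(2l_j+s_j+2|A|)-(l_j+|A|)=l_j+s_j+|A|,
\]
which is exactly the displayed transformation law. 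The identical computation --- with $\partial_A$ in place of $\nabla^A$, which transforms the same way for a linear coordinate scaling --- gives the rescaled contravariant coordinates $g^{\frac{l_j}{n}+\frac{s_j}{2n}+\frac{1}{n}|A|}(t_j)^{a_1\ldots a_{l_j},A}$ the same weight $w=l_j+s_j+|A|$. Since $\mathbf{t}_j$ is admissible, $l_j+s_j\ge 0$, and trivially $|A|\ge 0$; hence $w\ge 0$ for every rescaled coordinate of $\mathbf{t}_j$, which is the claim (with $w=0$ precisely when $\mathbf{t}_j$ is marginal and $|A|=0$).

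There is no serious obstacle here; the only point to keep straight is that a coordinate scaling is honestly linear near $p$, so that partial --- and therefore covariant --- derivatives of tensor components transform homogeneously with no correction terms. This is exactly what allows $g^{1/(2n)}$ to absorb one unit of covariant weight and makes the weight $w$ a clean linear function of $l_j$, $s_j$, and $|A|$.
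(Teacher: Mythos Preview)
Your proof is correct and follows essentially the same approach as the paper: compute the coordinate-scaling weight of $(\bar{t}_j)^{a_1\ldots a_{l_j},A}$ as $\mu^{-(l_j+|A|)}$ and of $g$ as $\mu^{2n}$, then combine to get $w=l_j+s_j+|A|$ and invoke admissibility. You give more justification than the paper (which simply asserts the two elementary scaling rules and says the result is immediate), in particular explaining why covariant derivatives behave like partial derivatives under a \emph{linear} coordinate change; this is a welcome clarification but not a different argument.
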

\begin{proof}
	Under coordinate scaling we have the following rescaling
	\begin{equation*}
		(\bar{t}_j)^{a_1\ldots a_{l_j},A}\mapsto \mu^{-l_j-|A|}(\bar{t}_j)^{a_1\ldots a_{l_j},A}\qquad g\mapsto \mu^{2n}g.
	\end{equation*}
	Then the result follows immediately.
\end{proof}

We are finally ready to state and prove our main result, which generalizes Theorem~3.1 of~\cite{KM16}.

\begin{thm}
	\label{thm_uniqueness}
	Let $\{ A^k \}_{k=1,2,\ldots}$ and $\{ \tilde{A}^k \}_{k=1,2,\ldots}$ be two families of Wick powers (Definition~\ref{def_wick}) of the same locally-covariant quantum $V$-field $A$ of homogeneous scaling degree $\mathbf{d}_A \in \mathbb{R}^N$ and tensor rank $\mathbf{k}$ (same as in Theorem~\ref{lemma_ipotesi}),
	where the natural vector bundle $V = \bigoplus_{i=1}^N W_i$ is the $N$-tuple introduced in~\eqref{ntuple_fields_general}. Assume also that all background fields $\mathbf{b}$, sections of the bundle $BM$ (Definition~\ref{def_bkg}), are admissible (Definition~\ref{def_admissible}).
	Recall also from Theorem~\ref{lemma_ipotesi} the renormalization coefficients $C_k$, $k=1,2,\ldots$ (with $C_1 =0$) appearing in \eqref{formula_generic} when comparing two families of Wick powers of $A$. Finally, recall the notation $R_{abcd}$ for the Riemann tensor, $\nabla_a$ for the Levi-Civita connection of $g_{ab}$, and $\epsilon^{a_1\cdots a_n}$ for the associated Levi-Civita tensor.

	Then the following facts hold:
	\begin{enumerate}
		\item[\textbf{(a)}] If $Q=(q_1,\ldots, q_N)$ is a multi-index with $|Q|=k$ such that $\langle Q, \mathbf{d}_A + \mathbf{k} \rangle = \sum_{i=1}^N q_i (d_{A_i} + k_i) < 0$, then the corresponding component $C_k^Q$ of the renormalization coefficient $C_k$ vanishes.
		\item[\textbf{(b)}] If $A$ is locally covariant with respect to the category $\mathfrak{BkgG}$ (Definition~\ref{def_lc}), then for every background geometry $(M,\mathbf{b})$, every $x \in M$ and each $k=1,2, \ldots$, the renormalization coefficients $C_k$ are given by differential operators of globally bounded order
		\begin{multline*}
		C_k[M,\mathbf{b}](x) =C_k\Big(g^{ab}(x),R_{abcd}(x),\ldots,\nabla_{e_1}\cdots\nabla_{e_h}R_{abcd}(x),\ldots\\
		\ldots(t_j)_{a_1\cdots a_{l_j}}(x),\ldots,\nabla_{e_1}\cdots\nabla_{e_r}(t_j)_{a_1\cdots a_{l_j}}(x),\ldots\Big) ,
		\end{multline*}
		where $C_k({\cdots})$ is a tensor field covariantly constructed from its arguments, whose structure is described in more detail below.
		\item[\textbf{(c)}] If $A$ is locally covariant with respect to the category $\mathfrak{BkgG}^+$ (Definition~\ref{def_lc}), then for every background geometry $(M,\mathbf{b})$, every $x \in M$ and each $k=1,2, \ldots$, the renormalization coefficients $C_k$ are given by differential operators of globally bounded order
		\begin{multline*}
		C_k[M,\mathbf{b}](x) =C_k\Big(g^{ab}(x),\varepsilon^{a_1\cdots a_n},R_{abcd}(x),\ldots,\nabla_{e_1}\cdots\nabla_{e_h}R_{abcd}(x),\ldots\\
		\ldots(t_j)_{a_1\cdots a_{l_j}}(x),\ldots,\nabla_{e_1}\cdots\nabla_{e_r}(t_j)_{a_1\cdots a_{l_j}}(x),\ldots\Big) ,
		\end{multline*}
		where $C_k({\cdots})$ is a tensor field covariantly constructed from its arguments, whose structure is described in more detail below.
	\end{enumerate} 

	In both (b) and (c), by \emph{covariantly constructed} we mean that the $C_k$ are \emph{equivariant} functions of their tensorial arguments, at each $x\in M$, in the sense of Lemma~\ref{lem_gen_equiv}. That is, each $C_k({\ldots})$ is a linear combination of finitely many covariantly constructed tensors that are polynomial in $g_{ab}$, $g^{ab}$ and the rest of the tensorial arguments, with scalar coefficients that are smooth functions depending locally (Definition~\ref{def_loc_poly}) on finitely many polynomial scalars covariantly constructed from the tensor fields $\mathbf{z}$, which consist of those background tensors $\mathbf{t}_j$ that are \emph{marginal} according to Definition~\ref{def_admissible}.
 Moreover, the functional form of the $C_k({\cdots})$ does not depend on $(M,\mathbf{b})$.

 Finally, each $C_k$ can be written as $C_k=\sum_{|Q|=k}C_k^Q$ with respect to the multiplet decomposition $V = \bigoplus_{i=1}^N W_i$, where $Q$ is a multi-index and where each $C_k^Q$ is homogeneous degree under physical scalings. More precisely, it scales as
$$
	C_k^Q\mapsto \lambda^{\langle Q, \mathbf{d}_A\rangle}C_k^Q,
$$
 when its arguments are rescaled according to
$$
	(t_j)_{a_1\cdots a_{l_j}}\mapsto \lambda^{s_j}(t_j)_{a_1\cdots a_{l_j}}\:, \quad g^{ab}\mapsto \lambda^2 g^{ab}\:, \quad \varepsilon^{a_1\cdots a_n}\mapsto \lambda^n\varepsilon^{a_1\cdots a_n}\:, \quad R_{abcd}\mapsto\lambda^{-2}R_{abcd}
$$
 (the covariant derivatives are fixed under rescaling). These scaling properties fix the upper bound on the differential and polynomial order of $C_k^Q$.
\end{thm}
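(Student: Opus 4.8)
The plan is to build on Theorem~\ref{lemma_ipotesi}, which already provides, for each $k$, a locally-covariant $c$-number $S^kV$-field $C_k[M,\cdot]$ that is a differential operator of \emph{locally} bounded order and whose components $C_k^Q$ scale \emph{almost} homogeneously of degree $\langle Q,\mathbf{d}_A\rangle$ under physical scaling. The remaining work is to pin down the metric dependence, to upgrade ``locally bounded order'' to ``globally bounded order'' with the claimed polynomial structure and degree bounds (including the vanishing statement (a)), and to reduce the residual freedom to smooth functions of scalar invariants built only from the marginal backgrounds $\mathbf{z}$. First I would use local covariance with respect to $\mathfrak{BkgG}$ (resp.\ $\mathfrak{BkgG}^+$): equation~\eqref{Cloccov2}, applied to the inclusions of small coordinate balls and to arbitrary diffeomorphisms, forces $C_k[M,\mathbf{b}](x)$ to be a diffeomorphism-equivariant tensorial differential operator in $\mathbf{b}=(\mathbf{g},\mathbf{t}_1,\dots,\mathbf{t}_K)$. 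The Thomas Replacement Theorem of Appendix~\ref{section_coord_scal} then replaces all coordinate derivatives of the metric by curvature: $C_k[M,\mathbf{b}](x)$ becomes a smooth function of $g_{ab}(x)$, $g^{ab}(x)$, the symmetrized covariant derivatives $\nabla^A R_{abcd}(x)$ and $\nabla^A(t_j)_{a_1\cdots a_{l_j}}(x)$, with no bare metric derivatives surviving, as in the curvature coordinates. Normalizing $g_{ab}(x)=\eta_{ab}$ at the chosen point leaves a residual action of the Lorentzian orthogonal group $O(1,n-1)$ --- respectively its proper orthochronous subgroup in the oriented case, which is exactly what makes $\varepsilon^{a_1\cdots a_n}$ available as a building block in (c) but not in (b).

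The heart of the proof is the scaling step. I would combine the physical scaling $S_\lambda$ of Theorem~\ref{lemma_ipotesi}(ii) with the coordinate scalings $\phi_t$ introduced before Lemma~\ref{lem:no-neg-weight}, and pass to the rescaled curvature coordinates, whose non-$g$ part is physical-scaling invariant. Under coordinate scaling, Lemma~\ref{lem:no-neg-weight} gives the rescaled coordinates of admissible backgrounds non-negative integer weights $l_j+s_j+|A|$, strictly positive unless $\mathbf{t}_j$ is marginal and $|A|=0$; the rescaled curvature and its covariant derivatives carry strictly positive weights; and $g$ carries weight $2n$. On the other hand, by equivariance $C_k^Q(x)$ carries a fixed, bounded coordinate-scaling weight, and by Theorem~\ref{lemma_ipotesi}(ii) together with Lemma~\ref{lemma_product} a fixed physical-scaling behaviour; comparing the two, after absorbing the required power of $g$, the homogeneous-function results of Appendix~\ref{section_scaling} apply: a smooth function carrying a fixed weight under a grading in which every coordinate has non-negative weight and only finitely many coordinates lie below any given bound is necessarily a polynomial --- of degree bounded in terms of $\langle Q,\mathbf{d}_A+\mathbf{k}\rangle$ --- in the strictly positively-weighted coordinates (curvature and its derivatives, non-marginal backgrounds and their derivatives, and derivatives of marginal backgrounds), with coefficients that are smooth functions of the zero-weighted coordinates, which are precisely the undifferentiated rescaled marginal fields $\mathbf{z}$. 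Moreover the almost-homogeneity logarithms are forced to vanish, since in the rescaled-coordinate representation the physical scaling acts purely through the explicit power of $g$. The same weight count yields (a): when $\langle Q,\mathbf{d}_A+\mathbf{k}\rangle<0$ there is no admissible monomial and $C_k^Q=0$. As the degree bound is uniform in $M$ and $x$, the differential operator is now of globally bounded order, proving the first parts of (b) and (c).

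Finally I would invoke the invariant theory collected in Appendix~\ref{section_equivar}. After the previous step, for each admissible monomial in the positively-weighted arguments, $C_k^Q[M,\mathbf{b}](x)$ is an $O(1,n-1)$- (resp.\ $SO^+(1,n-1)$-) equivariant polynomial-valued function of its tensorial arguments whose scalar coefficients depend smoothly on the marginal backgrounds $\mathbf{z}$. By Lemma~\ref{lem_gen_equiv} the equivariant polynomial part is a linear combination of the standard contractions built from $g_{ab}$, $g^{ab}$, the curvature tensors, the $\mathbf{t}_j$ and their covariant derivatives (together with $\varepsilon^{a_1\cdots a_n}$ in the oriented case), and the scalar coefficients, being smooth invariants of the orthogonal group acting on the $\mathbf{z}$, are --- since the orbit structure of the orthogonal group on tensor representations is genuinely more complicated than for $GL$, so that a global Hilbert-type presentation is unavailable --- only \emph{locally} smooth functions of finitely many polynomial scalar invariants of $\mathbf{z}$, in the precise sense of Definition~\ref{def_loc_poly}. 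Reassembling these pieces, and noting that both the Thomas replacement and the invariant-theoretic normal form produce functional expressions manifestly independent of $(M,\mathbf{b})$, gives the stated form of $C_k$, its decomposition $C_k=\sum_{|Q|=k}C_k^Q$, and the homogeneity $C_k^Q\mapsto\lambda^{\langle Q,\mathbf{d}_A\rangle}C_k^Q$ under the rescaling of the arguments; the degree bounds from the scaling step are exactly the asserted bounds on the differential and polynomial order.

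I expect the scaling step to be the main obstacle: coordinating the two independent $\mathbb{R}^+$-actions, keeping exact track of the weights in the rescaled curvature coordinates, killing the logarithmic terms, and extracting a \emph{uniform} polynomial-degree bound all at once is delicate. On the invariant-theory side, the secondary difficulty is that for the orthogonal group one must settle for \emph{local} smooth dependence on the polynomial invariants rather than a global presentation, which is precisely why Definition~\ref{def_loc_poly} is needed.
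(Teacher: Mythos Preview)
Your proposal is correct and follows essentially the same route as the paper: start from Theorem~\ref{lemma_ipotesi}, apply the Thomas Replacement Theorem (Appendix~\ref{section_coord_scal}) to pass to curvature coordinates, combine the physical and coordinate scalings in the rescaled curvature coordinates (using Lemma~\ref{lem:no-neg-weight} and Appendix~\ref{section_scaling}) to kill the logarithms, force polynomiality in the positively-weighted variables, obtain part~(a) and the global order bound, and finish with the Equivariance Lemma~\ref{lem_gen_equiv}. The only discrepancies are cosmetic: the paper runs physical scaling \emph{before} Thomas replacement and then isolates the coordinate-scaling step separately (the log terms are eliminated by the $\mu\to 0$ Taylor-expansion argument of step~3 rather than directly from the rescaled-coordinate form), and the residual group in the oriented case is $SO(1,n-1)$ rather than the proper orthochronous component you name.
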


\begin{rem} \label{rem_proof_structure}
Before going on to the proof, how it resembles and differs from the
proof of Theorem~3.1 of~\cite{KM16}, which proved a similar result but
only for scalar dynamical and background fields. Generally speaking, the
structure of the two proofs are similar, which are broken down into
roughly the same number of steps, roughly in the same sequence. In both
cases, we start out by knowing that the renormalization coefficients
$C_k[M,\mathbf{b}]$ are differential operators of locally bounded order.
Hence, each $C_k$ is given by a smooth function defined on the jet
bundle $J^rBM$ of the background fields, at least when applied to
sections $\mathbf{b}$ whose jets fall into some open neighborhood in the
jet bundle. The remaining steps gradually fix the structure of the $C_k$
more and more rigidly, while also expanding its domain of definition on
$J^rBM$, ultimately extending it to the entire jet bundle and thus
showing that it is of \emph{globally bounded} differential order. The
structure of $C_k$ is first restricted by appealing to its properties
under physical scaling, using results from
Appendix~\ref{section_scaling}. One immediate difference in the new
proof is the need to keep track of different (both physical and
coordinate) scaling weights for the different components of the $C_k$.
Next, the structure of the $C_k$ is further restricted by its local
covariance, meaning that it commutes with diffeomorphisms. The results
from Appendix~\ref{section_coord_scal}, provide the necessary tools for
that, which essentially consist of a strengthened version of the Thomas
Replacement Theorem reported in~\cite[Prop.2.6]{KM16}. Finally, local
covariance is once again used to fix the final form of the $C_k$, by
using the results of Appendix~\ref{section_equivar}, which essentially
strengthen the classification of equivariant and isotropic tensors
reported in~\cite[Prop.2.7,Lem.2.8]{KM16}. These supporting results
needed to be strengthened, compared to the ones used in~\cite{KM16},
because of the transition from only scalar dynamical and background
fields to tensorial ones. While, the results reported in
Appendices~\ref{section_coord_scal} and~\ref{section_equivar} are not
original, they are somewhat difficult to locate in the existing
literature. Thus, when possible to do so in a reasonably concise and
elementary manner, they are reported with proofs and references to more
specialized literature.
\end{rem}
 
\begin{proof}
	 We already know that, from the Peetre-Slov\'ak theorem (see Section~\ref{section_peetre}), the coefficients $C_k^Q$
	define a differential operator
	\begin{equation*}
	\Gamma(BM) \ni  \mathbf{b} \mapsto C_k^Q[M,\mathbf{b}] \in   \mathscr{E}\left(\bigodot_{i=1}^N S^{q_i}T^{*\otimes k_i}M	\right) \subseteq  \Gamma\left(\bigodot_{i=1}^N S^{q_i}T^{*\otimes k_i}M\right)
	\end{equation*}
	of locally bounded order  as established in Theorem~\ref{lemma_ipotesi}. 
	The rest of the proof is broken down into five steps, which are described in more detail below. 
	
	\paragraph{1. Physical scaling.} We now take advantage from almost homogeneity under physical scaling of the components of the coefficients $C_k^Q$ to find their functional form. Consider a Lorentzian manifold $M$ endowed with a metric $\mathbf{g}_0$, as well as a point $y\in M$ and an open neighborhood of $U$ of $y$ with compact closure. 
	We also assume that $\mathbf{g}_0$ restricted to $U$ is flat.
	Consider also a coordinate system $(x^d)$ on $U$ centered at $y$. These coordinates induce  adapted local coordinates on $Z^r \subseteq J^rBM$, which we write as
	\begin{equation*}
	\left(x^a,g,g_{ab},g^{ab,A},(t_1)^{a_1\ldots a_{l_1},A},\ldots, (t_K)^{a_1\ldots a_{l_K},A}\right).
	\end{equation*}
	Recall that the coordinates $(g,g^{ab})$ are functionally independent up to the identity $\left|\det g_{ab}\right|=g$. 
	We already know that $\mathbf{b} \mapsto C_k^Q[M, \mathbf{b}](x)$ is a differential operator of locally bounded order, thus for 
	$\mathbf{g}_0$, $y$ and $U$ (defined as above), there exists an integer $r\geq 0$ such that $C_k^Q$ is a differential operator on $U$ of local order $r$ when acting on sections of $BM$ close to $\mathbf{b}_0 :=(\mathbf{g}_0,\mathbf{t}_j=0)$\footnote{We stress that the flatness assumption on $\mathbf{g}_0$ is not a strong requirement because the flat metric is only the section with respect to which we consider variations. At the moment we can consider only metrics in a neighbourhood of $\mathbf{g}_0$ but we will gradually enlarge it to the whole set of Lorentzian metrics. A similar argument is also valid for all $\mathbf{t}_j$, which at the moment have to be close to the sections $\mathbf{t}_j=0$.}. In other words, there exists a neighbourhood $Z^r_1 \subseteq Z^r  \subseteq J^rBM$ of $j_y^r \mathbf{b}_0$, projecting onto $U$, and a function 
	\begin{flalign*}
	F_k^Q\colon Z^r_1&\longrightarrow \left(\odot_{i=1}^N S^{q_i}T^{*\otimes k_i}M\right) \\
	\left(x^a, g, g^{ab,A},(t_j)^{a_1\ldots a_{l_j},A}\right)&\longmapsto F_k^Q\left(x^a, g, g^{ab,A},(t_j)^{a_1\ldots a_{l_j},A}\right),
	\end{flalign*}
	such that
	\begin{equation}
	\label{C_k_diff}
	C_k^Q[M,\mathbf{b}](x)=F_k^Q\left(j^r\mathbf{b}(x)\right),
	\end{equation}
	for any section $\mathbf{b}\in\Gamma(BM|_U)$ such that $j^r\mathbf{b}(U)\subseteq Z^r_1$. Without loss of generality, but possibly shrinking the domain of $F_k$, we choose it in such a way that
	\begin{equation*}
	\begindc{\commdiag}[50]
	\obj(0,7)[1]{$Z^r_1\simeq$}
	\obj(4,7)[2]{$U$}
	\obj(6,7)[3]{$\times$}
	\obj(9,7)[4]{$W^r_1$}
	\obj(16,0)[5]{\small$\left(g, g^{ab,A},(t_j)^{a_1\ldots a_{l_j},A}\right)$}
	\obj(4,0)[6]{\small$\left(x^a\right)$}
	\mor{4}{5}{}
	\mor{2}{6}{}
	\enddc.
	\end{equation*}
	At the moment, we are very far from arguing  that  $Z^r_1 = J^rBM$ especially  because, using the Peetre-Slov\'ak theorem we only know  that the order of the differential operator $C_k$ is locally bounded and a finite global bound may not exist. During the proof we will gradually enlarge the domain $Z^r_1$ to eventually cover all of $J^rBM$ while maintaining the identity \eqref{C_k_diff}. The differential order $r$ of $C_k^Q$ may increase in the process, but will remain finite. These extensions  will be labeled  by an increasing index $j$ in  $Z^r_j$. Presently   $j=1$.\\
	Theorem~\ref{lemma_ipotesi} implies that  $C_k^Q$ and hence the function $F_k^Q$ scales almost homogeneously with degree $\langle Q,\mathbf{d}_A\rangle$ under physical scaling of the background fields. Thus, thanks to 
	Lemma~\ref{lem_inf_scaling} and Lemma~\ref{lem_functional_form_1}, there exists an integer $l>0$ and function $B_j$ on $Z^r_1$, for $h=0,\ldots,l$, such that 
	\begin{equation}
	\label{F_k_scaling}
	F_k^Q=g^{-\frac{\langle Q,\mathbf{d}_A\rangle}{2n}}\sum_{h=0}^l\log^h\left(g^{-\frac{1}{2n}}\right)B_h,
	\end{equation}
	where each $B_h$ is invariant under the action of physical scaling. Therefore,  adopting  rescaled coordinates (which are invariant under physical scaling),  $B_h$ cannot depend on $g$ and can  be written as
	\begin{equation*}
	B_h=B_h\left(x^d,g^{-\frac{1}{n}}g_{ab},g^{\frac{1}{n}+\frac{1}{n}|A|}g^{ab,A},\ldots , g^{\frac{l_j}{n}+\frac{s_j}{2n}+\frac{1}{n}|A|}(t_j)^{a_1\ldots a_{l_j},A},\ldots\right).
	\end{equation*}
	We now extend the domain $Z^r_1$ to a larger domain $Z^r_2\subseteq J^rBM$. We define $Z^r_2$ to be the smallest domain invariant under physical scaling and containing $Z^r_1$. That is, we can write it as 
	\begin{equation*}
	\begindc{\commdiag}[50]
	\obj(0,7)[1]{$Z^r_2\simeq$}
	\obj(5,7)[2]{$\mathbb{R}^+$}
	\obj(8,7)[3a]{$\times$}
	\obj(11,7)[4a]{$U$}
	\obj(14,7)[3]{$\times$}
	\obj(17,7)[4]{$W^r_2$}
	\obj(24,1)[5]{}
	\obj(35,0)[7]{\small$\left(g^{-\frac{1}{n}}g_{ab},g^{\frac{1}{n}+\frac{1}{n}|A|}g^{ab,A},g^{\frac{l_j}{n}+\frac{s_j}{2n}+\frac{1}{n}|A|}(t_j)^{a_1\ldots a_{l_j},A}\right)$}
	\obj(5,0)[6]{\small$\left(g\right)$}
	\obj(11,0)[6a]{\small$\left(x^d\right)$}
	\mor{4}{5}{}
	\mor{2}{6}{}
	\mor{4a}{6a}{}
	\enddc.
	\end{equation*}
	Up to now, we know that the identity \eqref{C_k_diff} holds only when the germ of $\mathbf{b}$ at $x\in M$ projects onto one of the jets in the domain $Z^r_1\in J^rBM$, but the function $F_k^Q$, via formula \eqref{F_k_scaling}, has a \emph{unique} extension to $Z^r_2$ that scales almost homogeneously and agrees with $F_k^Q$ on $Z^r_1$. The identity \eqref{C_k_diff} must remain valid also for germs  at $x$ that projects onto $Z^r_2$ since any element of $\mathbf{b}' \in Z^r_2$, using the action of physical scaling, can be brought back to $\mathbf{b} \in Z^r_1$, \emph{i.e.},\ $\mathbf{b}'  = \mathbf{b}_\lambda$ for some $\lambda>0$.
	Since $C_k^Q[M,\mathbf{b}]$ scales almost homogeneously and is already defined on $Z^r_2$, we conclude that it must coincide there with the unique extension of $F_k^Q$.

		\paragraph{2. Diffeomorphism covariance.} We consider now the covariance properties of the coefficient $C_k^Q$ under diffeomorphisms. In the previous paragraph, we fixed a point $y\in M$ and a fixed background geometry $\mathbf{b}_0$. But, since that choice was arbitrary, all the same results are also valid for any other choice of $y'\in M$, open neighborhood $U'\subseteq M$ of $y'$ and background geometry $\mathbf{b}'_0$, so long as $\mathbf{b}'_0=\chi^*\mathbf{b}_0$ on $U'$, where $\chi\colon M\rightarrow M$ is some diffeomorphism such that $\chi(y')=y$. The Peetre-Slov\'ak theorem gives us a differential operator of order $r'$ on a domain $Z'^{r'}_2\subseteq J^{r'}BM$. The diffeomorphism covariance of $C_k^Q$ then implies that the order may be chosen the same, $r=r'$.\\
	We now extend the domain $Z^r_2$ to a larger domain $Z^r_3 \subseteq J^rBM$. We define $Z^r_3$ to be the smallest domain invariant under $\mathrm{Diff}(M)$ and containing $Z^r_2$.
	Since the coefficient $C_k^Q$ is $\mathrm{Diff}(M)$-covariant, the function $F_k^Q$ is itself $\mathrm{Diff}(M)$-covariant on $Z^r_3$. The case of orientation preserving diffeomorphisms $\mathrm{Diff}^+(M)$ is strictly analogous.\\  Now we can use  
	Thomas replacement theorem (Theorem~\ref{thomas_rep_thm})  in order to eliminate the dependence of $F_k^Q$ on some of the coordinates on $Z^r_3$. We  apply Theorem~\ref{thomas_rep_thm} separately to the various functions $B_h$  appearing   in \eqref{F_k_scaling}, obtaining
	\begin{multline*}
	g^{-\frac{\langle Q,\mathbf{d}_A\rangle}{2n}}B_h\left(x^a,g^{-\frac{1}{n}}g_{ab},g^{\frac{1}{n}+\frac{1}{n}|A|}g^{ab,A}, g^{\frac{l_j}{n}+\frac{s_j}{2n}+\frac{1}{n}|A|}(t_j)^{a_1\ldots a_{l_j},A}\right)\\
	=g^{-\frac{\langle Q,\mathbf{d}_A\rangle}{2n}}G_h\left(g^{-\frac{1}{n}}g_{ab},g^{\frac{3}{n}+\frac{1}{n}|A|}\bar{S}^{ab(cd,A)}, g^{\frac{l_j}{n}+\frac{s_j}{2n}+\frac{1}{n}|A|}(\bar{t}_j)^{a_1\ldots a_{l_j},A}\right),
	\end{multline*}
	where each $g^{-\frac{\langle Q,\mathbf{d}_A\rangle}{2n}}G_j$ is equivariant under the action of $GL(n)$ (respectively $GL^+(n)$). In particular, $G_j$ does not depend on the coordinates $(x^a)$ and $(\Gamma^a_{(bc,A)})$.
	Since $Z^r_3$ is $\mathrm{Diff}(M)_U$-invariant  it has the structure:
	\begin{equation*}
	\begindc{\commdiag}[50]
	\obj(0,8)[1]{$Z^r_3\simeq$}
	\obj(5,8)[2]{$U$}
	\obj(8,8)[3]{$\times$}
	\obj(11,8)[4]{$L_n$}
	\obj(14,8)[5]{$\times$}
	\obj(17,8)[6]{$\mathbb{R}^\gamma$}
	\obj(20,8)[7]{$\times$}
	\obj(23,8)[8]{$W_3^r$}
	\obj(5,0)[9]{\small$\left(x^a\right)$}
	\obj(11,0)[10]{\small$\left(g_{ab}\right)$}
	\obj(17,0)[11]{\small$\qquad\left(\Gamma^a_{(bc,A)}\right)$}
	\obj(30,1)[12]{}
	\obj(43,0)[13]{\small$\left(g^{\frac{3}{n}+\frac{1}{n}|B|}\bar{S}^{ab(cd,B)}, g^{\frac{l_j}{n}+\frac{s_j}{2n}+\frac{1}{n}|A|}(\bar{t}_j)^{a_1\ldots a_{l_j},A}\right)$}
	\mor{2}{9}{}
	\mor{4}{10}{}
	\mor{6}{11}{}
	\mor{8}{12}{}
	\enddc.
	\end{equation*}
	The second factor describes the metrics at a fixed point $p \in U$ and coincides with the full set $L_n$  of non-degenerate bilinear forms on $\mathbb{R}^n$ with Lorentzian signature. This is 
	because 	the fiber action of the subgroup of  $\mathrm{Diff}(M)$ which leaves $p$ fixed is the action  of the whole $GL(n)$ which, in turn, acts transitively on $L_n$. 
	$W^r_3$ contains at least the point with all components $g^{\frac{3}{n}+\frac{1}{n}|B|}\bar{S}^{ab(cd,B)}$ and $ g^{\frac{l_j}{n}+\frac{s_j}{2n}+\frac{1}{n}|A|}(\bar{t}_j)^{a_1\ldots a_{l_j},A}$ vanishing (in particular because $\mathbf{g}_0$ is flat on $U$).	
	$W^r_3$  is invariant under the said natural action of the whole $GL(n)$. The same argument 
	applies  to $\mathrm{Diff}^+(M)$ and $GL(n)^+$.

	\paragraph{3. Covariance under coordinate scaling.}
	Now we use the equivariance of the function $F_k^Q$ under the action of a subgroup of $GL(n)$ (respectively $GL^+(n)$), the subgroup of coordinate scaling.%
		\footnote{This part of the proof is analogous to part~\emph{4.}\ of
		the proof of Theorem~3.1 in~\cite{KM16}. Unfortunately, that earlier
		argument contained an error when eliminating logarithmic terms from
		$F_k$. This error has been corrected in the current argument, which
		should also be considered retroactively inserted into the proof
		given in~\cite{KM16}.} %
	We can rewrite the set of coordinates over $L_n \times W_3^r$ (the remaining coordinates  $x^a$
	and  $\Gamma^a_{(bc,A)}$ of $Z_3^r$
	do not appear in the explicit form of $F_k^Q$ as already established)
	as
	\begin{equation*}
	\left(g, g^{-\frac{1}{n}}g_{ab},z^j,q^i\right)
	\end{equation*}
	Here the coordinates are grouped together along with the following idea: $(g^{-\frac{1}{n}}g_{ab},z^j)$ have weight $0$ under coordinate scaling (\emph{i.e.},\ $z^j$, $j=1,\ldots,m_z$, are precisely the rescaled components of those undifferentiated coordinates of the background fields $\mathbf{t}_i$ satisfying $l_i+s_i=0$, or precisely the components of the marginal background fields $\mathbf{z}$, Definition~\ref{def_admissible}), 
	$g$ transforms as $g\to \mu^{2n}g$
	and all remaining coordinates, here denoted by  $q^i$, $i=1,\ldots,m_q$,  have positive weight  ($d_i>0$) under coordinate scalings.  
	There are no coordinates with negative weight (Lemma~\ref{lem:no-neg-weight}).	\\
	Lets recall that $F_k^Q$ is a $(\bigodot_{i=1}^N S^{q_i}T^{*\otimes k_i}M)$-valued function and that the vector $\mathbf{k}=(k_1,\ldots, k_N)$ is constructed with the tensor ranks $k_i$. Then the general diffeomorphism equivariance of the function $F_k^Q$ specialized  to coordinate scalings (centered at some base point $(x^a)$, which could be arbitrary within the domain of definition of $F^Q_k$), implies the identity
	\begin{equation}
	\begin{split}
	F_k^Q\left(g,g^{-\frac{1}{n}}g_{ab},z^j,q^i\right)
	&= \mu^{-\langle Q,\mathbf{k}\rangle}
		F_k^Q\left(\mu^{2n}g,g^{-\frac{1}{n}}g_{ab},z^j,\mu^{d_i}q^i\right)\\
	&= g^{-\frac{\langle Q,\mathbf{d}_A\rangle}{2n}}
		\sum_{h=0}^{l-1} \mu^{-\langle Q,\mathbf{d}_A+\mathbf{k}\rangle}
		\log^h\left(\mu^{-1}g^{-\frac{1}{2n}}\right)G_h\left(g^{-\frac{1}{n}}g_{ab},z^j,\mu^{d_i}q^i\right)
	\end{split}
	\end{equation}
	for any point in $Z^r_3$
	and any $\mu>0$. As we mentioned in the previous part, the limit $(g^{-\frac{1}{n}}g_{ab},z^j,0)$ of the argument of the functions $G_h$ as $\mu\rightarrow 0$, belongs to the domain of the function $G_h$, which is smooth there. Therefore we have the Taylor expansions
	\begin{equation*}
	G_h\left(g^{-\frac{1}{n}}g_{ab},z^j,q^i\right)
	= \sum_{|I|<N_q} G_{h,I}\left(g^{-\frac{1}{n}}g_{ab},z^j\right)q^I+O\left(q^{N_q}\right),
	\end{equation*}
	around $(g, g^{-\frac{1}{n}}g_{ab},z^j,0)$,
	where $I=i_1\cdots i_{m_q}$ is a multi-index with respect to the coordinates $(q^i)$, the coefficients $G_{h,I}$ are smooth, and $N_q>0$ is an integer such that
	\begin{equation*}
	\langle d,I\rangle=\sum_{j=1}^{m_q} d_ji_j>\langle Q,\mathbf{d}_A+\mathbf{k}\rangle \quad \text{for all $I$ such that $|I|\geq N_q$.}
	\end{equation*}
	This choice guarantees that each error term $O(q^{N_q})$ is mapped to $O(\mu^{\langle Q,\mathbf{d}_A+\mathbf{k}\rangle+1})$ under the substitution $q^i \mapsto \mu^{d_i} q^i$ as $\mu\rightarrow 0$. Thus we obtain
	\begin{multline}
	\label{coord_scaling_1}
	F_k^Q\left(g,g^{-\frac{1}{n}} g_{ab}, z^j, q^i\right)
	= g^{-\frac{\langle Q,\mathbf{d}_A\rangle}{2n}} \sum_{h=0}^{l-1}
		\mu^{-\langle Q, \mathbf{d}_A+\mathbf{k}\rangle} \log^h\left(\mu^{-1} g^{-\frac{1}{2n}}\right)
		G_h\left(g^{-\frac{1}{n}}g_{ab},z^j, \mu^{d_i} q^i\right) \\
	= g^{-\frac{\langle Q,\mathbf{d}_A\rangle}{2n}} \sum_{h=0}^{l-1} \sum_{|I|<N_q}
		\mu^{\langle d, I\rangle -\langle Q, \mathbf{d}_A+\mathbf{k}\rangle} \log^h\left(\mu^{-1} g^{-\frac{1}{2n}}\right)
		G_{h,I}\left(g^{-\frac{1}{n}}g_{ab},z^j\right)q^I \\
		+\mu^{-\langle Q,\mathbf{d}_A+\mathbf{k}\rangle} O\left(\mu^{\langle Q,\mathbf{d}_A+\mathbf{k}\rangle+1}\right) .
	\end{multline}
	Now, if we take the limit $\mu\rightarrow 0$, the left-hand side of~\eqref{coord_scaling_1} does not change, being independent of $\mu$, and in particular remains bounded. Hence, for equality to hold, any term on the right-hand side of~\eqref{coord_scaling_1} that independently goes to $\oo$ as $\mu\to 0$ must vanish. That is, the coefficient of each $\mu^p \log^h \mu$ term with $p<0$ or $p=0,h>0$ must be zero. Actually taking the $\mu \to 0$ limit on the right-hand side of~\eqref{coord_scaling_1} we obtain the identity
	\begin{equation}
	\label{F_k_poly}
	F_k^Q\left(g,g^{-\frac{1}{n}} g_{ab}, z^j, q^i\right)
	=\sum_{\langle d,I\rangle=\langle Q,\mathbf{d}_A+\mathbf{k}\rangle}g^{-\frac{\langle Q,\mathbf{d}_A\rangle}{2n}} G_{0,I}\left(g^{-\frac{1}{n}}g_{ab},z^j\right)q^I.
	\end{equation}
	All terms $\langle d, I\rangle > \langle Q, \mathbf{d}_A+\mathbf{k}\rangle$ were set to zero by the limit, which by consistency means that they had zero coefficients to begin with.
	Notice that this identity implies that the function $F_k^Q$ scales \emph{homogeneously} with degree $\langle Q,\mathbf{d}_A\rangle$ (that is, it has almost homogeneous order zero).
	This sum could conceivably be empty, if it happens that $\langle Q, \mathbf{d}_A+\mathbf{k}\rangle < 0$ (recall that $d_i > 0$), which can only happen if some of the combinations $d_{A_i}+k_i < 0$. In that case, $F_k^Q = 0$ and the corresponding component $C_k^Q$ of the renormalization coefficient $C_k$ vanishes, which proves part (a) of the theorem.

	We can now enlarge again the domain of the function $F_k^Q$ along the fibers, where the identity \eqref{C_k_diff} holds, from $Z^r_3$ to $Z^r_4\subseteq J^rBM$. The new domain is isomorphic to
	\begin{equation*}
	\begindc{\commdiag}[50]
	\obj(0,8)[1]{$Z^r_4\simeq$}
	\obj(5,8)[2]{$U$}
	\obj(8,8)[3]{$\times$}
	\obj(11,8)[4]{$L_n$}
	\obj(14,8)[5]{$\times$}
	\obj(17,8)[8]{$W_4$}
	\obj(20,8)[9]{$\times$}
	\obj(23,8)[10]{$\mathbb{R}^\gamma$}
	\obj(26,8)[11]{$\times$}
	\obj(29,8)[19]{$\mathbb{R}^{m_q}$}
	\obj(5,0)[12]{\small$\left(x^a\right)$}
	\obj(11,0)[13]{\small$\;\;\left(g^{-\frac{1}{n}}g_{ab}\right)$}
	\obj(17,0)[15]{\small$\;\;\left(z^j\right)$}
	\obj(23,0)[16]{\small$\quad\left(\Gamma^a_{(bc,A)}\right)$}
	\obj(36,1)[17]{}
	\obj(57,0)[18]{\small$\left(g^{\frac{3}{n}+\frac{1}{n}|A|}\bar{S}^{ab(cd,A)},g^{\frac{l_j}{n}+\frac{s_j}{2n}+\frac{1}{n}|A|}(\bar{t}_j)^{a_1\ldots a_{l_j},A}\right)_{l_j+s_j+2|A|>0}$}
	\mor{2}{12}{}
	\mor{4}{13}{}
	\mor{8}{15}{}
	\mor{10}{16}{}
	\mor{19}{17}{}
	\enddc.
	\end{equation*}
	The function $F_k^Q$ extends uniquely to $Z^r_4$ as a covariant function under coordinate scaling. Essentially we have enlarged the factor $W^r_3$ to $W_4 \times \mathbb{R}^{m_q}$. We can do that because all the $(q^i)$ coordinates have positive weight under coordinate scaling, so that their domain can be extended to all of $\mathbb{R}^{m_q}$. The range of the $(z^j)$ coordinates is limited to $W_4 \subset \mathbb{R}^{m_z}$ because these coordinates are invariant under coordinate scaling. 
	Note that the dependence of $F_k^Q$ on the $\mathbb{R}^\delta$ factor in $Z^r_4$ is polynomial and remember that $F_k^Q$ does not depend on the factor $U\times \mathbb{R}^\gamma$ (see previous part).
	
	\paragraph{4. Global definition.}
	
	It is now the moment to expand the domain $Z_4^r$ to all $J^rBM$, for an appropriate choice of $r$. In \eqref{F_k_poly}, a generic $q^I$ is of the form
	\begin{equation*}
		\prod_{|A|,|B|}\left(\bar S^{ab(cd,A)}\right)^{p_{S,|A|}}\left((\bar{t}_j)^{a_1\cdots a_{l_j},B}\right)^{p_{j,|B|}},
	\end{equation*}
	where all the $p$-exponents are non-negative integer numbers and $p_{j,0}=0$ if $l_j+s_j=0$. The constraint $\langle d,I\rangle=\langle Q,\mathbf{d}_A+\mathbf{k}\rangle$ in \eqref{F_k_poly} can be written explicitly as
	\begin{equation*}
	\langle Q,\mathbf{d}_A+\mathbf{k}\rangle=\sum_{|A|,|B|}(2+|A|)p_{S,|A|}+(s_j+l_j+|B|)p_{j,|B|}.
	\end{equation*}
	By the admissibility of the background fields (Definition~\ref{def_admissible}), we have $s_j+l_j\ge 0$. Hence, the coefficients of the $p$-exponents are non-negative and grow linearly with $|A|$ and $|B|$. Thus, there exists a bound on the maximum values of $|A|,|B|$ with non-zero $p$-exponents. Let $r_{k}$ be the maximum number of derivatives of the curvature or background tensors for which the $p$-exponents are non-zero. Note that $r_k$ depends only on the structure of the bundle $BM$ and $k$, and not on the chosen domain $Z^r_4$. Then we can set $r=r_{k}$ in all the previous parts of the proof, \emph{i.e.},\ we end up with a domain
	\begin{equation*}
	Z_4^{r_{k}}\subseteq J^{r_{k}}BM,\qquad Z_4^{r_{k}}=U\times L_n \times W_4\times \mathbb{R}^\delta.
	\end{equation*}
	We can now extend one last time the domain $Z_4^{r_{k}}$ keeping the order of  $F_k$ globally bounded. The factor $L_n$ is already maximal since it contains all Lorentzian metrics. At the beginning of the proof we chose as the initial domain $Z_1^r$ a neighbourhood of the point $j^r_y(\mathbf{g}_0,\mathbf{t}_j=0) \in J^rBM$. Recall that we later split the coordinates on $J^rBM$ into two groups, the $q$-coordinates, identified by positive scaling weights $(s_j+l_j>0)$, and the $z$-coordinates, identified by zero scaling weights $(s_j+l_j=0)$, the components of the marginal tensor fields $\mathbf{z}$ (Definition~\ref{def_admissible}). What was essential for the subsequent arguments was that, for each allowed value of the $z$-coordinates, $(z,q=0)$ was also contained in $Z^r_1$, because $q\circ j^r_y(\mathbf{g}_0, \mathbf{t}_j=0) = 0$. However, the condition $\mathbf{z}(y) = z\circ j^r_y(\mathbf{g}_0, \mathbf{t}_j=0) = 0$ did not play a significant role. Thus, the entire proof would work without any changes had we chosen different background fields $\mathbf{t}_j$ such that still $q\circ j^r_y(\mathbf{g}_0, \mathbf{t}_j) = 0$, but $\mathbf{z}(y) = z\circ j^r_y(\mathbf{g}_0, \mathbf{t}_j)$ assuming an arbitrary value. Then, having a priori fixed $r=r_k$, the functions $F_k$ on different $Z^r_1$ domains would necessarily agree on overlaps (since they are merely local expressions of the globally defined differential operator $C_k$) and the union of all the $Z^r_1$ domains would cover arbitrary values of the $z$-coordinates. Thus, having already performed the extension of the domain into the $q$-coordinates, we can set $W^4=\mathbb{R}^{m_q}$ in $Z^r_4$. In other words we can set	
	\begin{equation}
	Z^r_4 = Z_4^{r_{k}}=\pi^{-1}\left(U\right)\label{projU2}
	\end{equation}
	for some open neighborhood $U \subset M$ of $y\in M$, where $\pi\colon J^{r_{k}}BM \rightarrow M$.

	The union of all those open sets $U$, when $y$ varies in $M$, completely covers $M$. Thus, the corresponding domains $Z^r_4$ completely cover $J^{r_k}BM$. Thus, the globally defined differential operator $C_k$ is of globally bounded order at most $r_k$ and its components $C^Q_k \colon J^{r_k}BM \to \bigotimes_{i=1}^N S^{q_i} T^{*\otimes k_i}M$ have the form~\eqref{C_k_diff} when restricted to a domain of the form $Z^{r_k}_4$ with the functions $F^Q_k$ satisfying~\eqref{F_k_poly}.

		\paragraph{5. $GL(n)$-equivariance.}
	In this last point, we intend to give a precise form of the function $F_k^Q$ exploiting their $GL(n)$-equivariance. From the previous discussion we know that the function $F_k^Q$ satisfying \eqref{C_k_diff}, is defined on the domain $Z_4^{r_k}\simeq U\times \mathbb{R}^\gamma\times Z_4$, but it \emph{depends} only on the coordinates corresponding to the factor $Z_4=L_n\times \mathbb{R}^{m_z}\times \mathbb{R}^{m_q}$. We also know the following:
	\begin{enumerate}
		\item the dependence is \emph{polynomial} with respect to the standard coordinates on the $\mathbb{R}^{m_q}$ factor;
		\item the coefficients $g^{-\frac{\langle Q,\mathbf{d}_A\rangle}{2n}} G_{0,I}(g^{-\frac{1}{n}}g_{ab},z^j)$ of these polynomials depend only on $L_n\times \mathbb{R}^{m_z}$.
	\end{enumerate}
	Each factor in $Z_4$ carries a tensor density representation of $GL(n)$ (resp.\ $GL^+(n)$) arising from the action of the subgroup of $\mathrm{Diff}(M)$ (resp.\ $\mathrm{Diff}^+(M)$) which leaves fixed a given point of $U$. More precisely, if $u\in GL(n)$:
	\begin{enumerate}
		\item on $L_n$ the action is given by $(u, \mathbf{g})\mapsto \left|\det u\right|^{-\frac{2}{n}}u^{\otimes 2}\mathbf{g}$;
		\item on $\mathbb{R}^{m_z}$, which which corresponds to the rescaled components $g^{\frac{l_j}{n}+\frac{s_j}{2n}}(\bar{t}_j)^{a_1\cdots a_{l_j}}$, $j=1,\ldots, K_0$, of the marginal background tensor fields $\mathbf{z} = (\mathbf{t}_1,\ldots \mathbf{t}_{K_0})$ (Definition~\ref{def_admissible}), the action is given by $(u,\mathbf{t}_j)\mapsto \left|\det u\right|^{\frac{l_j}{n}+\frac{s_j}{2n}}u^{\otimes l_j}\mathbf{t}_j$;
		\item on $\mathbb{R}^{m_q}$, which corresponds to the rescaled components $g^{\frac{l_j}{n}+\frac{s_j}{2n}+\frac{1}{n}|A|} (\bar{t}_j)^{a_1\cdots a_{l_j},A}$, for $l_j+s_j+2|A|>0$, and $g^{\frac{3}{n}+\frac{1}{n}|A|} \bar{S}^{ab(cd,A)}$, for $|A|\ge 0$, also decomposes into a direct sum of corresponding tensor density representations
		\begin{equation*}
		\mathbb{R}^{m_q}=\bigoplus_\alpha \mathcal{R}_\alpha,
		\end{equation*}
		where $\mathcal{R}_\alpha$ carries a tensor density representation of rank $n_\alpha$;
		\item the fibers of the bundle where the functions $F^Q_k\colon Z_4 \to \bigotimes_{i=1}^N S^{q_i} T^{*\otimes k_i} M$ take their values also carry a representation of $GL(n)$ (resp.~$GL^+(n)$), which obviously decomposes into a direct sum of tensor density representations, which we will denote by
		\begin{equation*}
			\mathcal{T} = \bigoplus_\beta \mathcal{T}_\beta ,
		\end{equation*}
		where $\mathcal{T}_\beta$ has rank $n_\beta$.
	\end{enumerate}

	Note also that, the homogeneous polynomials $\mathcal{P}^\delta$ of degree $\delta$ on $\mathbb{R}^{m_q}$ carry the representation
	\begin{equation*}
	\left(uP\right)(\rho):=P\left(u^{-1}\rho\right),\quad \mbox{for any } u\in GL(n), \;P\in\mathcal{P}^{\delta},\; \rho\in\mathbb{R}^{m_q} .
	\end{equation*}
	This representation on polynomials is made up of direct sums of symmetric tensor powers of $\mathbb{R}^{m_q}$ and hence itself also decomposes into a direct sum of tensor density representations
	\begin{equation*}
		\mathcal{P}^\delta = \bigoplus_\gamma \mathcal{S}^\delta_\gamma ,
	\end{equation*}
	where $\mathcal{S}^\delta_\gamma$ has rank $n^\delta_\gamma$.

	From the above remarks, it is easy to see that the equivariance of the functions $F^Q_k$ (see Proposition~\ref{prop_equiv_inv} for the relation between invariant and equivariant functions) and the linear independence of the monomials $q^I$ on $\mathbb{R}^{m_q}$ implies that the polynomial coefficients in~\eqref{F_k_poly} are themselves smooth equivariant maps
	\begin{equation}
		G_0 \colon L_n \times \mathbb{R}^{m_z} \to \mathcal{T} \otimes \mathcal{P}^{\langle Q,\mathbf{k}\rangle} = \bigoplus_{\beta,\gamma} \mathcal{T}_\beta \otimes \mathcal{S}^{\langle Q, \mathbf{k}\rangle}_\gamma .
	\end{equation}
	See~\cite{KM16}, point \emph{5.} of the proof of the main Theorem~3.1, for a more detailed elaboration of this argument.

Now, since the components $(G_0)_{\beta,\gamma}$ are \emph{equivariant
tensor densities} (Definition~\ref{def_equivar_tens}), we can invoke the
classification Lemma~\ref{lem_gen_equiv} to conclude that each
$(G_0)_{\beta,\gamma}$ is, up to an overall power of $g = \left|\det
g_{ab}\right|$, a tensor of appropriate rank built covariantly out of
$g_{ab}$, $g^{ab}$, $\varepsilon^{a_1\cdots
a_n}(g)$ and the tensor components of $\mathbf{z}$,
$t_j^{a_1\cdots a_{l_j}}$, for $j=1,\ldots,K_0$. To be more
precise, each $(G_0)_{\beta,\gamma}$ is a finite linear combination of
$\mathcal{T}_\beta \otimes \mathcal{S}^{\langle Q,
\mathbf{k}\rangle}_{\gamma}$ terms, each built from a tensor product of
finitely many aforementioned ingredients (possibly repeating) followed
by any number of index contractions or permutations, with coefficients
being \emph{smooth} functions of all possible polynomial scalar
invariants covariantly constructed from the same ingredients,
\begin{equation*}
	(G_0)_{\beta,\gamma}
	= g^{\alpha_{\beta,\gamma}} \sum_m
		c^m_{\beta,\gamma} (g_{ab},
			\varepsilon^{a_1\cdots a_n}(g), \ldots ,
			t_j^{a_1\cdots a_{l_j}}, \ldots)
		P^m_{\beta,\gamma} (g_{ab},
			\varepsilon^{a_1\cdots a_n}(g), \ldots ,
			t_j^{a_1\cdots a_{l_j}}, \ldots) \:.
\end{equation*}
Lemma~\ref{lem_gen_equiv} also tells us that, in each case, there are
only finitely many algebraically independent polynomial scalar
invariants that the coefficients $c^m_{\beta,\gamma}$ can depend on and
there are only finitely many tensor valued polynomials
$P^m_{\beta,\gamma}$ that are linearly independent up to a redefinition
of the $c^m_{\beta,\gamma}$ coefficients. The dependence on
$\varepsilon(g)$ is allowed only in the $GL(n)^+$ case. Also,
note that for the contractions $G_{0,I} q^I$ to remain equivariant, all
the explicit appearances of powers of $g = \left|\det g_{ab}\right|$
must cancel.

Finally, combining the above conclusions with~\eqref{F_k_poly}, we can
say that
\begin{equation}
	F^Q_k = F^Q_k \left(
		g^{ab}, g_{ab}, \epsilon^{a_1\cdots a_n}(g),
		\ldots, (z_j)_{a_1\cdots a_{l_j}}, \ldots ;
		g^{ab}, \epsilon^{a_1\cdots a_n}(g),
		\ldots, S_{ab(cd,A)}, \ldots, (t_j)_{a_1\cdots a_{l_j},A}, \ldots
	\right) \:,
\end{equation}
where the dependence on the second group of arguments is purely
polynomial, while the dependence on first group of arguments is smooth
with respect to finite set of algebraically independent scalar
polynomial invariants that can be formed from them by tensor products
and contractions. Recall that we have used the notation $\mathbf{z}_j =
\mathbf{t}_j$, for $j=1, \ldots, K_0$, that is for those background
tensor fields such that are marginal, satisfying $s_j+l_j = 0$. This
completes the proof.
\end{proof}

After the proof of this very general model, we can move on to some more physically relevant models.

\subsection{Vector Klein-Gordon field}
\label{proca_section}
We now consider a specific  quantum vector field  in order to investigate in detail the form of the coefficients $C_k$ in \eqref{formula_generic}: we focus on the  {\bf vector Klein-Gordon field}. 
The classical configurations of the vector KG field  over an oriented globally hyperbolic spacetime $(M, \mathbf{g})$ are smooth  $1$-forms, \emph{i.e.},\ sections of the cotangent bundle $T^*M$, namely
$A \in \mathscr{E}(T^*M)$. The vector KG equation, where we include also a coupling term with the curvature $R$, reads
\begin{equation}
\label{proca}
-\square_{\mathbf{g}} A+m^2A+\xi RA=0 
\end{equation}
where $m^2$ and $\xi$ are here smooth real-valued functions on $M$ (they can be constant functions, but in general we admit that $m^2$ and $\xi$ can vary on the spacetime). 
When passing to the quantum formulation, the {\bf locally-covariant quantum vector KG field}, indicated by the same symbol $A$,
is defined as in Definition~\ref{local_field} with $k=1$ and $VM=T^*M$. Moreover we have the following requirements.
\begin{itemize}
	\item[(a)] The net of local quantum observables $\mathcal{W}$ including the vector KG field is as in Definition~\ref{def_netalgebra} is fixed
	according to  equation \eqref{proca}, which suggests that the natural bundle of background fields is the one completely defined by
	\begin{equation}\label{BMProca}
	BM= \mathring{S}^2T^*M\oplus {\mathbb R} \oplus \mathbb R,
	\end{equation}
	so that the 
	sections $M\rightarrow BM$ are triples $\mathbf{b}=(\mathbf{g},m^2,\xi)$.
	(The metric $\mathbf{g}$ affects the theory because it enters $\square_\mathbf{g}$, $R$ (also derived) and even the Levi-Civita tensor $\epsilon$ in case one deals with the category of background geometries $\mathfrak{BkgG}^+$ instead of $\mathfrak{BkgG}$).

	\item[(b)] The natural vector bundle is completely fixed by requiring 
	\begin{equation*}
	VM=T^*M 
	\end{equation*} 
	and the morphism $V_{\chi}$, whose associated  pushforward on test sections is exploited to define the notion of local covariance as in Definition~\ref{local_field}, is nothing but the natural lift of the embeddings $\chi \colon M \to M'$ to the corresponding tangent bundles.

	\item[(c)] According to its mass dimension,%
		\footnote{E.g., assuming that both the terms summed in the Lagrangian density of the vector KG field field
		$m^2 g^{ab} A_a A_b \sqrt{g}$ and $g^{ab} (\nabla A)_a (\nabla A)_b
		\sqrt{g}$ are dimensionless once supposed $\hbar=c=1$.} %
	the physical scaling degree of the vector KG field is  $\mathbf{d}_A = (n-4)/2$ when $\mathbf{g}\mapsto \lambda^{-2} \mathbf{g}$, $m^2\mapsto \lambda^2 m^2$ and $\xi\mapsto \xi$ according to \eqref{ps}. We recall that the presence of covariant derivatives do not change this rescaling behaviour as the coordinates are dimensionless.
	
	\item[(d)] We stress that all background fields of this model are scalars of non-negative physical scaling weight and hence are admissible according to Definition~\ref{def_admissible}.
\end{itemize}

\begin{rem} {~}\label{rem_KG}

	{\bf (1)} The quantum vector KG field, in addition to the requirements in Definition~\ref{local_field}, it is also supposed to verify \eqref{proca} in a {\em distributional sense} for every background geometry
	\begin{equation}
	\label{qPe}
	A_{(M,\mathbf{b})}\left((-\square_{\mathbf{g}}+m^2+\xi R)f \right)=0 \quad \forall f \in \mathscr{D}(TM)\:.
	\end{equation}
	Though this fact does not play any role in our work, it implies several relevant facts which are mentioned in the some of subsequent remarks. Moreover, exactly as does the Klein-Gordon equation for the scalar field, this equation of motion plays a crucial role in the construction of an explicit algebra of Wick polynomials~\cite{HW1}.

	{\bf (2)} It is well-known~\cite[Sec.3.3.1]{AAQFT15Ch3} that the KG operator $P=-\square +m^2 + \xi R$ is {\em Green hyperbolic} for every choices of the involved given smooth functions ($m^2$ may attain non-positive values in particular)
	and thus the retarded and advanced Green operators of $P$ exist. In particular the function $\Delta_{(M,\mathbf{b})}$ discussed in~(4) Remark~\ref{rem_commutator} in this case is the {\bf causal propagator} of the KG equation~\cite[Sec.3.3.1]{AAQFT15Ch3}. 
	As a consequence of the standard properties of the causal propagator, we also have that $[A_{(M,\mathbf{b})}(f),A_{(M,\mathbf{b})}(g)]=0$ when the supports of $f$ and $g$ are causally disjoint.

	{\bf (3)} As is well-known, exactly as for the scalar field (\emph{e.g.},\ see~\cite[Sec.3.3.1]{AAQFT15Ch3}), the statement of the time-slice axiom for the locally covariant vector field $A$ can be sharpened, based on the properties of the causal propagator of equation of motion~\eqref{qPe}. Namely, if $O$ is an open neighborhood of any Cauchy surface of $(M,\mathbf{g})$ and $f \in \mathscr{D}(TM)$, then
	$A_{(M,\mathbf{b})}(f) = A_{(M,\mathbf{b})}(h)$ for a suitable $h \in \mathscr{D}(TM)$, depending on $f$, whose support is contained in $O$.

	{\bf (4)} When defining the Wick products $A^k_{(M,\mathbf{b})}(f)$, the class of states $\mathcal{S}_{(M,\mathbf{b})}$ appearing in the smoothness requirement in Definition~\ref{def_wick} should be naturally interpreted as consisting of the extensions of \emph{Hadamard states}~\cite{sahlmann01} from the unital $*$-subalgebra $\mathcal{W}_{0,(M,\mathbf{b})} \subset \mathcal{W}_{(M,\mathbf{b})}$ to the whole ambient algebra, where $\mathcal{W}_{0(M,\mathbf{b})}$ is generated by $1$ and products of elements $A_{(M,\mathbf{b})}(f)$.

	{\bf (5)} It is worth also stressing that the case $m=0$, even if the spacetime is Minkowski one, does {\em not} correspond to the quantization of the electromagnetic field (within Lorenz-gauge choice). 
	Indeed, we are dealing here with the algebraic approach and, in a given spacetime, the (Weyl) $*$-algebra of vector KG field is well defined for every choice of the function $m^2$ which may also attain negative values, because its definition only relies on the
	fact that the spacetime is globally hyperbolic and on the
	nature of the operator $P=-\square +m^2 + \xi R$ which is Green hyperbolic. The existence of Hadamard states playing a role in {\em requirement 5} can be proved with a standard deformation argument even in Minkowski spacetime for $m^2\leq 0$ constantly: it is enough to smoothly change the function $m^2$ in the past of  a Cauchy surface $\Sigma$ until it becomes a constant function with value $m_0^2>0$ in the past  a second Cauchy surface $\Sigma'$ in the past of $\Sigma$ in Minkowski spacetime. Next, in the past of $\Sigma'$ one may construct the standard Poincar\'e-invariant vacuum for (constant) squared mass $m_0^2>0$ and spin-$1$ particles. This state can be viewed as a state over the algebra in the future of $\Sigma$ when taking advantage of time slice axiom and it remains Hadamard in view of the known singularity propagation property of Hadamard states. Obviously, for the algebra of fields in the future of $\Sigma$, the constructed state is {\em not} the Poincar\'e-invariant vacuum which cannot be defined if $m^2<0$ (constantly) and the problem with negative-norm states would immediately arise for $m^2=0$ (usually removed by means of the Gupta-Bleuler treatment which also lower to $2$ the physical degrees of freedom of particles associated to the field from the $3$ degrees of freedom of massive spin-$1$ particles).
	This way also the $m^2\leq 0$ theory in Minkowski spacetime admits Hadamard states, 
	but none of them is a Poincar\'e-invariant vacuum. In other words, for $m^2=0$, our vector KG field field does {\em not} describe photons. In the algebraic approach, photons are described by including gauge invariance into the algebra of fields from scratch which is a more complicated procedure than the one we are discussing~\cite{hollands-ym,FR-BV}. Using some delicate adiabatic changes of mass procedures similar to the ones pointed out above it is however possible, at least for the scalar field, to transform vacua states into vacua states with different masses~\cite{DHP17,DD16,DG17}.

{\bf (6)} It is also worth commenting on the existence of prescriptions
of Wick polynomials that are smooth in $m^2$, including at $m^2=0$. For
that, it is important to recall the precise form of the smoothness axiom
(Definition~\ref{def_wick}, Axiom 5) and that the main candidate for
such a construction is point splitting regularized with a Hadamard
parametrix. That is, in the simplest $k=2$ case, what we must check is
the joint smoothness of the integral kernel $\omega_{ab}(s,x)$ in the
expression
\begin{equation} \label{A2-m-smooth}
	\lim_{y\to x} \omega \circ \tau_s^{-1}
		\left( A_{a\,(M,\mathbf{b}_s)}(x) A_{b\,(M,\mathbf{b}_s)}(y)
			- H_{ab\,(M,\mathbf{b}_s)}(x,y) 1 \right)
	= \omega_{ab}(s,x) ,
\end{equation}
where $H_{ab\,(M,\mathbf{b}_s)}(x,y)$ is the Hadamard parametrix and
$\omega$ is any Hadamard state on the algebra
$\mathcal{W}(M,\mathbf{b}_0)$, with $\mathbf{b}_0 = (\mathbf{g}_0,
m^2=m_0^2, \xi=\xi_0)$ and $\mathbf{b}_s$ a compactly supported
variation thereof. It is well-known that, already on (even dimensional)
Minkowski space with $m^2 = m_0^2$ constant, the Hadamard parametrix
$H_{ab}(x,y)$
contains terms proportional to $v_{m^2}(x,y) \log(\mu^2 \sigma(x,y))$, where
$\sigma(x,y)$ is the squared geodesic distance, $\mu^2$ is an
arbitrary dimensionful constant, and the dependence of $v_{m^2}$ on
$m^2$ is bilocal and smooth. On the other hand, the Wightman 2-point
function $\omega_{m^2} (A_{a}(x) A_{b}(y))$, where $\omega_{m^2}$ is the
Fock vacuum, also contains terms proportional to $\log(m^2\sigma(x,y))$.
Thus, we expect that the point split regularization
\begin{equation} \label{A2-m0}
	\lim_{y\to x} \omega_{m_0^2}(A_a(x) A_b(y) - H_{ab}(x,y) 1)
\end{equation}
gives rise to a smooth function of $x$ for $m=m_0$ fixed, because of the
cancellation of singular $\sigma(x,y)$-dependent terms. However, we also
expect that the arising result contain terms proportional to $\log
m_0^2/\mu^2$. Thus, at first glance, it might seem that the desired
smoothness property in~\eqref{A2-m-smooth} would not hold because of a
logarithmic singularity encountered as $m^2$ varies from $m_0^2$ to $0$
as a function of $s$. This is not the case because a careful comparison
of~\eqref{A2-m-smooth} and~\eqref{A2-m0} reveals that they are not
analogous expressions. In fact, one can never represent the family
$\omega_{m^2}$ of Fock vacua as $\omega_{m_0^2} \circ \tau_s^{-1}$ for
some fixed constant $m_0^2$ and an $s$-dependent compactly supported
variation thereof, because the difference $m^2-m_0^2$ would not be
compactly supported. To clarify some further information about
$m^2$-regularity at $m^2=0$, it is useful to observe  that, with
$\omega$ fixed in~\eqref{A2-m-smooth}, the difference between
$A_{a\,(M,\mathbf{b}_0)}(x) A_{b\,(M,\mathbf{b}_0)}(y)$ and
$\tau_s^{-1}(A_{a\,(M,\mathbf{b}_s)}(x) A_{b\,(M,\mathbf{b}_s)}(y))$ can
be expressed using advanced and retarded propagators for the vector KG
operators respectively on $(M,\mathbf{b}_0)$ and $(M,\mathbf{b}_s)$. We
expect and conjecture that the retarded propagator smoothly depend on
the difference $m^2(s,x) - m_0^2$, establishing the wanted smoothness
property. Though we do not have a demonstration of that, we outline a
possible way to construct a proof in the next paragraph. (The above
conclusions are implicit in the discussion of Section~5.2
of~\cite{HW2}.)

To argue that the retarded propagator with mass $m^2(s,x)$ has smooth
dependence on $m^2(s,x) - m_0^2$, when the difference has compact
support, we will refer to some results from~\cite{DHP17}. 

More precisely, we can express the retarded propagator $\Delta^R_{m^2}$
in terms of the retarded propagator $\Delta^R_{m^2_0}$ and an operator
$R_{m^2_0}$ (Lemma~3.10 in~\cite{DHP17}), where $R_{m^2_0} =
[1+\Delta^R_{m^2_0} (m^2-m^2_0)]^{-1}$ (Proposition~3.8
in~\cite{DHP17}). This comes down to the perturbative expression, cf.\
Equation~(43) in~\cite{DHP17},
\[
	\Delta^R_{m^2}
	= \sum_{n\geq 0} \left[-\Delta^R_{m^2_0} \left(m^2-m^2_0\right)\right]^n \Delta^R_{m^2_0} \:.
\]

Lemma~B.1 of~\cite{DHP17} uses the support properties of
$\Delta^R_{m^2_0}$ to show that the above series, together with all of
its functional derivatives with respect to the difference $m^2 - m_0^2$,
converges when $m_0^2=0$ and the background spacetime is Minkowski.
Though we do not have a proof and the issue should be investigated
elsewhere, it seems plausible that  the same proof generalizes to more
general globally hyperbolic spacetimes.

\end{rem}
\bigskip

With the concrete case of the vector KG field, Theorem~\ref{lemma_ipotesi} can be sharpened to give a more explicit expression for the renormalization coefficients $C_k$.
In terms of algebra valued distributions, equation \eqref{formula_generic} can be rewritten as
\begin{equation}
\widetilde{A_{b_1} \cdots A_{b_k}}(x)=A_{b_1} \cdots A_{b_k} (x)+\sum_{l=0}^{k-1}\binom{k}{l}C_{k-l}[M,\mathbf{b}]_{(b_1\cdots b_{k-l}}(x)A_{b_1} \cdots A_{b_l)} (x)
\end{equation}
with $C_k[M,\mathbf{b}]_{b_1\cdots b_{k}}(x) \in (T^*_x)^{\otimes k}M$ fully symmetric. Using Theorem~\ref{thm_uniqueness} we can immediately obtain a precise form of the symmetric covariant $k$-tensor fields $C_k[M,\mathbf{b}]$. For example, if we choose $n=4$ and $k=2$ we obtain for all $f\in\mathscr{D}(S^2TM)$
\begin{multline*}
\widetilde{A}^2_{(M,\mathbf{b})}(f)=A^2_{(M,\mathbf{b})}(f) \\
	+ 1\left(\left(y_1m^2\mathbf{g}+y_2R\mathbf{g}+y_3\mathbf{Ric}+y_4\square\xi \mathbf{g}+y_5\nabla^2\xi+y_6\mathbf{g}(\nabla\xi)^2+y_7(\nabla\xi)^{\odot 2}\right)\cdot_{2}f\right)
\end{multline*}
which can be written in terms of distributional fields, omitting explicit $x$-dependence for simplicity, as
\begin{multline*}
\widetilde{A_{a}A_{b}} =
A_{a} A_{b}+\left(y_1 g_{ab}m^2+y_2g_{ab}R +y_3R_{ab}
\right. \\ \left. {}
+y_4g_{ab}\square\xi +y_5\nabla_{(a}\nabla_{b)}\xi+y_6g_{ab}\nabla^c\xi\nabla_c\xi+y_7\nabla_{(a}\xi\nabla_{b)}\xi\right)\:,
\end{multline*}
where $y_j(x) := Y_j(\xi(x))$ and $Y_j$,  for $j=1,\ldots, 5$, are dimensionless smooth functions  which do not depend on the chosen spacetime. 
Obviously, in concrete physical theories the final values of some background fields like $m^2$ and $\xi$ are taken to be everywhere constant. In this case all derivatives of these fields disappear. In particular
\begin{equation*}
\widetilde{A_{a}A_{b}} =
A_{a} A_{b}+\left(y_1g_{ab}m^2+y_2g_{ab}R  +y_3R_{ab} \right)\:,
\end{equation*} 
where the $y_j := Y_j(\xi)$ turn out to be true \emph{renormalization constants} independent form the chosen spacetime. 

 \subsubsection{Vector Klein-Gordon field with tensor curvature coupling}
 \label{sec_vector_KG_xi_tens}
 It is possible to complicate a bit the previous example by adding a non-trivial background field. We consider a tensorial coupling to the scalar curvature in the vector KG equation, \emph{i.e.},
 \begin{equation} \label{vector_KG_xi_tens}
 -\square_{\mathbf{g}} A_a+m^2A_b+R\xi_a^bA_b=0.
 \end{equation}
 Lowering the upper index of the coupling tensor, $\xi_{ab}=g_{ac}\xi_b^c$, we have a fully covariant background $2$-tensor field. We will take $\xi_{ab}$ to be symmetric, both for simplicity and because only symmetric tensorial coefficients are compatible with the existence of a Lagrangian density for~\eqref{vector_KG_xi_tens}. Then, the bundle of background field is now completely defined by
 \begin{equation}
 BM= \mathring{S}^2T^*M\oplus \mathbb R\oplus S^2T^{*}M
 \end{equation}
 and the sections $M\rightarrow BM$ are triples $\mathbf{b}=(\mathbf{g},m^2,\boldsymbol{\xi})$. The background field $\boldsymbol{\xi}$ is marginal%
 	\footnote{In the Lagrangian density, the curvature coupling term becomes $Rg^{ad}g^{bc}\xi_{ac}A_bA_d\sqrt{-g}$.} %
 since the tensor index $l_M=2$ and the physical scaling weight $s_M=-2$, hence satisfying $l_M+s_M=0$. Clearly, $\xi_{ab}$ is the only marginal background field. All other hypotheses remain invariant with respect to the previous example.

 To apply our main Theorem~\ref{thm_uniqueness}, we first need to analyze the structure of the scalar polynomial invariants on the fibers of $S^2T^*M$ under the action of $O(1,n-1)$ (or $SO(1,n-1)$) and the separability of closed orbits by these invariants. As is well known~\cite[Sec.11.8]{procesi}, a generating set of the polynomial invariants is given by the contractions
\begin{equation} \label{xi_invar_scal}
\left(\tr \boldsymbol{\xi} = \xi_a^{a}, \tr\boldsymbol{\xi}^2 = \xi_a^b\xi_b^a, \ldots, \tr\boldsymbol{\xi}^n = \xi_{a_1}^{a_2}\xi_{a_2}^{a_3} \cdots \xi_{a_n}^{a_1}\right) \:,
\end{equation}
 which, as indicated, can be interpreted as traces of successive powers of $\xi_a^b$, interpreted as $n$-dimensional endomorphisms (or $n\times n$ matrices). All higher order contractions are algebraically dependent due to the Cayley-Hamilton identity.  
 The result obtained in Theorem~\ref{thm_uniqueness} applied to this case when, for example, we choose $n=4$ and $k=2$ gives, omitting the $x$-dependence for simplicity,
 \begin{equation*}
 \widetilde{A_{a}A_{b}} =
 A_{a} A_{b}+\left(y_1g_{ab}m^2+y_2g_{ab}R +y_3 R_{ab}+y_4 m^2\xi_{ab}+y_5\xi_{ab}R +B_\xi \right) \:,
 \end{equation*}
 with all terms that vanish when the background fields are constant collected in
 \begin{flalign*}
 B_\xi={}&y_6g_{ab}\square\xi_c^c+y_7\nabla_{(a}\nabla_{b)}\xi_c^c+y_8g_{ab}\nabla^c\xi_d^d\nabla_c\xi_d^d+y_9g_{cd}\nabla_{(a}\xi^{cd}\nabla_{b)}\xi_c^c\\
 &+y_{10} \left(\nabla_{(a}\nabla_{b)}\xi_{cd}\right)\xi^{cd}+y_{11}\nabla_{(a}\xi^{cd}\nabla_{b)}\xi_{cd}+y_{12}g_{ab}\left(\square\xi_{cd}\right)\xi^{cd}+y_{13}g_{ab}\nabla^c\xi_{de}\nabla_c\xi^{de}\\
 &+y_{14}\xi_{ab}\square\xi_c^c+y_{15}\xi_{ab}\nabla^c\xi_d^d\nabla_c\xi_d^d+y_{16} \square\xi_{ab}+y_{17}\xi_{ab}\left(\square\xi_{cd}\right)\xi^{cd}+y_{18}\xi_{ab}\nabla^c\xi_{de}\nabla_c\xi^{de}\\
 &+y_{19}\xi_{cd}\nabla_{(a}\xi^{cd}\nabla_{b)}\xi_c^c+y_{20}\xi_{cd}\xi_{ef}\nabla_{(a}\xi^{ef}\nabla_{b)}\xi^{cd}
 \:,
 \end{flalign*}
 where the $y_i$ are locally smooth functions our invariant
 scalars~\eqref{xi_invar_scal} in the sense of
 Definition~\ref{def_loc_poly} and Proposition~\ref{prop_luna_ext}.

 Now, we analyse in detail the structure of the coefficients $y_i$. In
 general, illustrating the phenomenon discussed in
 Appendix~\ref{section_equivar}, our invariant polynomials do not
 separate the closed orbits of  $O(1,n-1)$ (or $SO(1,n-1)$) acting on
 the fibers of $S^2T^*M$. For instance, given an orthonormal basis
 $v^0_a,\ldots,v^3_a$ with $\mathbf{v}^0$ timelike and the rest
 spacelike, the following symmetric tensors with distinct
 $\lambda_0,\ldots,\lambda_4$ cannot be distinguished by invariant
 polynomials
 \begin{equation*}
 	\boldsymbol{\xi} =
		-\lambda_0 v^0_a v^0_b
		+\lambda_1 v^1_a v^1_b
		+\lambda_2 v^2_a v^2_b
		+\lambda_3 v^3_a v^3_b
		\quad \text{and} \quad
		\boldsymbol{\xi}' =
		-\lambda_1 v^0_a v^0_b
		+\lambda_0 v^1_a v^1_b
		+\lambda_2 v^2_a v^2_b
		+\lambda_3 v^3_a v^3_b \:,
 \end{equation*}
 even though they belong to different orbits. The orbits are distinct
 because any linear transformation mapping $\boldsymbol{\xi}$ to
 $\boldsymbol{\xi}'$ must exchange the $\lambda_0$- and
 $\lambda_1$-eigenvectors, hence exchanging a spacelike vector with a
 timelike vector, which cannot be done by any element of $O(1,n-1)$.
 Other examples of this kind can be constructed by looking at the
 complete classification of the orbit types of symmetric
 2-tensors~\cite[Sec.5.1]{stephani-sols}. On the other hand, invariant
 polynomials do distinguish the orbit of $\boldsymbol{\xi}$ from the
 orbit of any other point in a sufficiently small neighborhood, because
 the case of distinct eigenvalues allows us to choose the eigenvectors
 smoothly under small variations, and small variations of timelike
 (spacelike) vectors remain timelike (spacelike). Thus, the subsets
 where invariant polynomials can locally distinguish orbits must be
 separated by a ``barrier'' (the $Z^0$ subset of
 Proposition~\ref{prop_luna_ext}). Since any continuous path from
 $\boldsymbol{\xi}$ to $\boldsymbol{\xi}'$ must pass through some tensor
 with \emph{degenerate} eigenvalues, we can take $Z^0$ to consist of all
 tensors with at least two equal eigenvalues. The open sets $Z_j$ of
 Proposition~\ref{prop_luna_ext} can then be identified with the
 connected components of $Z \setminus Z_0$, where $Z$ is a generic fiber
 of $S^2T^*M$.

 The reason why the set $Z^0 \subset Z$ and the partition $Z \setminus
 Z^0 = \bigcup_j Z_j$ is consistent with Proposition~\ref{prop_luna_ext}
 is that $Z^0$ is actually the zero-set of an invariant polynomial
 $p_0(\boldsymbol{\xi}) = \mathrm{disc}(\boldsymbol{\xi})$, known as the
 \textbf{matrix discriminant}. It is defined by requiring that, for
 diagonalizable tensors with eigenvalues $\lambda_i$, it takes the value
 \begin{equation*}
\mathrm{disc}\left(\boldsymbol{\xi}\right)=\prod_{i<j}\left(\lambda_i-\lambda_j\right)^2 \:,
 \end{equation*}
 which can be shown to coincide with the polynomial
 $\mathrm{disc}\left(\boldsymbol{\xi}\right)=\det\left(\tr\boldsymbol{\xi}^{i+j-
 2}\right)_{i,j=1}^n$~\cite[Lem.1]{PARLETT02}. In the $n=4$ case, it has
 the explicit form
 \begin{equation*}
 \mathrm{disc}\left(\boldsymbol{\xi}\right)=
 \det\begin{pmatrix}
 \tr I & \tr \boldsymbol{\xi} &  \tr \boldsymbol{\xi}^{2} &\tr \boldsymbol{\xi}^3  \\ 
 \tr \boldsymbol{\xi} & \tr \boldsymbol{\xi}^2 &  \tr \boldsymbol{\xi}^3 & \tr \boldsymbol{\xi}^4 \\ 
 \tr \boldsymbol{\xi}^{2}& \tr \boldsymbol{\xi}^3 & \tr \boldsymbol{\xi}^{4} & \tr \boldsymbol{\xi}^5  \\ 
 \tr \boldsymbol{\xi}^3 & \tr \boldsymbol{\xi}^4 & \tr \boldsymbol{\xi}^5 & \tr \boldsymbol{\xi}^6
 \end{pmatrix},
 \end{equation*}
 where we recall that $\tr \boldsymbol{\xi}^5,\tr \boldsymbol{\xi}^6$
 are algebraically dependent on lower order contractions due to the
 Cayley-Hamilton identity.  Thus, the coefficients $y_i$ are  locally
 smooth functions (Definition~\ref{def_loc_poly}) of the scalar
 polynomials invariants~\eqref{xi_invar_scal}, \emph{i.e.},
 \begin{equation*}
 y_i(x) = [Y_i]_{S^2T^*M}(\tr\boldsymbol{\xi}(x), \tr\boldsymbol{\xi}^2(x), \tr\boldsymbol{\xi}^3(x), \tr\boldsymbol{\xi}^4(x)) \:,
 \end{equation*}
 for $i=1,\ldots, 20$,  with respect to the partition $Z\setminus Z^0 =
 \bigcup_j Z_j$ indicated above, with $Z$ a generic fiber of $S^2T^*M$.

\subsection{Scalar field with derivative} \label{sec_grad_scalar}
We now consider the renormalization of Wick powers of a \textbf{scalar field with its first derivative}. 
The classical configurations of the scalar KG field over an oriented globally hyperbolic spacetime $(M, \mathbf{g})$ are smooth  real-valued functions, \emph{i.e.},\ sections of the bundle $T^{*\otimes 0}M=M\times\mathbb{R}$, namely
$\varphi \in \mathscr{E}(T^{*\otimes 0}M)=C^\infty (M)$. Similarly to the previous case we have the following equation of motion
\begin{equation}
\label{scalarKG}
-\square_{\mathbf{g}} \varphi+m^2\varphi+\xi R\varphi=0,
\end{equation}
where $m^2$ and $\xi$ are smooth real-valued functions on $M$ (they can be constant functions, but in general we admit that $m^2$ and $\xi$ can vary on the spacetime). 
Since we want to consider renormalization of a scalar field with its first derivative, we construct the field $\Phi$ as the pair of fields
\begin{equation*}
\Phi=\left(\varphi,\nabla_a\varphi\right).
\end{equation*}

When passing to the quantum formulation, the \emph{locally-covariant quantum field $\Phi$}, indicated by the same symbol $\Phi$, is defined as in Definition~\ref{local_field}, with the following details.
\begin{itemize}
	\item[(a)] As in the previous case, the net of local quantum observables $\mathcal{W}$ including the scalar field, as in Definition~\ref{def_netalgebra}, is fixed
	according to  equation \eqref{scalarKG}, which suggests that the natural bundle of background fields is the one completely defined by
	\begin{equation}\label{BMKGgrad}
	BM= \mathring{S}^2T^*M\oplus (M\times{\mathbb R}) \oplus (M\times\mathbb R),
	\end{equation}
	so that the 
	sections $M\rightarrow BM$ are triples $\mathbf{b}=(\mathbf{g},m^2,\xi)$.

	\item[(b)] The natural vector bundle is completely fixed by requiring 
	\begin{equation*}
	VM=(M\times \mathbb{R})\oplus TM
	\end{equation*} 
	and the morphism $V_{\chi}$, whose associated  pushforward on test sections is exploited to define the notion of local covariance as in Definition~\ref{local_field}, is nothing but the natural lift of the embeddings $\chi \colon M \to M'$ to the corresponding tangent bundles.

	\item[(c)] According to its mass dimension\footnote{E.g., assuming that both the terms summed in the Lagrangian density of the scalar field
		$m^2 \varphi^2 \sqrt{g}$ and $\nabla^a\varphi\nabla_a\varphi \sqrt{g}$ are dimensionless in natural $\hbar=c=1$ units.}, the physical scaling degree of the field $\Phi$ is  
		\begin{equation*}
		\mathbf{d}_\Phi = \left(\frac{n-2}{2},\frac{n-2}{2}\right),
		\end{equation*}
	when $\mathbf{g}\mapsto \lambda^{-2} \mathbf{g}$, $m^2\mapsto \lambda^2 m^2$ and $\xi\mapsto \xi$ according to \eqref{ps}. We recall that the presence of covariant derivatives do not change this rescaling behaviour as the coordinates are dimensionless.
	
	\item[(d)] We stress that all background fields of this model are scalars of non-negative physical scaling weight and hence are admissible according to Definition~\ref{def_admissible}.
\end{itemize}
For this specific model, using Theorem~\ref{lemma_ipotesi} and Theorem~\ref{thm_uniqueness}, we can immediately obtain a renormalization formula and a precise form of the renormalization counterterms. For example, if we choose $n=4$ and $k=2$ we obtain, in terms algebra valued distributions and for brevity omitting the all dependence on the spacetime point $x$,
\begin{equation*}
\begin{bmatrix}
\widetilde{\varphi}^2 \\[5pt]
\widetilde{\varphi\nabla_a\varphi} \\[5pt]
\widetilde{\nabla_{(a}\varphi\nabla_{b)}\varphi}
\end{bmatrix} =
\begin{bmatrix}
\varphi^2 \\[5pt]
\varphi\nabla_a\varphi\\[5pt]
\nabla_{(a}\varphi\nabla_{b)}\varphi
\end{bmatrix} +
\begin{bmatrix}
\alpha_1m^2+\alpha_2R + A_{\xi,m^2} \\[5pt]
\beta_1\nabla_a R+B_{\xi,m^2}\\[5pt]
g_{ab}\left(\gamma_1 (m^2)^2 + \gamma_2 m^2 R + \gamma_3 R^2\right)
+ \left(\gamma_4 m^2 + \gamma_5 \square\right) R_{ab}
+ C_{\xi,m^2}
\end{bmatrix}
\end{equation*}
where all $\alpha$-, $\beta$-, and $\gamma$-coefficients are smooth functions of $\xi$ and
\begin{align*}
A_{\xi,m^2}
	&= \alpha_3\nabla^a\xi\nabla_a\xi
		+\alpha_4\square\xi \:, \\
B_{\xi,m^2}
	&= \beta_2 \nabla_a m^2
		+\beta_3 m^2\nabla_a \xi
		+\beta_4 R \nabla_a \xi
		+\beta_5 R_{ab} \nabla^b \xi
 \\
	& \quad {}
		+\beta_6 (\nabla^b \xi \nabla_b \xi) \nabla_a \xi
		+\beta_7 \square\xi \nabla_a \xi
		+\beta_8 \nabla^b \xi \nabla_{(b} \nabla_{a)} \xi
		+\beta_9 \nabla_a \square \xi
		\:, \\
C_{\xi,m^2}
	&= \gamma_6 \nabla_{(a} \xi \nabla_{b)} m^2
		+\gamma_7 m^2 \nabla_{(a} \xi \nabla_{b)} \xi
		+\gamma_8 R \nabla_{a} \xi \nabla_{b} \xi
		+\gamma_9 R_{ab} \nabla^{c} \xi \nabla_{c} \xi
		+\gamma_{10} R_{c(a} \nabla_{b)} \xi \nabla^{c} \xi
		\\
	& \quad {}
		+\gamma_{11} g_{ab} \nabla^c \xi \nabla_c m^2
		+\gamma_{12}  g_{ab} m^2 \nabla^c \xi \nabla_c \xi
		+\gamma_{13} g_{ab} R \nabla^{c} \xi \nabla_{c} \xi
		+\gamma_{14} g_{ab} R^{bc} \nabla_{b} \xi \nabla_{c} \xi
		\\
	& \quad {}
		+\gamma_{15} \nabla_{(a} \nabla_{b)} m^2
		+\gamma_{16} m^2 \nabla_{(a} \nabla_{b)} \xi
		+\gamma_{17} \square \xi \nabla_{(a} \nabla_{b)} \xi
		+\gamma_{18} R \nabla_{(a} \nabla_{b)} \xi
		+\gamma_{19} R_{ab} \square \xi
		\\
	& \quad {}
		+\gamma_{20} g_{ab} \square m^2
		+\gamma_{21} g_{ab} m^2 \square\xi
		+\gamma_{22} g_{ab} (\square \xi)^2
		+\gamma_{23} g_{ab} R \square \xi
		\\
	& \quad {}
		+\gamma_{24} \nabla_{(a} \xi \nabla_{b)} \square \xi
		+\gamma_{25} \nabla_{(a} \nabla_{b)} \square \xi
		\\
	& \quad {}
		+\gamma_{26} g_{ab} \nabla^c \xi \nabla_c \square \xi
		+\gamma_{27} g_{ab} \square^2 \xi
\end{align*}
are terms which depend on covariant derivatives of $\xi$ and $m^2$. Thus, if we choose constant values for $m^2$ and $\xi$:
\begin{equation*}
\begin{bmatrix}
\widetilde{\varphi}^2 \\[5pt]
\widetilde{\varphi\nabla_a\varphi} \\[5pt]
\widetilde{\nabla_{(a}\varphi\nabla_{b)}\varphi}
\end{bmatrix} =
\begin{bmatrix}
\varphi^2 \\[5pt]
\varphi\nabla_a\varphi\\[5pt]
\nabla_{(a}\varphi\nabla_{b)}\varphi
\end{bmatrix} +
\begin{bmatrix}
\alpha_1m^2+\alpha_2R\\[5pt]
\beta_1\nabla_aR\\[5pt]
g_{ab}\left(\gamma_1 (m^2)^2 + \gamma_2 m^2 R + \gamma_3 R^2\right)
+ \left(\gamma_4 m^2 + \gamma_5 \square\right) R_{ab}
\end{bmatrix} \:.
\end{equation*}
Also, if we wanted to maintain the Leibniz rule $\nabla_a \varphi^2 =
2\varphi \nabla_a \varphi$ (cf.~\cite{HW5}), we would have to require
$2\beta_1 = \alpha_2$, with the further requirements $2\beta_2 =
\alpha_1$, $2\beta_3 = \alpha_1'$, $2\beta_4 = \alpha_2'$, $2\beta_6 =
\alpha_3'$, $2\beta_7 = \alpha_4'$, $2\beta_8 = 2\alpha_3$ and $2\beta_9
= \alpha_4$ (where ${}'$ denotes $\frac{d}{d\xi}$) for other
coefficients in the case of non-constant $m^2$ and $\xi$.

\begin{rem}
Following the same ideas of this section, it is possible to renormalize a scalar field with derivatives of arbitrary order. If we construct the $(m+1)$-tuple 
\begin{equation*}
\left(\varphi,\nabla_{a_1}\varphi,\nabla_{a_1}\nabla_{a_2}\varphi,\ldots,\nabla_{a_1}\cdots\nabla_{a_m}\varphi\right),
\end{equation*}
\emph{i.e.},\ if we choose as bundle of dynamical fields
\begin{equation*}
VM=\bigoplus_{i=0}^mT^{*\otimes i}M,
\end{equation*}
we can use Theorem~\ref{lemma_ipotesi} and Theorem~\ref{thm_uniqueness} as we did in this section to obtain a renormalization formula with all renormalization counterterms. With the same idea it is possible to renormalize any tensor fields with an arbitrary number of derivatives.
\end{rem}

\section{Conclusions}
This paper has focused on the general notion of Wick powers for general boson fields within the 
formulation of locally covariant algebraic quantum field theory on globally hyperbolic curved spacetimes. For us, a general boson field is a section of an arbitrary natural vector bundle of the spacetime (where \emph{naturality} implies a well defined transformation law under diffeomorphisms). Besides the metric, the spacetime is also allowed to carry arbitrary classical background fields (also sections of natural vector bundles). In 
particular we have viewed the mass and other parameters as such background 
fields.

We define Wick powers axiomatically (Definition~\ref{def_wick}). Our list of axioms simply generalizes the axioms that were used for the scalar field in~\cite{KM16}, which in turn descend from those given in~\cite{HW1} (with the crucial difference that their ``analytic dependence'' axiom was replaced by our ``smooth dependence'' axiom). Our main results consist of a classification of all possible finite renormalizations of Wick powers, which refer to the ambiguities in their axiomatic definition. Our work provides the first rigorous and complete such classification for non-scalar fields. The are analogous to those given in~\cite{KM16}, but become more complicated in the details, due to the higher degree of generality.

The first half of our main result (Theorem~\ref{lemma_ipotesi}), by an
application of the Peetre-Slov\'ak theorem, reduces finite
renormalizations of a $k$-th Wick power to a linear combination of Wick
powers of lower order with coefficients $C_k$ that are differential
operators locally depending on the background fields, of fixed physical
scaling weight and transforming covariantly under diffeomorphisms. The
second half of our main result (Theorem~\ref{thm_uniqueness}) is
specialized to the case when both the dynamical and background fields
are restricted to be tensors (\emph{e.g.},\ the case of connection
fields is not covered), by an application of a general version of the
Thomas Replacement theorem (Appendix~\ref{section_coord_scal}) and some
fundamental results from smooth classical invariant theory of the
orthogonal group $O(1,n-1)$ or $SO(1,n-1)$ (Appendix~\ref{section_equivar}), reduces
the $C_k$ to linear combinations of finitely many tensor polynomials
covariantly constructed from the curvature, the background tensor field,
and all of their covariant derivatives. This finiteness result crucially
depends on an \emph{admissibility} criterion for all the background
fields (Definition~\ref{def_admissible}), which relates the physical
scaling weight of a background field with its tensor rank by an
inequality. The structure of these tensor polynomials is controlled by
their physical scaling weights. It is possible that for a given tensor
type and scaling weight the list of such polynomials is empty, meaning
that the corresponding component of $C_k$ vanishes. The strongest
departure from the results of~\cite{KM16} is in the structure of the
scalar coefficients in front of these polynomial terms. These
coefficients are actually allowed to depend smoothly (not just
polynomially) on the background fields, but in a very restricted way.
Namely, they are allowed to \emph{locally} be smooth functions only of a finite
number of scalar polynomial invariants constructed covariantly from the
subset of \emph{marginal} background fields (those that saturate the
admissibility inequality). The notion of local smooth dependence on
these scalar invariants (cf.~Definition~\ref{def_loc_poly} and
Proposition~\ref{prop_luna_ext}) can be made precise only by looking at
the structure of the orbits of the action of $O(1,n-1)$ or $SO(1,n-1)$
on the marginal background tensor fields. In the scalar Klein-Gordon
case considered~\cite{KM16}, the only marginal background field was the
scalar curvature coupling $\xi$.

We illustrate our results in detail with two physically relevant
examples, checking in particular that they satisfy all the admissibility
hypotheses: the vector Klein-Gordon field $A_a$
(Section~\ref{proca_section}), possibly coupled to the curvature through
a tensor background field $\xi_{ab}$
(Section~\ref{sec_vector_KG_xi_tens}), and the case of Klein Gordon
scalar field $\varphi$ accompanied by its spacetime derivative $\nabla_a
\varphi$ (Section~\ref{sec_grad_scalar}).

Several open issues remain and certainly deserve investigation. First of
all, a theorem of existence for Wick polynomials should be established.
This should be possible with existing tools, since the standard Hadamard
parametrix regularization method~\cite{HW2} should be suitable for
vector fields too, as discussed in Section~\ref{proca_section}. The main
problem is to check that our ``smooth dependence'' axiom is actually
satisfied by this method. We have already made more detailed comments on
this in Section~4 of~\cite{KM16}. As remarked at the end of
Section~\ref{section_peetre}, it might be practically easier to verify
the ``smooth dependence'' axiom when expressed in terms of
\emph{Bastiani differentiability}~\cite{BDLR17}, rather than our
\emph{weak regularity} (Definition~\ref{def_regular}).

Second,  the constructed formalism should be so enlarged, possibly
adding or changing some axioms, to cover the more delicate case of the
Proca field. Here the main problem is that the zero mass limit $m^2\to
0$ is known not to be smooth (see~\cite{2017arXiv170901911S} for a
careful recent discussion), whereas one of our axioms for Wick powers
requires regularity exactly at the zero value of the mass. Some related
remarks about subtleties with regular mass dependence appear in
Remark~\ref{rem_KG}.

Third, our results should be generalized to more general kinds
of bosonic fields (for instance non-tensorial fields like connections)
and also to fermionic fields (for instance Dirac spinor fields).
We believe that such extensions should be fairly straightforward by
building on the ground work that we have already laid. Such extensions
will be discussed in forthcoming work.

A different and much more difficult extension would regard the
renormalization of time ordered products of Wick powers, extending the
existing results~\cite{HW1,HW2,HW5}, which are again currently available
only in the scalar case. All these issues will be investigated
elsewhere.

\paragraph{Acknowledgments.}
The authors are grateful to Charles Torre for sharing with them the
unpublished report~\cite{at-report}, also to Klaus Fredenhagen and
Nicola Pinamonti for raising and clarifying some issues in
Remark~\ref{rem_KG}(6), and also to Jan Slov\'ak for discussions that
were helpful for Appendix~\ref{section_coord_scal}. IK was partially
supported by the ERC Advanced Grant 669240 QUEST ``Quantum Algebraic
Structures and Models'' at the University of Rome 2 (Tor Vergata). AM is grateful to the Math Dept. of University of Rome 2 (Tor Vergata) and of University of Milan for kind hospitality during the development of this work.

\appendix
\section{Technical result on physical scaling} \label{section_scaling}
In this Appendix we report some results from~\cite[Sec.2.4]{KM16} with some generalization in order to consider the case of a tensor valued function.
We recall that in Section \ref{section_tensor_field} we have defined the bundle of background fields as	
\begin{equation}
BM= \mathring{S}^2T^*M\oplus\left(\bigoplus_{j=1}^{K}  T^{*\otimes l_j}M\right)
\end{equation}
and that the physical scaling transformation on the sections of $\Gamma(BM)$ is given by
\begin{equation*}
BM\ni\left(p,\mathbf{g}(p),\mathbf{t}_j(p)\right)\longmapsto \left(p,\lambda^{-2}\mathbf{g}(p),\lambda^{s_j}\mathbf{t}_j(p)\right)\in BM,
\end{equation*}
where $\lambda\in\mathbb{R}^+$ defines the scaling transformation. 

This (globally defined) representation of the multiplicative group $\mathbb{R}^+$ can be written in local coordinates
\begin{equation*}
x^a\mapsto x^a,\quad g_{ab}\mapsto \lambda^{-2}g_{ab},\quad (t_j)_{a_1\ldots a_{l_j}}\mapsto \lambda^{s_j} (t_j)_{a_1\ldots a_{l_j}}.
\end{equation*}
This transformation lifts to a transformation of the jet bundle $J^rBM$. In local coordinates
\begin{equation*}
x^a\mapsto x^a,\quad g_{ab,A}\mapsto \lambda^{-2}g_{ab,A},\quad (t_j)_{a_1\ldots a_{l_j},A}\mapsto \lambda^{s_j} (t_j)_{a_1\ldots a_{l_j},A}.
\end{equation*}

With respect to the Definition~\ref{def_homogeneous}, we are interested in the case $W=C^\infty\left(J^rBM, S^kV^*M\right)$.
Moreover, since we have to consider also smaller domains $Z^r\subseteq J^r BM$ (with $Z^r$ not invariant under physical scaling), it is more convenient to consider the infinitesimal version of these transformations, which are effected by the following vector field\footnote{We use the following notation \begin{equation*}
\partial^{ab,A}:=\frac{\partial}{\partial g_{ab,A}}\quad \partial^{a_1\ldots a_{l_j},A}:=\frac{\partial}{\partial (t_j)_{a_1\ldots a_{l_j},A}}
	\end{equation*} and an analogous one for contravariant coordinates.}
\begin{equation*}
e=-2g_{ab,A}\partial^{ab,A}+s_j(t_j)_{a_1\ldots a_{l_j},A}\partial^{a_1\ldots a_{l_j},A},
\end{equation*}
in the sense that the induced action on tensor functions on $J^rBM$ satisfies
\begin{equation}
\label{lie_derivative}
\left.\frac{\de}{\de\lambda}\right|_{\lambda=1}F_\lambda=\mathcal{L}_eF,
\end{equation}
where $\mathcal{L}_e$ is the Lie derivative, $F\in W$ and $F_\lambda\in W$ is the transformed under physical scaling of $F$. We stress that, since the physical scaling transformation is globally defined, the vector field $e$ is globally defined on $J^rBM$.
\begin{lem}
	\label{lem_inf_scaling}
	A smooth function $F\colon J^rBM\rightarrow S^kV^*M$ that has almost homogeneous degree $k$ and order $l$ when the action $F\rightarrow F_\lambda$ is the one induced by physical scaling transformations, satisfies
	\begin{equation*}
	\left(\mathcal{L}_e-k\right)^{l+1}F=0.
	\end{equation*}
\end{lem}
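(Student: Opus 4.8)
The plan is to translate the almost-homogeneity condition from its finite form (involving $\lambda$ and powers of $\log\lambda$) into the stated infinitesimal, differential form, by repeated application of the identity~\eqref{lie_derivative}. First I would set up the induction on the order $l$. For the base case $l=0$, homogeneity of degree $k$ means $F_\lambda = \lambda^k F$ for all $\lambda\in\mathbb R^+$; differentiating in $\lambda$ at $\lambda=1$ and using~\eqref{lie_derivative} gives $\mathcal L_e F = kF$, i.e. $(\mathcal L_e - k)F = 0$, which is the claim for $l=0$. For the inductive step, assume the statement holds for all orders strictly less than $l$. By Definition~\ref{def_homogeneous}(b), $F$ almost homogeneous of degree $k$ and order $l$ satisfies
\begin{equation*}
F_\lambda = \lambda^k F + \lambda^k \sum_{j=1}^l (\log^j\lambda)\, G_j,
\end{equation*}
where each $G_j$ is almost homogeneous of degree $k$ and order $l-j < l$, hence by the inductive hypothesis $(\mathcal L_e - k)^{l-j+1} G_j = 0$.

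Next I would extract a differential relation for $F$ by differentiating the finite relation in $\lambda$ at $\lambda = 1$. Applying $\frac{\d}{\d\lambda}\big|_{\lambda=1}$ to the right-hand side and using that $\frac{\d}{\d\lambda}\big|_{\lambda=1}(\lambda^k \log^j\lambda)$ equals $k$ when $j=0$ and equals $1$ when $j=1$ and $0$ when $j\ge 2$, together with~\eqref{lie_derivative}, yields
\begin{equation*}
\mathcal L_e F = kF + G_1.
\end{equation*}
In other words $(\mathcal L_e - k)F = G_1$. Since $G_1$ is almost homogeneous of degree $k$ and order $l-1$, the inductive hypothesis gives $(\mathcal L_e - k)^{l} G_1 = 0$. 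Applying $(\mathcal L_e - k)^l$ to both sides of $(\mathcal L_e - k)F = G_1$ then gives $(\mathcal L_e - k)^{l+1} F = (\mathcal L_e - k)^l G_1 = 0$, which completes the induction and the proof.

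The only technical point that needs care — and the place I expect the main (though still modest) obstacle to lie — is the legitimacy of differentiating the relation $F_\lambda = \lambda^k F + \lambda^k\sum_j (\log^j\lambda) G_j$ termwise in $\lambda$ at $\lambda=1$ while $F$ is only a section over a possibly non-scaling-invariant domain $Z^r\subseteq J^rBM$. Here one uses that the infinitesimal generator $e$ is globally defined on $J^rBM$ (as emphasized just before the lemma), so $\mathcal L_e F$ makes sense on all of $Z^r$ even though the finite flow of $e$ may move points out of $Z^r$; the finite relation need only be assumed to hold for $\lambda$ in a neighbourhood of $1$, which is enough to take the derivative at $\lambda=1$ and is exactly what the almost-homogeneity hypothesis supplies in this local setting. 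Once this is granted, the argument is the purely algebraic induction sketched above, mirroring the corresponding step in~\cite{KM16}.
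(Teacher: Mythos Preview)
Your proposal is correct and follows essentially the same approach as the paper's proof: differentiate the finite almost-homogeneity relation at $\lambda=1$ using \eqref{lie_derivative} to obtain $(\mathcal{L}_e-k)F=G_1$ with $G_1$ of order $l-1$, then iterate (you phrase it as induction, the paper phrases it as repeating the operation $l$ times). Your version is somewhat more carefully written, in particular your explicit check of the $\lambda$-derivatives of $\lambda^k\log^j\lambda$ and your remark on the domain issue, but the underlying argument is identical.
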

\begin{proof}
If $F$ is an almost homogeneous function of degree $k$ and order $l$, using equation \eqref{lie_derivative}, we obtain
\begin{equation*}
	\left(\mathcal{L}_e-k\right)F=G^{(l-1)},
\end{equation*}
where $G^{(l-1)}$ is an almost homogeneous function of degree $k$ and order $l-1$. If we repeat this operation $l$ we obtain an homogeneous function $G^{(0)}$ of degree $k$:
\begin{equation*}
	\left(\mathcal{L}_e-k\right)^lF=G^{(0)}.
\end{equation*}
Since, for all homogeneous function $B$, we have  $\left(\mathcal{L}_e-k\right)B=0$ the proof is concluded.
\end{proof}
Thanks to this result, we can give an infinitesimal definition of homogeneous and almost homogeneous function. This definition is very useful since we have to consider function defined on a subset $Z^r\subseteq J^rBM$ which is not invariant under physical scaling.
\begin{defn}
	A smooth function $F\colon Z^r\subseteq J^rBM\rightarrow S^kV^*M$, where $Z^r$ is an open subset which may coincide with all of $J^rBM$, is said to have \textbf{almost homogeneous degree $k\in\mathbb{R}$ and order $l\in\mathbb{N}$} (with $l\geq 0$) \textbf{under physical scalings} if it satisfies the identity
	\begin{equation*}
	\left(\mathcal{L}_e-k\right)^{l+1}F=0.
	\end{equation*}
	If $l=0$, $F$ is said to have \textbf{homogeneous degree} $k\in\mathbb{R}$.
\end{defn}
In the contravariant coordinates $(x^a,g^{ab,A},(t_j)^{a_1\ldots a_{l_j},A})$, defined in Section~\ref{section_coordinates_bundle}, finite and infinitesimal physical scalings take the form
\begin{equation*}
x^a\mapsto x^a,\quad  g\mapsto \lambda^{-2n}g,\quad g^{ab,A}\mapsto \lambda^{2+2|A|}g^{ab,A},\quad (t_j)^{a_1\ldots a_{l_j},A}\mapsto\lambda^{s_j+2|A|}(t_j)^{a_1\ldots a_{l_j},A}
\end{equation*}
\begin{equation}
\label{phys_scaling_field}
e=\left(2+2|A|\right)g^{ab,A}\partial_{ab,A}+\left(s_j+2|A|\right)(t_j)^{a_1\ldots a_{l_j},A}\partial_{a_1\ldots a_{l_j},A}
\end{equation}
where, as remarked previously, we use $g$ as coordinate in place of one of the $g^{ab}$. Since $e$ is everywhere non zero its integral curves form a foliation of $J^rBM$ and hence of $Z^r$. Moreover, since $\mathcal{L}_eg^{-\frac{1}{2n}}=g^{-\frac{1}{2n}}$, $g$ restricts to a global coordinate on each orbit of $e$ and then the level sets of $g$ form another foliation of $J^rBM$, transverse to the integral curves of $e$. For this reason it is convenient to study the structure of almost homogeneous function in \emph{rescaled} coordinates:
  \begin{figure}
	\begin{center}
		\def\svgwidth{8cm}
		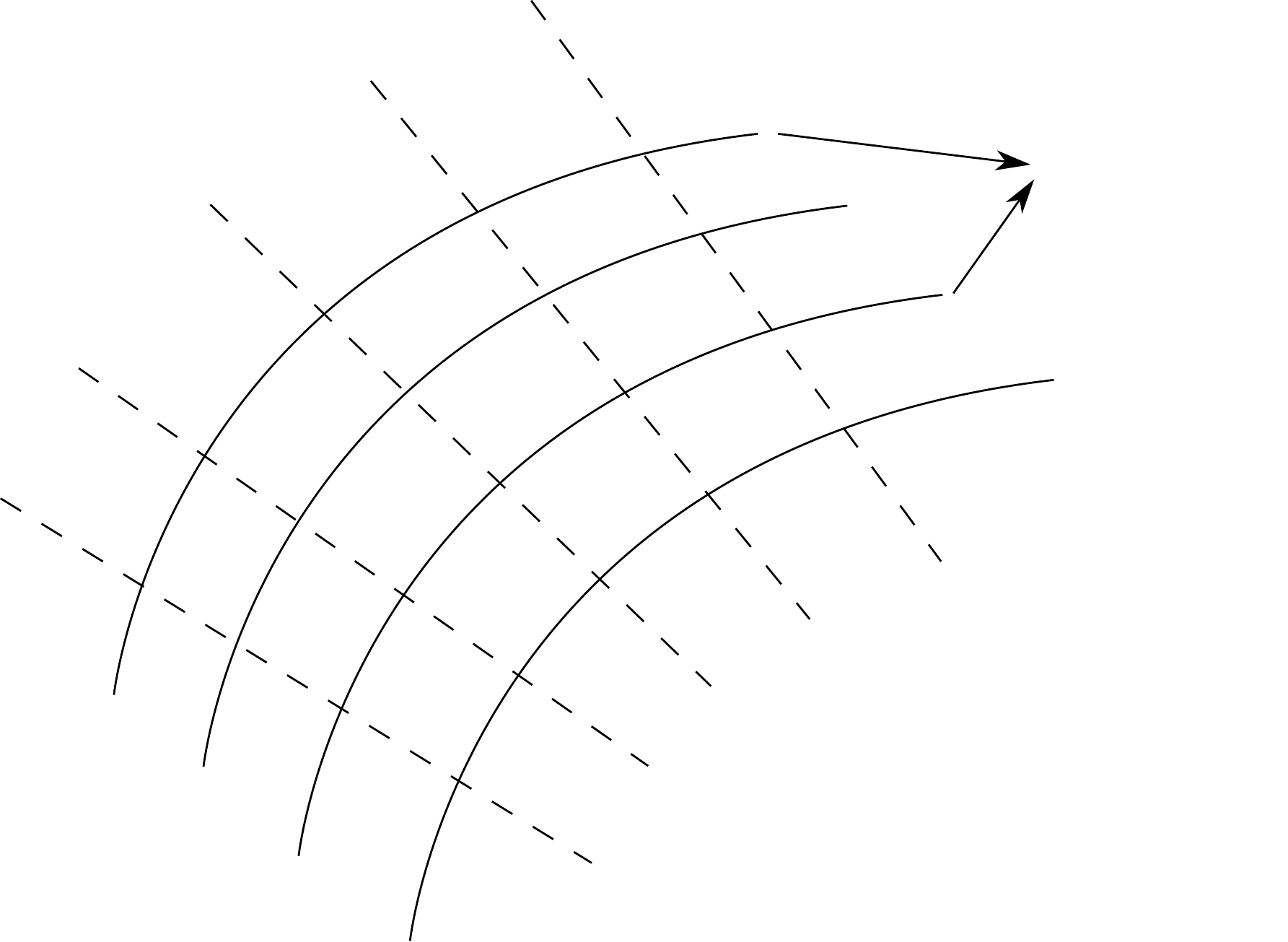
	\end{center}
\end{figure}
\begin{equation*}
\left(x^a,g,g^{-\frac{1}{n}}g_{ab},g^{\frac{1}{n}+\frac{1}{n}|A|}g^{ab,A},g^{\frac{l_j}{n}+\frac{s_j}{2n}+\frac{1}{n}|A|}(t_j)^{a_1\ldots a_{l_j},A}\right).
\end{equation*}
Note that each of these functions but $g$ is invariant under physical scaling. In our notation, we mean that the coordinates $g$ and $g^{-\frac{1}{n}} g_{ab}$ are functionally independent only up to the identity $g = \left| \det g_{ab} \right|$.
\begin{lem}
	\label{lem_functional_form_1}
	Suppose that $Z^r\subseteq J^rBM$ is an open set equipped with either coordinates $(x^a,g^{ab,A},(t_j)^{a_1\ldots a_{l_j},A})$ or some other coordinate system introduced in Section~\ref{section_coordinates_bundle}, and $F\colon Z^r\rightarrow S^kV^*M$ is a smooth function that has almost homogeneous degree $k$ and order $l$ with respect to physical scaling. Then there exist uniquely defined \emph{homogeneous of degree $0$} functions $B_j\colon Z^r\rightarrow S^kV^*M$, for $j=0,1,\ldots,l$, such that
	\begin{equation*}
	F=g^{-\frac{k}{2n}}\sum_{j=0}^l\log^j\left(g^{-\frac{1}{2n}}\right)B_j.
	\end{equation*}
	In particular, using rescaled contravariant coordinates, each $B_j$ can be taken independent of $g$ and written in the form
	\begin{equation*}
	B_j=B_j\left(x^a,g^{-\frac{1}{n}}g_{ab},g^{\frac{1}{n}+\frac{1}{n}|A|}g^{ab,A},g^{\frac{l_j}{n}+\frac{s_j}{2n}+\frac{1}{n}|A|}(t_j)^{a_1\ldots a_{l_j},A}\right).
	\end{equation*}
\end{lem}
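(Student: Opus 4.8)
The plan is to adapt the proof of the scalar version in~\cite[Sec.2.4]{KM16}, the only genuinely new ingredient being the passage to a vector-valued $F$. I would organize the argument so that the coefficients $B_j$ are extracted by a \emph{global, coordinate-free} recursion built from the operator $\mathcal{L}_e-k$; this yields existence and uniqueness at once and sidesteps any connectedness issue for the fibres of $Z^r$ over the invariant coordinates.

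First I would record the elementary facts about the generator $e$. Since $g\mapsto\lambda^{-2n}g$ under physical scaling and $g=\left|\det g_{ab}\right|>0$ everywhere (Lorentzian nondegeneracy), the function $u:=g^{-\frac1{2n}}$ is globally smooth on $Z^r$ and homogeneous of degree $1$, so $\mathcal{L}_e u=u$, hence $\mathcal{L}_e\, g^{\pm\frac{\alpha}{2n}}=\mp\alpha\, g^{\pm\frac{\alpha}{2n}}$ for any $\alpha$, while $t:=\log u=\log(g^{-\frac1{2n}})$ is globally smooth with $\mathcal{L}_e t=1$. All the other rescaled coordinates of Section~\ref{section_coordinates_bundle} are by construction physical-scaling invariant, i.e.\ annihilated by $\mathcal{L}_e$; in terms of $u$ and those coordinates, $e=u\,\partial_u$. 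Moreover $e$ is vertical over $M$ (it acts only on fibre coordinates of $J^rBM$ and fixes each $x^a$), so its flow fixes base points and the target fibre $S^kV^*_xM$ is constant along the flow lines; hence $\mathcal{L}_eF$ is just the componentwise derivative of the vector-valued $F$ along $e$, and choosing a smooth local frame of $S^kV^*M$ over a coordinate patch reduces the statement to finitely many scalar functions. By Lemma~\ref{lem_inf_scaling} and the infinitesimal characterization of almost homogeneity, the hypothesis on $F$ reads $(\mathcal{L}_e-k)^{l+1}F=0$.

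The heart of the argument will be an induction on $l$. For $l=0$, set $B_0:=g^{\frac{k}{2n}}F$; since $g^{\frac{k}{2n}}$ has degree $-k$ and $F$ degree $k$, one gets $\mathcal{L}_eB_0=0$ (homogeneous of degree $0$) and $F=g^{-\frac{k}{2n}}B_0$. For $l\ge1$, put $G:=(\mathcal{L}_e-k)^lF$; then $\mathcal{L}_eG=kG$, so $B_l:=\tfrac1{l!}\,g^{\frac{k}{2n}}G$ satisfies $\mathcal{L}_eB_l=0$. Using $\mathcal{L}_e t=1$, $\mathcal{L}_eB_l=0$ and $\mathcal{L}_e g^{-\frac{k}{2n}}=kg^{-\frac{k}{2n}}$ one checks the Leibniz-type identity $(\mathcal{L}_e-k)\bigl(g^{-\frac{k}{2n}}t^{\,j}B_l\bigr)=j\,g^{-\frac{k}{2n}}t^{\,j-1}B_l$; iterating it $l$ times from $j=l$ gives $(\mathcal{L}_e-k)^l\bigl(g^{-\frac{k}{2n}}t^{\,l}B_l\bigr)=l!\,g^{-\frac{k}{2n}}B_l=G=(\mathcal{L}_e-k)^lF$. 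Hence $F^{(1)}:=F-g^{-\frac{k}{2n}}t^{\,l}B_l$ obeys $(\mathcal{L}_e-k)^lF^{(1)}=0$, i.e.\ it is almost homogeneous of degree $k$ and order $\le l-1$, and the inductive hypothesis applied to $F^{(1)}$ yields $B_0,\dots,B_{l-1}$. (Equivalently, in the coordinate $t$ the equation becomes the constant-coefficient ODE $(\partial_t-k)^{l+1}=0$ with $\zeta$ as parameters, whose solution is $e^{kt}$ times a degree-$l$ polynomial in $t$; the recursion just makes this intrinsic.) Uniqueness is automatic from the same identities: $(\mathcal{L}_e-k)^m\bigl(g^{-\frac{k}{2n}}t^{\,j}B_j\bigr)$ vanishes for $m>j$ and equals $\tfrac{j!}{(j-m)!}g^{-\frac{k}{2n}}t^{\,j-m}B_j$ for $m\le j$, so any decomposition of the asserted form must have $(\mathcal{L}_e-k)^lF=l!\,g^{-\frac{k}{2n}}B_l$, fixing $B_l$, and one then descends to fix $B_{l-1},\dots,B_0$. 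Since each $B_j$ is thereby an explicit combination of the globally defined operations $\mathcal{L}_e-k$, $g^{\pm\frac{k}{2n}}$, $t$ applied to $F$, it is a well-defined global section of $S^kV^*M$ independent of the auxiliary frame, so the construction is genuinely vector-valued and the local pieces patch.

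Finally, the ``in particular'' clause and the independence of the chosen coordinate system are immediate. The decomposition $F=g^{-\frac{k}{2n}}\sum_{j=0}^l\log^j(g^{-\frac1{2n}})B_j$ with each $B_j$ homogeneous of degree $0$ is itself coordinate-free, hence holds verbatim in covariant, contravariant, curvature, rescaled contravariant or rescaled curvature coordinates. In the \emph{rescaled} contravariant (respectively rescaled curvature) coordinates one has $e=u\,\partial_u$ with $u=g^{-\frac1{2n}}$ and all remaining coordinates $\bigl(x^a,g^{-\frac1n}g_{ab},g^{\frac1n+\frac1n|A|}g^{ab,A},g^{\frac{l_j}{n}+\frac{s_j}{2n}+\frac1n|A|}(t_j)^{a_1\ldots a_{l_j},A}\bigr)$ invariant under $\mathcal{L}_e$; therefore $\mathcal{L}_eB_j=0$ becomes $\partial_uB_j=0$, i.e.\ $\partial_gB_j=0$, so $B_j$ is independent of $g$ and is a smooth function of the invariant rescaled coordinates alone. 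The only point I expect to require genuine care is precisely the reduction to scalar components: one must check that a smooth local frame of $S^kV^*M$ exists and that the locally constructed $B_j$'s glue; both are straightforward — the first because $S^kV^*M$ is a smooth vector bundle, the second because by the uniqueness just established each $B_j$ is determined by $F$ alone, so the local expressions agree on overlaps.
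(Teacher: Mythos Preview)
Your proof is correct and takes essentially the same approach as the paper, which simply cites \cite[Lem.2.4]{KM16} and notes that the argument is ``based on the notion of Lie derivative.'' Your inductive extraction of the $B_j$ via iterated applications of $\mathcal{L}_e-k$, together with the observation that in rescaled coordinates $e=u\,\partial_u$ so that degree-$0$ homogeneity becomes $g$-independence, is exactly the Lie-derivative argument referenced; your explicit treatment of the vector-valued target and the uniqueness recursion are welcome elaborations but introduce no new idea beyond what \cite{KM16} already contains.
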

\begin{proof}
	The proof is the same as~\cite[Lem.2.4]{KM16} (since it is based on the notion of Lie derivative).
\end{proof}

\section{Thomas replacement theorem}
\label{section_coord_scal} 

In this section we state and give a mostly self-contained proof of a
version of the Thomas Replacement Theorem~\ref{thomas_rep_thm}, which
basically states that any non-linear differential operator that depends
on a Lorentzian (or pseudo-Riemannian) metric and a finite number of any
kind of other tensor fields while itself transforming as a tensor field
under diffeomorphisms must be expressible as a function of the covariant
derivatives of the Riemann curvature and the other tensor arguments.
This is a rather old result, with versions of it going back to the work
of Thomas~\cite{thomas} and in some form even to earlier works of
Christoffel~\cite{christoffel}. However, it has since then taken on a
folk nature, making it difficult to find precise references that state
the result in a form most convenient for our applications, give a
complete proof, with modern notation and terminology, that is concise
and without an overabundance of formalism. If one omits at least some of
the above conditions, the result of Theorem~\ref{thomas_rep_thm} can be
found in~\cite[\textsection III.7]{schouten},
\cite[\textsection\textsection28.14,33.10]{kms},
and~ \cite[Thm.3]{slovak-inv}. Thus, this section aims to be of convenience
to the reader and to those who will need prove related but slightly
different results, which could be useful when tensors are replaced by
more general natural geometric objects (like connections or possibly
higher order jets) or even spinors. Such results could be useful in
investigating finite renormalizations of Wick polynomials of fields with
these more general transformation properties. Our attempt at providing
such a useful reference is not the first and similar material, motivated
by the heat kernel approach to the Index Theorem, can be found
in~\cite{gilkey73, abp73}. These references concentrate on differential
forms covariantly constructed from the metric, so the final results we
state here are somewhat more general.

Let $BM \to M$ be a natural bundle of the form
\begin{equation}
	BM = (S^2T^* \oplus T^{k_1}\otimes T^{*l_1} \oplus \cdots
		\oplus T^{k_N} \otimes T^{*l_N}) M ,
\end{equation}
where $T^k\otimes T^{*l}M$ is the bundle of $(k,l)$-tensors.
Consider the \emph{curvature coordinates} introduced in
Section~\ref{section_tensor_field} (which is a version of the
system~(25) from~\cite{KM16}) on $J^rBM$, which we consider with a
slight change of notation,
\begin{equation}\label{eq:jet-coords}
(x^a, g_{ab}, \Gamma^a_{bC}, Q_{ab,C},
t^{a_1\cdots a_{k_1}}_{b_1\cdots b_{l_1},C}, \ldots
t^{a_1\cdots a_{k_N}}_{b_1\cdots b_{l_N},C}) ,
\end{equation}
where the multi-indices $C=c_1\cdots c_{|C|}$ range through the sizes
$|C|=1,\ldots, r$. They have the symmetry properties $\Gamma^a_{bC} =
\Gamma^a_{(bC)}$, $Q_{ab,C} = Q_{(ab),C} = Q_{ab,(C)}$ and $Q_{a(b,C)} =
0$. The notational change is that we use the notation $Q_{ab,cdE}$
instead of $\bar{S}_{ab(cd,E)}$ that was introduced in
Section~\ref{section_tensor_field}. The reason for the change is that
the $Q$-notation is better adapted to some index manipulation of which
we will make use below. The above coordinates are defined by the
relations
\begin{align}
\Gamma^a_{bcD} \circ j^r g
&= \del_{(D} \Gamma^a_{bc)}[g] , \\
Q_{ab,cdE} \circ j^r g
&= \nabla_{(E} \bar{S}_{cd)ab}[g] , \\
t^{a_1\cdots a_{k_i}}_{b_1\cdots b_{k_i},C}
&= \nabla_{(C)} t^{a_1\cdots a_{k_i}}_{b_1\cdots b_{k_i}} ,
\end{align}
with $|D| \ge 0$ and $|E| \ge 0$, where
\begin{equation}
\Gamma^a_{bc}[g]
= \frac{1}{2} g^{ae} (\del_b g_{ce} + \del_c g_{be} - \del_e g_{bc})
\end{equation}
are the usual Christoffel symbols and we recall that $\bar{S}_{abcd}[g] =
\bar{R}_{a(c|b|d)}[g]$, which of satisfies
\begin{equation}
\bar{S}_{abcd}[g]
= \bar{S}_{cdab}[g]
= \bar{S}_{(ab)cd}[g]
= \bar{S}_{ab(cd)}[g]
\quad \text{and} \quad
\bar{S}_{a(bcd)} = 0 .
\end{equation}

All coordinates, other than $(x^a,\Gamma^a_{(bc,A)})$, correspond to
components of tensor densities (Definition~\ref{tensor_rep})
transforming under $GL(n)$, where $GL(n)$ is interpreted as the quotient
of $\mathrm{Diff}_x(M)$, the subgroup of diffeomorphisms fixing the
point $x\in M$, by the subgroup of diffeomorphisms with vanishing
Jacobian at $x$.

\begin{rem}
	In the context of the use of Young diagrams to describe irreducible
	representations of $GL(n)$~\cite{fulton}, we can say the following.
	Given a point $x\in M$, we can choose a section $\beta \colon M \to BM$
	such that $(g_{ab},Q_{ab,C}) \circ j^r\beta(x)$ take on arbitrary values
	consistent with the symmetry type of the tensors ($g_{ab}$ has covariant
	$(2)$ Young type, while $Q_{ab,C}$ has contravariant $(|C|,2)$ Young
	type written in the row-symmetric convention~\cite[p.193]{penrose}).
\end{rem}

Given the particular symmetrizations that we have applied in defining
the coordinates $\Gamma^a_{bC}$ and $Q_{ab,cdE}$, it is not immediately
obvious that the system~\eqref{eq:jet-coords} really is a local coordinate
system on $J^rBM$. This result is stated in
Lemma~\eqref{lem:curv-coord-struct}. Our main reference for this result
is~\cite{at-report},%
	\footnote{Although the technical report~\cite{at-report} is
	unpublished, its authors have kindly shared it with us.} %
some of whose results are also reported in~\cite{at}.
A version of the coordinates~\eqref{eq:jet-coords} was introduced in
Equation~(2.18) of~\cite{at-report}. The structural results presented below
can also be found in the more recent and detailed~\cite{jentsch}. There
are two non-trivial facts that need to be noted.
\begin{lem} \label{lem:curv-coord-struct}
	(a) The coordinates~\eqref{eq:jet-coords} actually constitute a complete
	coordinate system on $J^rBM$, as can be seen from the inversion formulas
	\begin{align}
	\del_c g_{ab}
	&= 2 g_{d(a} \Gamma^d_{b)c} , \\
	\del_C g_{ab}
	&= 2 g_{d(a}\Gamma^d_{b)C}
	- \frac{2(|C|-1)}{|C|+1} Q_{ab,C}
	+ (\text{l.o.t}_{|C|-1})_{abC}
	\quad (|C|\ge 2) .
	\end{align}
	
	(b) The total coordinate and covariant derivatives act as follows
	(superscripted symmetrizations are performed later):
	\begin{align}
	\del_c g_{ab}
	&= 2 g_{d(a} \Gamma^d_{b)c} , \\
	\del_b \Gamma^a_{C}
	&= \Gamma^a_{bC}
	+ \frac{2}{|C|+1} Q^{a}{}_{b,C}
	+ \text{l.o.t}_{|C|-1}(g,\Gamma,Q)^a_{bC}
	\quad (|C|\ge 2) , \\
	\notag
	\nabla_c Q_{ab,C}
	&= Q_{ab,cC}
	+ \frac{2}{|C|+2} Q_{c(a,b)C}
	+ \frac{|C|}{|C|+2} Q_{(c_1c_2,c_3\cdots c_{|C|})abc} \\
	& \quad{}
	+ \text{l.o.t}_{|C|-1}(g,Q)_{c,ab,C}
	\quad (|C|\ge 2) .
	\end{align}
	In each case, $\text{l.o.t}_r$ depends on $g_{ab}$ and its derivatives
	only up to order $r$, and only via given coordinates when indicated.
\end{lem}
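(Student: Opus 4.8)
The final statement to prove is Lemma~\ref{lem:curv-coord-struct}, the structural result that the curvature coordinates~\eqref{eq:jet-coords} really form a complete coordinate system on $J^rBM$ and that the total coordinate/covariant derivatives act with the displayed leading-order structure. Here is my plan.

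\textbf{Overall strategy.} The plan is to proceed by induction on the jet order $r$ (equivalently on the multi-index size $|C|$), treating the metric sector first and the tensor sector afterward, since once the curvature coordinates are established for the metric the tensor coordinates $t^{\cdots}_{\cdots,C} = \nabla_{(C)} t^{\cdots}_{\cdots}$ are handled by essentially the same bookkeeping. For part (a), I would show that the ``naive'' jet coordinates $(\del_C g_{ab})$ and the curvature coordinates $(\Gamma^a_{bC}, Q_{ab,C})$ are related by a triangular (with respect to $|C|$) change of variables with invertible leading coefficient, which both proves they form a coordinate system and yields the inversion formulas displayed. The starting point is the classical fact that $\del_c g_{ab} = 2 g_{d(a}\Gamma^d_{b)c}$ (solve the definition of the Christoffel symbols) and that the second derivatives of $g$ can be repackaged into $(\Gamma^a_{bc}, \bar R_{abcd})$; then one differentiates these relations and symmetrizes. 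The key combinatorial identity is that symmetrizing $\del_{(D}\Gamma^a_{bc)}$ versus the full $\del_D\Gamma^a_{bc}$ differs precisely by a multiple of the (symmetrized) curvature, because the Riemann tensor is exactly the obstruction $\del_{[b}\Gamma^a_{c]d} + \cdots$ to $\Gamma$ being a gradient; the rational coefficients like $\tfrac{2(|C|-1)}{|C|+1}$ come from counting how many of the $|C|+2$ indices in $\del_C g_{ab}$ get ``used up'' by the antisymmetrization that produces curvature. I would extract these coefficients by a representation-theoretic / Young-symmetrizer argument (the decomposition of $\del_C g_{ab}$ into its totally symmetric part and its ``curvature'' part), which is cleaner than brute-force index chasing.

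\textbf{Part (b).} For the action of total and covariant derivatives I would again argue by induction on $|C|$. The relation $\del_c g_{ab} = 2g_{d(a}\Gamma^d_{b)c}$ is immediate. For $\del_b \Gamma^a_C$ I would start from $\del_b\Gamma^a_{cD} = \Gamma^a_{bcD}^{\text{(sym)}} + (\text{antisymmetric part})$, recognize the antisymmetric-in-$b$-and-first-$C$-index part as built from the contracted curvature $\nabla_{(D}\bar R{}^a{}_{(b c)}{}_{\cdots}$, hence from $Q$, up to lower-order terms involving products of $\Gamma$'s and lower $Q$'s (which get absorbed into $\text{l.o.t}_{|C|-1}$), and then track the combinatorial factor $\tfrac{2}{|C|+1}$ by the same counting as in (a). The formula for $\nabla_c Q_{ab,C}$ is the most involved: here $\nabla_c$ acting on a symmetrized curvature-derivative tensor produces the ``next'' coordinate $Q_{ab,cC}$ plus correction terms coming from (i) the commutator of $\nabla_c$ with the symmetrization over $C$ and (ii) the Bianchi-type symmetries $Q_{a(b,C)} = 0$ which force reshuffling of indices into the patterns $Q_{c(a,b)C}$ and $Q_{(c_1c_2,c_3\cdots)abc}$; each such reshuffling carries a definite rational weight $\tfrac{2}{|C|+2}$, $\tfrac{|C|}{|C|+2}$ computed from the Young-diagram combinatorics, and everything else (curvature times curvature, lower-order $Q$'s) lands in $\text{l.o.t}_{|C|-1}(g,Q)$. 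I would lean on the structural results of~\cite{at-report} (and~\cite{jentsch}) for the precise coefficients rather than re-deriving every one, citing them explicitly as stated in the excerpt.

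\textbf{Main obstacle.} The genuinely delicate part is keeping the combinatorial coefficients correct and proving that the ``leftover'' terms really are of lower jet order \emph{and} really do depend only on the indicated coordinates (e.g.\ only on $g$ and $Q$, not on $\Gamma$, in the $\nabla_c Q$ formula). This requires being careful that covariant differentiation does not re-introduce dependence on $\Gamma$ beyond what is hidden inside the connection coefficients themselves, and that all the index-symmetrization juggling respects the Young-type constraints $Q_{a(b,C)}=0$ and $\Gamma^a_{bC}=\Gamma^a_{(bC)}$. I expect to manage this by working representation-theoretically: decompose each $\del_C g_{ab}$ (resp.\ each derivative of $Q$) into irreducible $GL(n)$-pieces labeled by Young diagrams, identify which pieces are ``new coordinates'' and which are forced to vanish or to equal combinations of old coordinates, and read off the coefficients from the known branching multiplicities. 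This is exactly the technique underlying the references~\cite{at-report,at,jentsch}, and invoking those results for the numerical coefficients while reproducing the structural skeleton of the argument is, in my view, the right level of detail for this lemma.
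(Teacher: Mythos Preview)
Your proposal is reasonable and the strategy you outline (induction on $|C|$, triangular change of variables, Young-diagram bookkeeping for the coefficients) is exactly the machinery that underlies the references the paper invokes. There is no genuine gap.

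However, you should be aware that the paper's own ``proof'' of this lemma is considerably less ambitious than what you sketch: it consists entirely of pointers to specific equations in the unpublished technical report~\cite{at-report} (equations~(2.7), (2.17) for part~(a), and the structure equations~(2.19) and~(2.22) for part~(b)), with a footnote correcting an index-structure typo in one of those equations. The paper treats the lemma as a fact imported wholesale from~\cite{at-report} and~\cite{jentsch}, not as something to be re-derived. Your proposal, by contrast, attempts to reconstruct the skeleton of the argument and explain \emph{why} the rational coefficients and the l.o.t.\ structure come out as stated, while still deferring to those references for the final numerics. That is a more informative write-up than the paper's, and your identification of the delicate point --- that $\text{l.o.t}_{|C|-1}(g,Q)$ in the $\nabla_c Q$ formula must not re-introduce $\Gamma$-dependence, which is ultimately a statement about tensoriality of the covariant derivative --- is correct and is precisely what the structure equations in~\cite{at-report} encode. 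So: same approach, but you are filling in exposition that the paper outsources.
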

\begin{proof}
	The inversion formulas follow from equations~(2.7) and~(2.17)
	of~\cite{at-report}.
	
	The total derivative formulas follow directly%
		\footnote{Unfortunately, the last term on the first line of~(2.22)
		in~\cite{at-report} has the wrong index structure. It can be
		corrected by re-deriving the result from the proof of Theorem~2.6
		in~\cite{at-report}.} %
	from the \emph{structure equations}~(2.19), (2.19) and~(2.22)
	of~\cite{at-report}. 
\end{proof}

\begin{lem} \label{lem:xi-taylor}
	For any point $x\in M$, there exists a vector field $\xi \in
	\mathfrak{X}(M)$ such that $\xi^a(x)$, $\nabla_a\xi^b(x)$ and $\del_C
	\xi^a(x)$, with $|C|\ge 2$, can be selected arbitrarily.
\end{lem}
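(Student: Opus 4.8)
The plan is to work inside a single coordinate chart around $x$, realize the prescribed data as the $\infty$-jet at $x$ of a locally defined vector field, and then make it global by multiplying with a bump function. The only point requiring attention is that the prescribed first-order datum is the \emph{covariant} derivative $\nabla_a\xi^b(x)$ rather than the coordinate derivative $\del_a\xi^b(x)$; all higher-order data are plain coordinate derivatives $\del_C\xi^a(x)$, which are freely prescribable (subject only to symmetry in $C$) for a smooth vector field.

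Concretely, fix local coordinates $(x^a)$ on an open neighbourhood $U\ni x$, with associated Christoffel symbols $\Gamma^a_{bc}$ of $g$. Given target values $\xi^a(x)=v^a$, $\nabla_a\xi^b(x)=w^b_a$, and $\del_C\xi^a(x)=u^a_C$ for each symmetric multi-index $C$ with $|C|\ge 2$, I would set the coordinate first derivative to
\[
	\del_a\xi^b(x) := w^b_a - \Gamma^b_{ac}(x)\,v^c .
\]
Since $\nabla_a\xi^b = \del_a\xi^b + \Gamma^b_{ac}\xi^c$, this choice yields $\nabla_a\xi^b(x) = w^b_a$, and conversely any prescription of $w^b_a$ is attained this way. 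Hence the map from the coordinate jet $\bigl(\xi^a(x),\ \del_a\xi^b(x),\ \del_C\xi^a(x)\bigr)_{|C|\ge 2}$ to the data $\bigl(v^a, w^b_a, u^a_C\bigr)$ is lower triangular in the jet order with invertible diagonal blocks (the identity at orders $0$ and $\ge 2$, and the affine bijection above at order $1$), hence surjective. By Borel's lemma there is a smooth $\xi^a\in C^\infty(U)$ whose Taylor coefficients at $x$ are exactly $v^a$, the $\del_a\xi^b(x)$ just fixed, and $u^a_C$ for all $|C|\ge 2$; this defines $\xi\in\mathfrak{X}(U)$.

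Finally I would choose $\chi\in C_0^\infty(U)$ with $\chi\equiv 1$ on a neighbourhood of $x$ and extend $\chi\xi$ by zero to a vector field $\xi\in\mathfrak{X}(M)$ (keeping the same name). As $\chi\equiv 1$ near $x$, the $\infty$-jet of $\chi\xi$ at $x$ coincides with that of the local $\xi$, so $\xi^a(x)$, $\nabla_a\xi^b(x)$ and $\del_C\xi^a(x)$ are unaffected and equal the prescribed values. There is no genuine obstacle: the lemma is elementary, and the only things to be careful about are the first-order covariant-versus-coordinate bookkeeping described above and the observation that cutting off with $\chi$ does not disturb the jet at $x$.
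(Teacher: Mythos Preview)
Your argument is correct. The paper actually states this lemma without proof, so there is no ``paper's own proof'' to compare against; presumably the authors regarded the result as obvious once one unwinds the relation $\nabla_a\xi^b=\del_a\xi^b+\Gamma^b_{ac}\xi^c$, which is exactly the point you identify and handle. One minor remark: in the sole application (the proof of Theorem~\ref{thomas_rep_thm}) only finitely many derivatives of $\xi$ at $x$ enter, since $F$ is defined on $J^rBM$ for a fixed $r$, so a polynomial vector field times a bump function would already suffice and Borel's lemma is a slight overkill---but it does no harm.
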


\begin{thm} \label{thomas_rep_thm}
	Let $F \colon V_x^{\prime r} \sse J^rBM \to T^k\otimes T^{*l}M$ be a smooth bundle
	map that is defined on a $\mathrm{Diff}(M)$-invariant domain
	$V_x^{\prime r}$ and is $\mathrm{Diff}(M)$-equivariant, given by
	\begin{equation}
	F^{a_1\cdots a_k}_{b_1\cdots b_l}
	= F^{a_1\cdots a_k}_{b_1\cdots b_l}
	(x, g, \Gamma, Q, t)
	\end{equation}
	in adapted coordinates~\eqref{eq:jet-coords} on a chart $V^{r}_x
	\sse V^{\prime r}_x$. That is, given a diffeomorphism $\chi\colon M\to M$, we
	have $\chi^* \circ F = F \circ p^r\chi^*$, where on the left $\chi^*$ is
	the pullback along $\chi$ acting on the tensor bundle $T^k\otimes T^{*l}M$, while on
	the right $p^r\chi^*$ is the $r$-jet prolongation of the pullback along
	$\chi$ acting on the bundle $BM$ of background fields. Then, when
	restricted to a chart $V_x^r \sse V_x^{\prime r}$ covered by adapted
	coordinates~\eqref{eq:jet-coords}, $F$ must be expressible as
	\begin{equation}
	F^{a_1\cdots a_k}_{b_1\cdots b_l}
	= G^{a_1\cdots a_k}_{b_1\cdots b_l}
	(g, Q, t) ,
	\end{equation}
	where the function $G$ is equivariant with respect to the action of
	$GL(n)$ on its arguments and the action of $GL(n)$ on the fibers of
	$T^k\otimes T^{*l}M$.
\end{thm}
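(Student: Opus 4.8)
The plan is to read off the conclusion from the structure of the curvature coordinates~\eqref{eq:jet-coords}, using that they split the fibres of $J^rBM$ into a \emph{tensorial} part $(g_{ab},Q_{ab,C},t^{\cdots}_{\cdots,C})$, on which the subgroup $\mathrm{Diff}_x(M)$ of diffeomorphisms fixing a point $x\in M$ acts through the Jacobian $d\chi_x\in GL(n)$ by the natural tensor-density action, and a \emph{gauge} part $(x^a,\Gamma^a_{bC})$, which the subgroup of those $\chi$ with $d\chi_x=\mathrm{id}$ can move freely while the tensorial part stays fixed. First I would record the pointwise form of equivariance: for $\chi\in\mathrm{Diff}(M)$, evaluating $\chi^*\circ F=F\circ p^r\chi^*$ at a point $x$ with $\chi(x)=x$ gives
\[
	\chi^*\!\big(F(j^r_x\mathbf b)\big)=F\big(p^r\chi^*(j^r_x\mathbf b)\big)=F\big(j^r_x(\chi^*\mathbf b)\big),
\]
the left-hand $\chi^*$ being the pullback on $T^k\otimes T^{*l}M$ determined by $d\chi_x$. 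Because $V_x^{\prime r}$ is $\mathrm{Diff}(M)$-invariant by hypothesis, the entire $\mathrm{Diff}(M)$-orbit of any jet stays in the domain, so this identity may be applied without leaving $V_x^{\prime r}$.

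Next I would eliminate the explicit dependence of $F$ on $x^a$ and on $\Gamma^a_{bC}$. For $x^a$: taking diffeomorphisms of $M$ that agree to infinite order at a given point with a translation in the chart and equal the identity away from a small neighbourhood, one has $d\chi_x=\mathrm{id}$ and all coordinates $(g_{ab},\Gamma^a_{bC},Q_{ab,C},t)$ are transported unchanged while the base point moves, whence $F$ carries no explicit $x^a$-dependence. For $\Gamma^a_{bC}$: fixing $x$, I would pass to geodesic normal coordinates based at $x$ adapted to the frame $\{\del_{x^i}|_x\}$, effected by a diffeomorphism cut off to the identity away from $x$; it has $d\chi_x=\mathrm{id}$, so by the pointwise identity $F$ is unchanged, the coordinates $g_{ab},Q_{ab,C},t$ are unchanged at $x$ because they are the covariant-derivative coordinates and hence tensorial, and every $\Gamma^a_{bC}(x)$ becomes $0$ because in geodesic coordinates the totally symmetrized Christoffel derivatives vanish at the origin. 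That such a diffeomorphism exists and has these effects is the content of Lemmas~\ref{lem:xi-taylor} and~\ref{lem:curv-coord-struct}: by the structure equations the change of $\Gamma^a_{bC}$ along the flow of a vector field $\xi$ is $\del_{(bC)}\xi^a$ at top order plus lower-order corrections involving only already-fixed coordinates, so, using the arbitrary higher jets of $\xi$ supplied by Lemma~\ref{lem:xi-taylor}, the gauge subgroup acts transitively on the $\Gamma$-directions by induction on $|C|$. Consequently $F(x^a,g_{ab},\Gamma^a_{bC},Q_{ab,C},t)=F(x^a,g_{ab},0,Q_{ab,C},t)=:G(g,Q,t)$ on the chart $V_x^r$.

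Finally, I would obtain the $GL(n)$-equivariance of $G$. Any $u\in GL(n)$ (resp.\ $GL^+(n)$) is realized as $d\chi_x$ for some $\chi\in\mathrm{Diff}(M)$ (resp.\ $\mathrm{Diff}^+(M)$) fixing $x$: Lemma~\ref{lem:xi-taylor} reaches the identity component by exponentiation, and an orientation-reversing chart supplies the rest when orientation-reversing diffeomorphisms are admitted. Applying the pointwise identity to such a $\chi$, the left-hand side multiplies $G(g,Q,t)$ by the pullback action of $u$ on the fibre of $T^k\otimes T^{*l}M$, while the right-hand side equals $G$ evaluated at the $u$-transformed arguments — the tensor-density actions on $g_{ab}$, $Q_{ab,C}$ and $t$, with $g=|\det g_{ab}|$ acquiring the appropriate power of $\det u$ — since the $\Gamma$-coordinates, on which $G$ does not depend, have merely changed and everything else has transformed tensorially. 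This is exactly the asserted $GL(n)$- (resp.\ $GL^+(n)$-) equivariance of $G$, and $G$ is smooth because $F$ is.

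The main obstacle is the $\Gamma$-elimination step: one must verify rigorously that the gauge subgroup $\{d\chi_x=\mathrm{id}\}$ acts transitively on the $\Gamma^a_{bC}$-coordinate directions while leaving $(g_{ab},Q_{ab,C},t)$ pointwise fixed — that is, that every jet is gauge-equivalent to one in geodesic form without disturbing the curvature and background data. This relies entirely on the triangular structure of the coordinate change encoded in Lemmas~\ref{lem:curv-coord-struct} and~\ref{lem:xi-taylor}; a secondary and merely cosmetic point is the passage between $GL(n)$ and $GL^+(n)$ according to whether orientation-reversing diffeomorphisms are available.
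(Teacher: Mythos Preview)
Your proof is correct and reaches the same conclusion, but by a genuinely different route from the paper. The paper works entirely at the \emph{infinitesimal} level: it differentiates the equivariance condition along a vector field $\xi$ to obtain an identity of the form~\eqref{eq:inf-eqvar}, and then successively specializes the free jet data of $\xi$ supplied by Lemma~\ref{lem:xi-taylor}. Choosing $\xi^a(x)=0$, $\nabla_a\xi^b(x)=0$ with $\del_C\xi^a(x)$ arbitrary for $|C|\ge 2$ forces $(\del_\Gamma)_a^C F=0$ directly, without ever constructing a finite diffeomorphism; then $\xi^a(x)=0$, $\nabla_a\xi^b(x)=J^b_a$ gives the infinitesimal $\mathfrak{gl}(n)$-equivariance; finally $\xi^a(x)\ne 0$, $\del_a\xi^b(x)=0$ kills the $x$-dependence. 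You instead work with \emph{finite} diffeomorphisms: translations to remove $x$, passage to geodesic normal coordinates (cut off to the identity) to set all $\Gamma^a_{bC}(x)=0$, and diffeomorphisms realizing a prescribed $d\chi_x\in GL(n)$ for the equivariance. Your approach is the more classical and geometric one --- the normal-coordinates trick is exactly how older proofs of the Thomas theorem proceed --- and it has the virtue of making the role of the tensorial character of $(g,Q,t)$ transparent; the paper's infinitesimal approach avoids having to integrate the gauge action to a finite group element and is somewhat more mechanical to adapt to other natural bundles. Two small points to tighten in your write-up: (i) your displayed pointwise identity is stated only for $\chi(x)=x$, whereas your translation argument uses $\chi(x)\ne x$, so you should record the general form of the equivariance identity before invoking it; (ii) the fact you need for the $\Gamma$-elimination --- that in geodesic normal coordinates the symmetrized derivatives $\Gamma^a_{(bc,D)}$ vanish at the origin for all $|D|\ge 0$ --- deserves an explicit reference (it is precisely what the curvature-coordinate construction in Lemma~\ref{lem:curv-coord-struct} and the sources~\cite{at-report,at,jentsch} establish).
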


\begin{proof}
	The pullback $\chi^*F$ of a bundle map $F\colon J^rBM \to T^k\otimes T^{*l}M$ by a
	diffeomorphism $\chi\colon M \to M$ is defined by the identity $\chi^*
	\circ F = (\chi^*F) \circ p^r\chi^*$, which can be illustrated by the
	following commutative diagram:
	\begin{equation}
	\begin{tikzcd}
	T^k\otimes T^{*l}M \ar{d} \& T^k\otimes T^{*l}M \ar{l}{\chi^*} \ar{d}
	\&
	J^r BM \ar{d} \ar{l}{F} \ar{r}{p^r\chi^*} \& J^r BM \ar{d} \ar{r}{\chi^*F}
	\&
	T^k\otimes T^{*l}M \ar{d}
	\\
	M \ar{r}{\chi} \& M \ar{r}{=}
	\&
	M \& M \ar{l}{\chi} \ar{r}{=}
	\&
	M
	\end{tikzcd} ,
	\end{equation}
	where the first and last columns should be identified. The equivariance
	condition then just says that $\chi^*F = F$.
	
	Given a vector field $\xi \in \mathfrak{X}(M)$, we denote the
	corresponding $1$-parameter family of diffeomorphisms $s\mapsto
	\chi^\xi_s \in \mathrm{Diff}(M)$, meaning $\xi = \frac{\d}{\d s}
	\chi^\xi_s |_{s=0}$. The Lie derivative of a tensor field $t\colon M \to
	T^k\otimes T^{*l}M$ or a jet $\beta^r\colon J^rBM$ is defined in the usual way:
	\begin{align*}
	\Lie_\xi \beta^r
	&= \left. \frac{\d}{\d s} p^r (\chi^\xi_s)^* \circ \beta^r
	\circ \chi^\xi_s \right|_{s=0} \\
	\Lie_\xi t
	&= \left. \frac{\d}{\d s} (\chi^\xi_s)^* \circ t
	\circ \chi^\xi_s \right|_{s=0} .
	\end{align*}
	The Lie derivative with respect to a vector field $\xi$ of a section
	$\alpha \colon M \to AM$ of a diffeomorphism-natural bundle $AM\to M$
	defines a section that we denote $(\alpha, \Lie_\xi\alpha)\colon M\to
	VAM$, where $VAM\to AM \to M$ is the compound vertical tangent bundle of
	$AM\to M$. When $AM\to M$ is a vector or affine bundle, we can identify
	$VAM \cong (A\oplus A)M$ and consider the Lie derivative as a section
	$\Lie_\xi\alpha \colon M\to AM$.
	
	The infinitesimal version of the equivariance condition $(\chi^\xi_s)^*
	\circ F = F\circ p^r (\chi^\xi_s)^*$ for $F$ is then
	\begin{multline} \label{eq:inf-eqvar}
	\left. \frac{\d}{\d s} \left[
	(\chi^\xi_s)^* \circ F - F\circ p^r(\chi^\xi_s)^* \right]
	\circ \beta^r \circ \chi^\xi_s \right|_{s=0} \\
	= (F\circ \beta^r, \Lie_\xi (F\circ \beta^r))
	- VF\circ (\beta^r, \Lie_\xi\beta^r)
	= 0 ,
	\end{multline}
	for any section $\beta^r\colon M\to J^rBM$, and with $VF \colon VJ^rBM
	\to V(T^k\otimes T^{*l})$ is a bundle map given by the restriction of the tangent map
	of $F$, $TF \colon T(J^rBM) \to T(T^k\otimes T^{*l}M)$ to the vertical tangent
	bundle $VJ^rBM \subset T(J^rBM)$.
	
	For tensors, the Lie derivative has the following well-known form:
	\begin{align}
	(\Lie_\xi t)^{a_1\cdots a_k}_{b_1\cdots b_k}
	\notag
	&= \xi^c \del_c t^{a_1\cdots a_k}_{b_1\cdots b_k} \\
	\notag
	& \quad {}
	- (\del_{a'_1}\xi^{a_1}) t^{a'_1\cdots a_k}_{b_1\cdots b_l} - \cdots
	- (\del_{a'_k}\xi^{a_k}) t^{a_1\cdots a'_k}_{b_1\cdots b_l} \\
	& \quad {}
	+ (\del_{b_1}\xi^{b'_1}) t^{a_1\cdots a_k}_{b'_1\cdots b_l} + \cdots
	+ (\del_{b_l}\xi^{b'_l}) t^{a_1\cdots a_k}_{b_1\cdots b'_l}
	\label{eq:tens-rule-coord}
	\\
	\notag
	&= \xi^c \nabla_c t^{a_1\cdots a_k}_{b_1\cdots b_k} \\
	\notag
	& \quad {}
	- (\nabla_{a'_1}\xi^{a_1}) t^{a'_1\cdots a_k}_{b_1\cdots b_l} - \cdots
	- (\nabla_{a'_k}\xi^{a_k}) t^{a_1\cdots a'_k}_{b_1\cdots b_l} \\
	& \quad {}
	+ (\nabla_{b_1}\xi^{b'_1}) t^{a_1\cdots a_k}_{b'_1\cdots b_l} + \cdots
	+ (\nabla_{b_l}\xi^{b'_l}) t^{a_1\cdots a_k}_{b_1\cdots b'_l}
	.
	\label{eq:tens-rule}
	\end{align}
	where $\nabla_c$ is the Levi-Civita connection defined by the metric
	$g_{ab}$ (though the same formula also holds with any symmetric
	connection). We have specifically chosen the curvature coordinates
	\begin{equation}
		(x^a, \Gamma^a_{bC}, Q_{ab,C},
		t^{a_1\cdots a_{k_1}}_{b_1\cdots b_{l_1},C}, \ldots
		t^{a_1\cdots a_{k_N}}_{b_1\cdots b_{l_N},C})
	\end{equation}
	on $J^rBM$ so that the coordinate components of the Lie derivative of
	a holonomic $r$-jet, say the extension $j^r\beta$ of a section
	$\beta\colon M \to BM$, are given by
	(Lemma~\ref{lem:curv-coord-struct}(b))
	\begin{align}
	\label{eq:lie-Gamma}
	\Gamma^a_{C} \circ \Lie_\xi j^r\beta
	&= \del_C \xi^a
	+ \text{l.o.t}_{|C|-1;r+1}(\xi;g,\Gamma,Q)^a_C \circ j^{r+1}\beta , \\
	\label{eq:lie-g}
	g_{ab} \circ \Lie_\xi j^r \beta
	&= \left(g_{ca} \nabla_b \xi^{c} + g_{cb} \nabla_a \xi^{c}\right)
	\circ j^{1}\beta , \\
	\label{eq:lie-Q}
	Q_{ab,C} \circ \Lie_\xi j^r\beta
	\notag
	&= 2\nabla_{(a} \xi^{b'} Q_{b)b',C} \circ j^r\beta
	+ |C|\nabla_{(c_1|} \xi^{c'} Q_{ab,c'|c_2\ldots c_{|C|})} \circ j^r\beta \\
	\notag
	& \quad {}
	+ \xi^{c} \nabla_{c} (Q_{ab,C} \circ j^r\beta) \\
	\notag
	&= \left(
	2\nabla_{(a} \xi^{b'} Q_{b)b',C}
	+ |C|\nabla_{(c_1|} \xi^{c'} Q_{ab,c'|c_2\ldots c_{|C|})}
	\phantom{\frac{2}{|C|+1}} \right. \\
	\notag
	& \quad {}
	+ \xi^{c} Q_{ab,cC}
	+ \frac{2}{|C|+2} \xi^c Q_{c(a,b)C}
	+ \frac{|C|}{|C|+2} \xi^c Q_{(c_1c_2,c_2\cdots c_{|C|})abc} \\
	& \quad {} \left. \rlap{$\displaystyle\phantom{\frac{2}{|C|+1}}$}
	+ \xi^c \text{l.o.t}_r(g,Q)_{c,ab,C}
	\right) \circ j^{r+1}\beta , \\
	t^{a_1\cdots a_{k_i}}_{b_1\cdots b_{l_i}} \circ \Lie_\xi j^r\beta
	\notag
	&= \xi^c \nabla_c t^{a_1\cdots a_{k_i}}_{b_1\cdots b_{l_i}}
	+ \nabla_{b_1} \xi^{b_1'} t^{a_1\cdots a_{k_i}}_{b_1'\cdots b_{l_i}}
	+ \cdots
	+ \nabla_{b_{l_i}} \xi^{b_{l_i}'} t^{a_1\cdots a_{k_i}}_{b_1\cdots b_{l_i}'}
	\\
	& \quad {}
	- \nabla_{a_1'}\xi^{a_1} t^{a_1'\cdots a_{k_i}}_{b_1\cdots b_{l_i}}
	- \cdots
	- \nabla_{a_{k_i}'}\xi^{a_{k_i}} t^{a_1\cdots a_{k_i}'}_{b_1\cdots b_{l_i}},
	\end{align}
	where $\text{l.o.t}_{|C|-1;r+1}(\xi;g,\Gamma,Q)$ stands for terms that
	may only involve coordinates on $J^{r+1}BM$ and derivatives of $\xi^a$
	up to order $|C|-1$, while $\text{l.o.t}_r(g,Q)$ does not depend on
	$\Gamma$ or $\xi$ and depends equivariantly on the $g$ and $Q$
	coordinates on $J^rBM$. Let us also introduce the following notation for
	the components of the vertical tangent map $VF$, with $(\beta^r,
	\dot{\beta}^r) \colon M \to VJ^rBM$:
	\begin{multline} \label{eq:inf-eqvar-rhs}
	VF \circ(\beta^r, \dot{\beta}^r)
	= \left(F\circ\beta^r, (g_{ab}\circ \dot{\beta}^r)
	[(\del_g)^{ab} F] \circ \beta^r \right.\\
	+ (\Gamma^{a}_{bC} \circ \dot{\beta}^r)
	[(\del_\Gamma)_{a}^{bC} F] \circ \beta^r
	+ (Q_{ab,C} \circ \dot{\beta}^r)
	[(\del_{Q})^{abC} F] \circ \beta^r \\ \left. {}
	+ (t^{a_1\cdots a_{k_i}}_{b_1\cdots b_{l_i},C} \circ \dot{\beta}^r)
	[(\del_t)_{a_1\cdots a_{k_i}}^{b_1\cdots b_{l_i}C} F]
	\right) .
	\end{multline}
	When $\dot{\beta}^r = \Lie_\xi j^r\beta$, the compositions with
	$\dot{\beta}^r$ should be expanded using Equations~\eqref{eq:lie-Gamma},
	\eqref{eq:lie-g} and~\eqref{eq:lie-Q}. What is important to note is that
	all the resulting terms, with the exception of those proportional to
	$(\del_\Gamma)_{a}^{bC}({\cdots})$, will be proportional to either
	$\xi^a$ or $\nabla_a \xi^b$.
	
	On the other hand, translating the tensor Lie derivative to covariant
	derivatives, we have
	\begin{multline} \label{eq:inf-eqvar-lhs}
	\Lie_\xi (F\circ j^r\beta)^{a_1\cdots a_k}_{b_1\cdots b_l}
	= \xi^{a'} \nabla_{a'} (F^{a_1\cdots a_k}_{b_1\cdots b_l} \circ j^r\beta) \\
	- (\nabla_{a'_1}\xi^{a_1}) F^{a'_1\cdots a_k}_{b_1\cdots b_l} \circ j^r\beta - \cdots
	- (\nabla_{a'_k}\xi^{a_k}) F^{a_1\cdots a'_k}_{b_1\cdots b_l} \circ j^r\beta \\
	+ (\nabla_{b_1}\xi^{b'_1}) F^{a_1\cdots a_k}_{b'_1\cdots b_l} \circ j^r\beta + \cdots
	+ (\nabla_{b_l}\xi^{b'_l}) F^{a_1\cdots a_k}_{b_1\cdots b'_l} \circ j^r\beta
	.
	\end{multline}
	What is important to note is that each term is proportional to either
	$\xi^a$ or $\nabla_a \xi^b$. In principle, we could expand the $\xi^{a'}
	\nabla_{a'} (F^{a_1\cdots a_k}_{b_1\cdots b_l} \circ j^r\beta)$ term
	further, by using the chain rule. However, the chain rule here cannot be
	written solely in terms of covariant derivatives and the explicit
	expression in coordinate derivatives leads to rather complicated
	formulas that will not be immediately necessary.
	
	Then, choosing $\xi^a(x) = 0$ and $\nabla_a\xi^b(x) = 0$
	(Lemma~\ref{lem:xi-taylor}), the infinitesimal equivariance
	condition~\eqref{eq:inf-eqvar}, expanded using
	Equations~\eqref{eq:inf-eqvar-lhs} and~\eqref{eq:inf-eqvar-rhs}, by
	eliminating all terms proportional to $\xi^a$ or $\nabla_a\xi^b$ reduces
	to
	\begin{equation}
	\left(\del_{C} \xi^a + \text{l.o.t}_{|C|-1;r+1}(\xi;g,\Gamma,Q)^a_C\right)
	(\del_\Gamma)_a^{C} F^{a_1\cdots a_k}_{b_1\cdots b_l} \circ
	j^{r+1}\beta(x) = 0 ,
	\end{equation}
	for arbitrary $j^{r+1}\beta$ at $x\in M$. Since $\del_C \xi^a(x)$ can
	still be chosen arbitrarily, for $|C|\ge 2$, even with $\xi^a(0) = 0$
	and $\nabla_a \xi^b(x) = 0$ (Lemma~\ref{lem:xi-taylor}), we find that
	$(\del_\Gamma)_a^{C} F^{a_1\cdots a_k}_{b_1\cdots b_l}(x,g,\Gamma,Q) = 0$
	at any value of its arguments. In other words, we have part of the
	desired conclusion:
	\begin{equation}
	F^{a_1\cdots a_k}_{b_1\cdots b_k}(x,g,\Gamma,Q,t)
	= H^{a_1\cdots a_k}_{b_1\cdots b_k}(x,g,Q,t) ,
	\end{equation}
	with $H\colon J^rBM \to T^k\otimes T^{*l}M$ still $\mathrm{Diff}(M)$-equivariant. It
	remains to show that $H$ is a $GL(n)$-equivariant function of its
	tensorial arguments at any $x\in M$ and that the dependence on $x^a$ is
	trivial.
	
	Since the dependence on the $\Gamma$-coordinates is trivial, choosing
	$\xi^a(x) = 0$ and $\nabla_a \xi^b(x) = J_a^b \in GL(n)$ at some point
	$x\in M$, the infinitesimal equivariance condition~\eqref{eq:inf-eqvar},
	again expanded using Equations~\eqref{eq:inf-eqvar-lhs}
	and~\eqref{eq:inf-eqvar-rhs}, simplifies to
	\begin{multline}
	- J_{a'_1}^{a_1} H^{a'_1\cdots a_k}_{b_1\cdots b_l} - \cdots
	- J_{a'_k}^{a_k} H^{a_1\cdots a'_k}_{b_1\cdots b_l}
	+ J_{b_1}^{b'_1} H^{a_1\cdots a_k}_{b'_1\cdots b_l} + \cdots
	+ J_{b_l}^{b'_l} H^{a_1\cdots a_k}_{b_1\cdots b'_l}
	\\
	= \left(2 J^{b'}_{(a} Q_{b)b',C}
	+ |C| J^{c'}_{(c_1|} Q_{ab,c'|c_2\cdots c_{|C|})} \right)
	(\del_Q)^{abC} H^{a_1\cdots a_k}_{b_1\cdots b_l}
	\\
	+ \left(g_{ca} J_b^c + g_{cb} J_a^c\right)
	(\del_g)^{ab} H^{a_1\cdots a_k}_{b_1\cdots b_l}
	\\
	+ \Big(
	J_{(c_1|}^{c'} t^{a_1\cdots a_{k_i}}_{b_1\cdots b_{l_i},c'|c_2\cdots c_{|C|})}
	- J_{a_1'}^{a_1} t^{a_1'\cdots a_{k_i}}_{b_1\cdots b_{l_i},C}
	-\cdots -J_{a_{k_i}'}^{a_{k_i}} t^{a_1\cdots a_{k_i}'}_{b_1\cdots b_{l_i},C}
	\\
	+ J_{b_1}^{b_1'} t^{a_1\cdots a_{k_i}}_{b_1'\cdots b_{l_i},C}
	+\cdots + J_{b_{l_i}}^{b_{l_i}'} t^{a_1\cdots a_{k_i}}_{b_1\cdots b_{l_i}',C}
	\Big)
	(\del_t)_{a_1\cdots a_{k_i}}^{b_1\cdots b_{l_i}C}
	H^{a_1\cdots a_k}_{b_1\cdots b_l}
	,
	\end{multline}
	where $H$ is seen as a function on the product of the tensor bundles of
	appropriate ranks and index structures. We were justified by cancelling
	the composition with $(x, g, Q, t) \circ j^r\beta$ because $\beta$
	can be chosen so that these coordinate components have arbitrary values
	with respect to the corresponding tensor type. The resulting identity is
	precisely the infinitesimal version of the $GL(n)$-equivariance
	condition~\cite{kms,slovak} of $H$ at $x\in M$, with $x$ arbitrary.
	
	Finally, we need to show that $H^{a_1\cdots a_k}_{b_1\cdots b_l}(x, g,
	Q, t)$ is actually independent of $x^a$ in any adapted coordinate system
	on $J^rBM$ induced from coordinates $(x^a)$ on $M$. We apply the
	equivariance condition~\eqref{eq:inf-eqvar} to $H$ with an arbitrary
	choice of $\xi^a(x) \ne 0$ and $\del_a \xi^b(x) = 0$ at some $x\in M$.
	In this case, according to~\eqref{eq:tens-rule-coord}, we can replace
	the action of $\Lie_\xi$ on tensors at $x\in M$ by the derivative
	operator $\xi^{a'} \del_{a'}$ in adapted coordinates induced by the
	coordinates $(x^a)$:
	\begin{multline}
	(\xi^{a'} \del_{a'} g_{ab}\circ j^r\beta)
	(\del_g)^{ab} H^{a_1\cdots a_k}_{b_1\cdots b_l} \circ j^r\beta
	\\
	+ (\xi^{a'} \del_{a'} Q_{ab,C} \circ j^r\beta)
	(\del_Q)^{abC} H^{a_1\cdots a_k}_{b_1\cdots b_l} \circ j^r\beta
	\\
	+ \left(\xi^{a'}\del_{a'} t^{a_1\cdots a_{k_i}}_{b_1\cdots b_{l_i},C}\right)
	(\del_t)_{a_1\cdots a_{k_i}}^{b_1\cdots b_{l_i}C}
	H^{a_1\cdots a_k}_{b_1\cdots b_l} \circ j^r\beta
	\\
	= \xi^{a'} \del_{a'}|_{g,Q,t=\text{const.}}
	H^{a_1\cdots a_k}_{b_1\cdots b_l} \circ j^r\beta
	+ (\xi^{a'} \del_{a'} g_{ab}\circ j^r\beta)
	(\del_g)^{ab} H^{a_1\cdots a_k}_{b_1\cdots b_l} \circ j^r\beta
	\\
	+ (\xi^{a'} \del_{a'} Q_{ab,C} \circ j^r\beta)
	(\del_Q)^{abC} H^{a_1\cdots a_k}_{b_1\cdots b_l} \circ j^r\beta
	\\
	+ \left(\xi^{a'}\del_{a'} t^{a_1\cdots a_{k_i}}_{b_1\cdots b_{l_i},C}\right)
	(\del_t)_{a_1\cdots a_{k_i}}^{b_1\cdots b_{l_i}C}
	H^{a_1\cdots a_k}_{b_1\cdots b_l} \circ j^r\beta
	.
	\end{multline}
	Cancelling the common terms from both sides of the above identity, we
	obtain the condition
	\begin{equation}
	\xi^{a'} \del_{a'}|_{g,Q=\text{const.}}
	H^{a_1\cdots a_k}_{b_1\cdots b_l} \circ j^r\beta = 0.
	\end{equation}
	Because the choices of $x\in M$, $\xi^a(x)$ and $j^r\beta(x)$ were all
	arbitrary, we can then conclude that $H^{a_1\cdots a_k}_{b_1\cdots
		b_l}(x, g, Q, t) = G^{a_1\cdots a_k}_{b_1\cdots b_l}(g, Q, t)$, in an
	arbitrary adapted coordinate system on $V^r_x\subset J^rBM$, for some
	function $G\colon V^r_x\subset J^rBM \to T^k\otimes T^{*l}M$.
\end{proof}

\section{Invariant theory} \label{section_equivar}
The goal of this section is to state and prove the Equivariance
Lemma~\ref{lem_gen_equiv}, which generalizes some results proven
in~\cite[Sec.2.6]{KM16}. This Lemma is used in the proof of our main
Theorem~\ref{thm_uniqueness} to characterize all smooth
$GL(n)$-equivariant (resp.~$GL^+(n)$-equivariant, if we restrict
ourselves to transformations that preserve spacetime orientation)
tensor-valued maps that depend on a Lorentzian metric and any number of
tensorial arguments.

The main difference with the previous weaker~\cite[Lem.2.8]{KM16} is the
allowed dependence on other tensors besides the metric. As a result of
this generalization, the final characterization is a bit more
complicated. In particular, while any such equivariant map is still
polynomial in the metric $g$, its inverse $g^{-1}$ and possibly the
Levi-Civita tensor $\varepsilon(g)$, it may depend on the additional
tensor arguments $z$ in two different ways. First, being tensor-valued,
any such equivariant may will be \emph{polynomially} and
\emph{covariantly} constructed from $g$, $g^{-1}$, $\varepsilon(g)$ and
the tensor components of $z$, but the coefficients in these polynomial
will be allowed to depend in an essentially arbitrary \emph{smooth} way
on invariant scalar polynomials built out of $g$, $g^{-1}$,
$\varepsilon(g)$ and the tensor components of $z$.

The precise statements and proofs of these results depend on some
fundamental notions and facts from \emph{classical invariant theory} of
the $GL(n)$ and $O(1,n-1)$ (resp.~$GL^+(n)$ and $SO(1,n-1)$ in the
oriented case) groups. Invariant theory, which studies invariants of
linear representations of groups and other related topics) is a highly
developed subject (we will only mention~\cite{procesi}
and~\cite{goodman-wallach} as an introduction to the literature), but
the majority of the literature, especially at the introductory level,
focuses on polynomial invariants on representations of complex algebraic
groups. Thus, it is not always easy to locate some (even classical)
results in the context of real Lie groups and smooth (rather than
polynomial) invariants. For the convenience of the reader, we summarize
the relevant notions and results below and, when possible, try to
provide reasonably concise and elementary proofs that are not easy to
extract from the literature.

In the following we will use the one point space $\ast\cong\mathbb{R}^0$
with the trivial action of $GL(n)$ of any of its subgroup thereon.
\begin{defn}
	Let $X$ and $Y$ be spaces carrying actions of the group $G$, respectively $\rho^{(X)}_u \colon X\to X$ and $\rho^{(Y)}_u \colon Y \to Y$ for $u\in G$, in terms of bijective maps resp. $X\rightarrow X$ and $Y\rightarrow Y$. A map $f\colon X\rightarrow Y$ is said to be \textbf{equivariant} if it commutes with the action of $G$:
	\begin{equation*}
	f\circ\rho_u^{(X)}=\rho_u^{(Y)}\circ f,\quad \text{for every $u\in G$.}
	\end{equation*}
	In the spacial case $Y = \mathbb{R}$ carrying the trivial
	representation, an equivariant $f\colon X \to \mathbb{R}$ is called a
	\textbf{(scalar) invariant}. We denote the space of all scalar
	invariants by $\mathcal{S}_X$. When $X$ is a vector space, we denote
	the subspace of \textbf{(scalar) polynomial invariants} by
	$\mathcal{P}_X \subseteq \mathcal{S}_X$. The subspace $\mathcal{P}^k_X
	\subset \mathcal{P}_X$ consists of all homogeneous polynomials of
	degree $k$.
\end{defn}

With the above definitions, it is easy to establish a relation between
scalar invariants and equivariant maps for linear group representations
by the following obvious
\begin{prop} \label{prop_equiv_inv}
	Let $X$ and $Y$ be finite dimensional vector spaces with linear representations of the
	group $G$, and denote by $Y^*$ the dual linear of $Y$ equipped with the contragredient representation of
	$G$. If $f\colon X \to Y$ is an equivariant
	map, then $f^*(x,y^*) := y^*\cdot f(x)$ is a scalar invariant
	$f^*\colon X\times Y^* \to \mathbb{R}$. If $h\colon X\times Y^* \to
	\mathbb{R}$ is a scalar invariant, then $\frac{\del h}{\del
	y^*}|_{y^*=0} \colon X \to Y$ is an equivariant map. Moreover, for any
	equivariant map $f\colon X\to Y$, $\frac{\del f^*}{\del y^*}|_{y^*=0}
	= f$.
\end{prop}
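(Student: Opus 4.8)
The plan is to verify the three assertions of the proposition in order by direct computation; the statement is indeed ``obvious'' in the sense that no deep tool is needed, and the only genuine care required is bookkeeping with the contragredient action on $Y^*$ and the canonical identification $Y^{**}\cong Y$. Throughout I would use the defining relation of the contragredient representation, namely that for $u\in G$ and $y^*\in Y^*$ the functional $\rho^{(Y^*)}_u y^*$ acts by $(\rho^{(Y^*)}_u y^*)(y)=y^*\bigl(\rho^{(Y)}_{u^{-1}}y\bigr)$ for all $y\in Y$.

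For the first claim, given an equivariant $f\colon X\to Y$, I would simply compute, for any $u\in G$, $x\in X$, $y^*\in Y^*$, that $f^*\bigl(\rho^{(X)}_u x,\rho^{(Y^*)}_u y^*\bigr)=\bigl(\rho^{(Y^*)}_u y^*\bigr)\bigl(f(\rho^{(X)}_u x)\bigr)$; then use equivariance of $f$ to replace $f(\rho^{(X)}_u x)$ by $\rho^{(Y)}_u f(x)$, and the defining relation above to rewrite this as $y^*\bigl(\rho^{(Y)}_{u^{-1}}\rho^{(Y)}_u f(x)\bigr)=y^*(f(x))=f^*(x,y^*)$. Hence $f^*$ is a scalar invariant.

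For the second claim, writing $D_2h(x,y^*)\colon Y^*\to\mathbb R$ for the partial derivative of $h$ in its second argument, I note that $D_2h(x,0)$ is a linear functional on $Y^*$, hence canonically an element of $Y^{**}\cong Y$, which I denote $g(x)$. Differentiating the invariance identity $h\bigl(\rho^{(X)}_u x,\rho^{(Y^*)}_u y^*\bigr)=h(x,y^*)$ with respect to $y^*$ at $y^*=0$, and using the chain rule together with the linearity of $\rho^{(Y^*)}_u$ (so that its derivative equals itself and $\rho^{(Y^*)}_u 0=0$), gives $D_2h\bigl(\rho^{(X)}_u x,0\bigr)\circ\rho^{(Y^*)}_u=D_2h(x,0)$. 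Evaluating both sides on an arbitrary $y^*$ and translating back through $Y^{**}\cong Y$ and the defining relation for the contragredient action, this reads $y^*\bigl(\rho^{(Y)}_{u^{-1}}g(\rho^{(X)}_u x)\bigr)=y^*\bigl(g(x)\bigr)$ for every $y^*$; since $y^*$ is arbitrary, $g(\rho^{(X)}_u x)=\rho^{(Y)}_u g(x)$, so $g=\partial h/\partial y^*\big|_{y^*=0}$ is equivariant.

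For the third claim, since $f^*(x,y^*)=y^*(f(x))$ is already linear in $y^*$, its partial derivative at $y^*=0$ is the functional $y^*\mapsto y^*(f(x))$, which under $Y^{**}\cong Y$ is exactly $f(x)$; thus $\partial f^*/\partial y^*\big|_{y^*=0}=f$. The whole argument is elementary, and the only place where one must be attentive is in tracking the inverse in the contragredient action so that the cancellations in the first two claims come out correctly, and in not conflating $D_2h(x,0)\in Y^{**}$ with its image in $Y$ before the identification is explicitly invoked; beyond that there is no real obstacle.
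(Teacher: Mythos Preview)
Your proof is correct. The paper does not actually supply a proof of this proposition---it merely introduces it as ``obvious''---so your direct verification via the defining relation of the contragredient representation and the canonical identification $Y^{**}\cong Y$ is exactly the intended elementary argument.
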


\begin{defn}
	\label{tensor_rep}
	Let $M_n^p$ be the space of $p$-multilinear forms on $\mathbb{R}^n$ and consider the natural linear action of $GL(n)$ thereon. Denote by ${M^p_n}^*$ the dual of $M_n^p$, with the contragredient $GL(n)$ representation on it. Let $T$ be a finite-dimensional real vector space carrying a representation of $GL(n)$.
	\begin{enumerate}
		\item If $T$, with respect to some linear embedding $T\hookrightarrow M_n^p \otimes {M_n^q}^*$, is invariant under the action of $GL(n)$, and if (the representation carried by) $T$ is the restriction of the action of $GL(n)$ on $M_n^p \otimes {M_n^q}^*$, then  $T$ is called \textbf{tensor representation} of $GL(n)$. We call $(p,q)$ the \textbf{(covariant, contravariant) tensor rank} of $T$ and $p+q$ the \textbf{total tensor rank} of $T$.
		\item If $T$ is as in 1.,\ but the action of $GL(n)\ni u\mapsto\rho(u)$ on $T$ is given by a tensor representation up to a multiplication by $\left|\det u\right|^s$, then $T$ is called \textbf{tensor density representation} of $GL(n)$. We call $s$ the tensor \textbf{weight} of $T$.
		\item Denote by $\eta \in M_n^2$ is the standard Minkowski metric with signature $({-}{+}\cdots{+})$, and by $\epsilon\in M_n^n$ the standard antisymmetric Levi-Civita tensor. The orthogonal subgroup $O(1,n-1) \subset GL(n)$ (resp.~$SO(1,n-1) \subset GL^+(n)$) is the stabilizer subgroup of $\eta$ under the action on $M_n^2$. A \emph{tensor (density) representation} of the orthogonal group is a restriction of a tensor density representation of the general linear group.
	\end{enumerate}
\end{defn}

\begin{rem} \label{rem_tens_orthog}
Clearly, since for any $u\in O(1,n-1)$, $\left|\det u\right|=1$ and
$(u^{-1})^T = \eta u \eta^{-1}$ in the fundamental representation, the
restriction of any two tensor density representations of $GL(n)$ to
$O(1,n-1)$ or $SO(1,n-1)$ are linearly equivalent as long as their total
tensor rank is the same. So it is sufficient to talk only about
\emph{tensor} (rather than \emph{tensor density}) representations of
these subgroups. 
\end{rem}

\begin{rem} \label{rem_reductive}
Below, some results about a group $G$ and its representations require as
a hypothesis that $G$ be \emph{reductive}. There are several different
flavors of reductive groups (cf.~\cite[Sec.7.3]{procesi}), not all of
them being equivalent, with different ones serving as natural hypotheses
for different results. The general property that they share is that each
representation from a certain class is completely reducible
(\emph{i.e.},\ no reducible but indecomposable representations may
occur). For the sake of uniformity, we specialize all results stated
below to \emph{linearly reductive} groups, even if the original result
could be stated under looser hypotheses. First, note that a \emph{real
(complex) algebraic} group is a subgroup of $GL(n;\mathbb{R})$
($GL(n;\mathbb{C})$), for some $n$, that is also a real (complex)
algebraic subvariety (it is defined by polynomial equations). A
\emph{real (complex) linearly reductive} group $G$ is a real (complex)
algebraic group such that each real (complex) finite dimensional
rational representation of $G$ is completely reducible. Here
\emph{polynomial} and \emph{rational} mean with respect to the matrix
elements of the embedding of $G$ into $GL(n;\mathbb{R})$
($GL(n;\mathbb{C})$). Obviously, any real algebraic group gives rise to
a complex algebraic group, its \emph{complexification}, simply by
extending the defining polynomial equations from $GL(n;\mathbb{R})$ to
$GL(n;\mathbb{C})$. \emph{A priori}, the property of being reductive is
different for a real algebraic group and its complexification.
Fortunately, we only need to appeal to such hypotheses for the real
orthogonal groups $O(1,n-1) := O(1,n-1;\mathbb{R})$ and $SO(1,n-1) :=
SO(1,n-1;\mathbb{R})$, both of which are known to be linearly reductive,
and so are their complexifications $O(1,n-1;\mathbb{C})$ and
$SO(1,n-1;\mathbb{C})$ (see~\cite[Sec.7.3.2]{procesi},
\cite[Sec.5.2]{rich-slo}). Unless explicitly mentioned, below we always
refer to real groups and their representations on real vector spaces.
\end{rem}

\begin{defn} \label{def_equivar_tens}
Let $L_n \subset M^2_n$ denote the space of Lorentzian bilinear forms
(non-degenerate, with signature $({-}{+}\cdots{+})$), and let it inherit
the natural action of $GL(n)$ (resp.~$GL^+(n)$). Let $Z = \bigoplus_j
Z_i$ and $T = \bigoplus_j T_j$ be finite sums of tensor density
representations of $GL(n)$ (resp.~$GL^+(n)$). We will refer to a smooth
equivariant map
\begin{equation}
	\tau\colon L_n \times Z \to T
\end{equation}
as a \textbf{$GL(n)$-equivariant tensor density}
(resp.~\textbf{$GL^+(n)$-equivariant tensor densities}).
The space of $GL(n)$-equivariant tensor densities will be denoted by
$\mathcal{E}_{Z,T}$. The space of $GL^+(n)$-equivariant tensor densities
will be denoted by $\tilde{\mathcal{E}}_{Z,T}$. In the special case when
$T = \mathbb{R}$ carries the trivial representation, we call
$\mathcal{S}_{Z} := \mathcal{E}_{Z,\mathbb{R}}$
(resp.~$\tilde{\mathcal{S}}_{Z} := \tilde{\mathcal{E}}_{Z,\mathbb{R}}$)
the space of \textbf{scalar invariants}.
\end{defn}

\begin{defn} \label{def_isotropic}
Let $Z = \bigoplus_i Z_i$ and $T = \bigoplus_j T_j$ be finite sums of
tensor representations of $O(1,n-1)$ (resp.~$SO(1,n-1)$). We will refer
to a smooth equivariant map
\begin{equation}
	\tau \colon Z \to T
\end{equation}
as the space of \textbf{$O(1,n-1)$-isotropic tensors}
(resp.~\textbf{$SO(1,n-1)$-isotropic tensors}).
The space of $O(1,n-1)$-isotropic tensors will be denoted by
$\mathcal{E}_{Z,T}$. The space of $SO(1,n-1)$-isotropic tensors will be
denoted by $\tilde{\mathcal{I}}_{Z,T}$.
\end{defn}

The above definitions can be contrasted with the Definitions~2.6--7
of~\cite{KM16}. There, the simpler notion equivariant and isotropic
tensors did not allow for dependence on the extra parameter space $Z$
and use the simpler notations $\mathcal{E}_{T} \cong
\mathcal{E}_{\ast,T}$ and $\mathcal{I}_{T} \cong \mathcal{I}_{\ast,T}$,
where $\ast = \mathbb{R}^0$ is the 1-point space or equivalently the
trivial vector space (with complete analogy in the oriented case).

\begin{prop} \label{prop_equiv_ext}
With the notation of Definitions~\ref{def_equivar_tens}
and~\ref{def_isotropic}, the space of equivariant tensor densities
$\mathcal{E}_{Z,T}$ (resp.~$\tilde{\mathcal{E}}_{Z,T}$) is isomorphic to
the space of isotropic tensors $\mathcal{I}_{Z,T}$
(resp.~$\tilde{\mathcal{I}}_{Z,T}$).
\end{prop}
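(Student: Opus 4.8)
The plan is to exhibit an explicit mutually inverse pair of maps between $\mathcal{E}_{Z,T}$ (resp.\ $\tilde{\mathcal{E}}_{Z,T}$) and $\mathcal{I}_{Z,T}$ (resp.\ $\tilde{\mathcal{I}}_{Z,T}$) by ``gauging away'' the Lorentzian metric argument using the transitivity of the $GL(n)$ action on $L_n$. The key geometric fact is that $GL(n)$ acts transitively on the space $L_n$ of Lorentzian bilinear forms, with the stabilizer of the standard Minkowski form $\eta$ being exactly $O(1,n-1)$ (and similarly $GL^+(n)$ acts transitively on $L_n$ with stabilizer $SO(1,n-1)$, since one can always compose with an orientation-reversing element that also fixes $\eta$ — or argue that $L_n$ is connected so the $GL^+(n)$-orbit of $\eta$ is already all of $L_n$). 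Concretely, for a given $g \in L_n$ choose $u_g \in GL(n)$ with $u_g \cdot \eta = g$; this cannot be done globally smoothly, but it can be done locally smoothly, or one can use the polar-type decomposition to make a canonical choice.

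First I would define, for $\tau \in \mathcal{E}_{Z,T}$, the associated isotropic tensor $\iota(\tau)\colon Z \to T$ by $\iota(\tau)(z) := \tau(\eta, z)$, restricting the action to the stabilizer $O(1,n-1)$ of $\eta$; equivariance of $\tau$ under $GL(n)$ immediately restricts to $O(1,n-1)$-equivariance of $\iota(\tau)$, and smoothness is inherited. Conversely, for $\sigma \in \mathcal{I}_{Z,T}$ I would define $\epsilon(\sigma)\colon L_n \times Z \to T$ by $\epsilon(\sigma)(g,z) := u_g \cdot \sigma\big(u_g^{-1}\cdot z\big)$, where $u_g \in GL(n)$ is any element with $u_g \cdot \eta = g$. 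The crucial check is that this is \emph{well-defined}, i.e.\ independent of the choice of $u_g$: if $u_g' = u_g h$ for $h \in O(1,n-1)$, then $u_g' \cdot \sigma((u_g')^{-1}\cdot z) = u_g h \cdot \sigma(h^{-1} u_g^{-1} \cdot z) = u_g \cdot \big(h \cdot \sigma(h^{-1}\cdot(u_g^{-1}\cdot z))\big) = u_g \cdot \sigma(u_g^{-1}\cdot z)$ by the $O(1,n-1)$-equivariance of $\sigma$. One then verifies $GL(n)$-equivariance of $\epsilon(\sigma)$ directly from the definition, and that $\iota$ and $\epsilon$ are mutual inverses: $\iota(\epsilon(\sigma))(z) = \epsilon(\sigma)(\eta,z) = u_\eta \cdot \sigma(u_\eta^{-1}\cdot z) = \sigma(z)$ taking $u_\eta = \mathrm{id}$, and $\epsilon(\iota(\tau))(g,z) = u_g\cdot \tau(\eta, u_g^{-1}\cdot z) = \tau(u_g\cdot\eta, z) = \tau(g,z)$ using equivariance of $\tau$. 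Linearity of both maps is evident, giving the claimed isomorphism. The oriented case is handled identically using $GL^+(n)$ and $SO(1,n-1)$, observing that $L_n$ is a single $GL^+(n)$-orbit of $\eta$ as well (since any $u$ with $u\cdot\eta = g$ can be replaced by $u\,r$ for an orientation-reversing $r$ fixing $\eta$, which exists in $O(1,n-1)$, so that $ur \in GL^+(n)$ when needed).

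The main obstacle — really the only subtle point — is the smoothness of $\epsilon(\sigma)$ as a function of $g \in L_n$, since no global smooth section $g \mapsto u_g$ of the orbit map $GL(n) \to L_n$ exists (the bundle $GL(n) \to L_n \cong GL(n)/O(1,n-1)$ is in general nontrivial). I would resolve this either (i) by working locally: around any $g_0 \in L_n$ there is a smooth local section $g \mapsto u_g$ of the principal $O(1,n-1)$-bundle $GL(n)\to L_n$, the formula $\epsilon(\sigma)(g,z) = u_g\cdot\sigma(u_g^{-1}\cdot z)$ is smooth there, and the well-definedness check above shows these local expressions patch to a globally well-defined smooth map; or (ii) by exhibiting a canonical smooth choice, e.g.\ writing $g = u_g^T \eta\, u_g$ via a Gram–Schmidt / Cholesky-type factorization adapted to the Lorentzian signature, which depends smoothly on $g$ on all of $L_n$ (or at least on each connected piece, and $L_n$ is connected). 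Either way, the patching argument using equivariance of $\sigma$ guarantees smoothness globally, completing the proof. All remaining verifications are the routine bookkeeping of unwinding the equivariance identities, which I would not write out in full detail.
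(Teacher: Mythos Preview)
Your proposal is correct and follows essentially the same approach as the paper: restrict $\tau$ to $g=\eta$ to get an isotropic tensor, and extend an isotropic tensor back to all of $L_n$ via $\tau(g,z) := u_g\cdot\sigma(u_g^{-1}z)$ using transitivity of the $GL(n)$-action and the identification $L_n \cong GL(n)/O(1,n-1)$. You are actually more thorough than the paper, which dispatches the inverse construction in one line (``Clearly, this correspondence is bijective'') without explicitly checking well-definedness or, in particular, the smoothness of the extension in $g$; your local-section argument for smoothness is a genuine addition that the paper leaves implicit.
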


\begin{proof}
Here we use the same logic as in~\cite[Lem.2.8]{KM16}, where it is
spelled out a bit less tersely. Let $\eta \in L_n \cong GL(n) / O(1,n-1)
\cong GL^+(n) / SO(1,n-1)$, where the orthogonal group is interpreted as
the stabilizer subgroup of $\eta$. The equivariance of $\tau\colon L_n
\times Z \to T$ implies that $\tau(\eta,uz) = u\tau(\eta,z)$, whenever
$u \in GL(n)$ and $u\cdot \eta = \eta$, meaning that $\tau(\eta,-)\colon
Z\to T$ is $O(1,n-1)$ (resp.~$SO(1,n-1)$) equivariant. On the other
hand, since any $L_n \ni g = u_g \cdot \eta$ for some $u_g \in GL(n)$,
the knowledge of $\tau(\eta,-)$ uniquely determines the equivariant
extension $\tau(g,z) := u_g \tau(\eta, u_g^{-1} z)$. Clearly, this
correspondence is bijective.
\end{proof}

For the fundamental representations of $O(1,n-1)$ and $SO(1,n-1)$,
homogeneous polynomials, invariant linear functionals and isotropic
tensors all have a very explicit description. We give this description
below in several different versions, related as follows. Any
\emph{polynomial} on a vector space that is invariant under the action
of a linear representation can be written as a sum of invariant
\emph{homogeneous} polynomials. Any invariant homogeneous polynomial of
degree $p$ is also naturally a linear functional on a $p$-fold symmetric
tensor product of the original representation and vice versa. By
duality, the adjoint of a linear functional on a $p$-fold tensor product
representation defines an equivariant map from $\ast$ to the dual of the
$p$-fold tensor product representation.

\begin{prop} \label{prop_fft}
Let $M^p_n$ and $\eta$ be as in Definition~\ref{def_isotropic}, let
$V^p=(\mathbb{R}^n)^p$ be the space of $(v_1,\ldots, v_p)$ of $p$-copies
of vectors in the fundamental representation of $O(1,n-1)$ (or
$SO(1,n-1)$), and let $T = \bigoplus_j T_j$ be a finite sum of tensor
representations of ranks $p_j$ of $O(1,n-1)$ (or $SO(1,n-1)$).
\begin{enumerate}
\item
	Polynomials $p(v_1,\cdots,v_p) \in \mathcal{P}_{V^p}$
	invariant under the simultaneous action of $O(1,n-1)$ on its arguments
	are generated by the contractions $\eta_{ab} v_i^a v_j^b$, with
	$i,j=1,\ldots,p$.\\
	Polynomials $p(v_1,\cdots,v_p) \in \mathcal{P}_{V^p}$
	invariant under the simultaneous action of $SO(1,n-1)$ on its
	arguments are generated by the contractions $\eta_{ab} v_i^a v_j^b$
	and $\epsilon_{a_1\cdots a_p} v_{i_1}^{a_1} \cdots v_{i_p}^{a_p}$,
	with $i,j,i_k=1,\ldots,p$.
\item
	The isotropic tensors $\mathcal{I}^p_n$ are linear combinations of
	tensor products of copies of $\eta_{ab}$ with arbitrarily permuted
	indices.\\
	The isotropic tensors $\tilde{\mathcal{I}}^p_n$ are spanned by tensor
	products of $\eta_{ab}$ and $\epsilon_{a_1\cdots a_n}$ with
	arbitrarily permuted indices.
\item
	All $O(1,n-1)$-invariant linear functionals on $M^p_n$ are spanned by
	arbitrary complete contractions of a tensor $t_{a_1\cdots a_p} \in
	M^p_n$ with copies of $\eta^{ab}$, in an arbitrary order of indices.\\
	All $SO(1,n-1)$-invariant linear functionals on $M^p_n$ are spanned by
	arbitrary complete contractions of a tensor $t_{a_1\cdots a_p} \in
	M^p_n$ with copies of $\eta^{ab}$ and $\epsilon^{a_1\cdots a_n}$, in
	an arbitrary order of indices.
\item
	All degree $k$ homogeneous polynomial scalar $O(1,n-1)$-invariants
	$p(t) \in \mathcal{P}^k_T$ on $T$ are spanned by complete contractions
	of tensor products
	\begin{equation} \label{k_tens_prod}
		(t_{j_1})_{a^1_{1}\cdots a^1_{p_{j_1}}} \cdots
		(t_{j_k})_{a^k_{1}\cdots a^k_{p_{j_k}}}
	\end{equation}
	with copies of $\eta^{ab}$, when $t = \bigoplus_j t_j$.\\ All degree
	$k$ homogeneous polynomial scalar $SO(1,n-1)$-invariants $p(t) \in
	\mathcal{P}^k_T$ on $T$ are spanned by complete contractions of tensor
	products~\eqref{k_tens_prod} with copies of $\eta^{ab}$ and
	$\epsilon^{a_1\cdots a_n}$.
\end{enumerate}
\end{prop}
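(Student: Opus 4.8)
The plan is to derive all four statements from the classical First Fundamental Theorem (FFT) for the orthogonal group together with standard reductive-group bookkeeping; the only non-elementary input is imported from the literature. That input is the FFT for the complex orthogonal group $O(n,\mathbb{C})$ (resp.\ $SO(n,\mathbb{C})$) acting diagonally on $p$ copies of the standard representation $\mathbb{C}^n$: the algebra of polynomial invariants is generated by the pairwise contractions with the standard bilinear form (over $\mathbb{C}$ all non-degenerate symmetric forms are equivalent), in the $SO$ case together with the determinant-type contractions; see \cite[Ch.11]{procesi} or \cite[Ch.5]{goodman-wallach}. From here part~1 follows by a Zariski-density argument. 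Since $O(1,n-1)$ (resp.\ $SO(1,n-1)$) is linearly reductive and Zariski dense in its complexification $O(n,\mathbb{C})$ (resp.\ $SO(n,\mathbb{C})$), cf.~Remark~\ref{rem_reductive} and \cite[Sec.7.3.2]{procesi}, \cite[Sec.5.2]{rich-slo}, a real polynomial on $V^p$ invariant under the real group extends to a polynomial on $V^p_\mathbb{C}$ whose stabilizer in $O(n,\mathbb{C})$, being Zariski closed and containing the real group, is everything; hence it lies in the complex invariant algebra, and taking real parts of coefficients (the generating $\eta$- and $\epsilon$-contractions being real-valued on real arguments) exhibits it as a real polynomial in those contractions. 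The Lorentzian signature plays no role in this step.

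Next I would obtain parts~2 and~3 from part~1 using duality and the self-duality of the fundamental representation via $\eta$. An invariant linear functional on $M^p_n$ is the same datum as an invariant contravariant $p$-tensor, i.e.\ an element of $\mathcal{I}^p_n$ after raising indices; evaluated on $p$ covectors (identified with vectors via $\eta$) it is an invariant \emph{multilinear} function of $p$ vectors, so by the degree-$(1,\ldots,1)$ part of part~1 it is a linear combination of products of pairwise $\eta$-contractions realizing a perfect matching of the $p$ slots (in the $SO$ case also terms carrying one $\epsilon$-contraction). Translated back, this is precisely a complete contraction with copies of $\eta$ (and $\epsilon$), with arbitrary index permutations coming from permuting slots. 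This proves parts~2 and~3 simultaneously, they being dual to one another.

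The only genuinely new step is part~4, passing from the fundamental representation to a finite sum $T=\bigoplus_j T_j$ of tensor (density) representations. By Remark~\ref{rem_tens_orthog} every contravariant index of each $T_j$ may be lowered with $\eta$, so $T_j$ is an $O(1,n-1)$-subrepresentation of $M^{p_j}_n$, with $p_j$ its total rank, and linear reductivity supplies an equivariant retraction $\pi_j\colon M^{p_j}_n\to T_j$. Given $P\in\mathcal{P}^k_T$, I polarize it into a symmetric $k$-linear invariant form $\widetilde P$ (the restitution $\widetilde P(t,\dots,t)=P(t)$ being valid over $\mathbb{R}$), decompose $\widetilde P$ according to which summand $T_{j_i}$ each of its $k$ arguments belongs to, and extend each multilinear piece along the retractions to a multilinear invariant on $M^{p_{j_1}}_n\times\cdots\times M^{p_{j_k}}_n$, equivalently an invariant linear functional on $M^{p_{j_1}}_n\otimes\cdots\otimes M^{p_{j_k}}_n\cong M^{P}_n$ with $P=\sum_i p_{j_i}$. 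Part~3 identifies this with a complete $\eta$-contraction (plus $\epsilon$-contractions in the $SO$ case); restricting back along the inclusions $T_{j_i}\hookrightarrow M^{p_{j_i}}_n$ and restituting rewrites $P$ as a linear combination of complete contractions of tensor products of the form~\eqref{k_tens_prod}. One may additionally note that a product of two $\epsilon$'s reduces to a combination of $\eta$'s, so at most one $\epsilon$ factor is ever needed, although this refinement is not required for the spanning statement.

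I expect the main obstacle to be purely organizational rather than conceptual: keeping the polarization–restitution bookkeeping of part~4 honest, correctly tracking the $\epsilon$-terms throughout the $SO$ case, and carefully converting the algebra-generation statement of part~1 into the vector-space spanning statements of parts~2--4. The genuinely deep content, namely the complex FFT and the comparison between a real reductive group and its complexification, is quoted from \cite{procesi,goodman-wallach,rich-slo} rather than reproved here.
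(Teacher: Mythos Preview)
Your proposal is correct and the deductions of parts~2--4 from part~1 closely mirror the paper's own argument (duality via $\eta$, polarization, and linear reductivity to pass from $M^P_n$ to subrepresentations). The genuine divergence is in part~1: you import the complex FFT for $O(n,\mathbb{C})$ and $SO(n,\mathbb{C})$ from \cite{procesi,goodman-wallach} and descend to the real Lorentzian group by Zariski density, whereas the paper gives a self-contained inductive proof directly for $O(1,n-1)$, first reducing to $p=n$ by expanding extra vectors in the first $n$, and then inducting on $n$ by restricting to the hyperplane orthogonal to a spacelike unit vector, which carries an action of $O(1,n-2)$. Your route is shorter and conceptually clean, but treats the complex FFT as a black box; the paper's route is more elementary and makes the signature-independence visible (the same induction works in any $O(p,q)$), which matches the paper's declared aim in Appendix~\ref{section_equivar} of providing proofs that are ``reasonably elementary and concise'' and do not require chasing scattered references. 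Both approaches are standard; neither is wrong.
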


This proposition sometimes goes under the name of the \emph{joint tensor
version of the First Fundamental Theorem (FFT)} of invariant theory of
$O(1,n-1)$ (respectively $SO(1,n-1)$). While this specific version is
well-known folklore, it is difficult to find with a concise statement and
proof, especially for part~3. Thus, we briefly sketch a proof below,
summarizing the arguments from~\cite[Sec.11.2.1]{procesi}
and~\cite[Sec.11.6.8]{procesi}.

\begin{proof}
Parts~2--4 basically follow from part 1, so we first discuss these
implications and then discuss a proof of part~1.

\emph{2.} Note that the scaling transformations $(v_1, \ldots, v_i,
\ldots, v_p) \mapsto (v_1, \ldots, \lambda v_i, \ldots, v_p)$ for
$\lambda \in \mathbb{R}$, with $i=1,\ldots, p$, commute with the action
of the group on $V^p$. Hence, any invariant polynomial on $V^p$
decomposes into a sum of invariant polynomials with fixed homogeneous
degrees in each of the vector arguments $v_1,\ldots,v_p$. Since we can
put $M^p_n$ in bijection with polynomials on $V^p$ which are homogeneous
of degree $1$ in each vector argument, the desired claims about
$\mathcal{I}^p_n$ and $\tilde{\mathcal{I}}^p_n$ immediately follow.

\emph{3.} Part~3 follows directly from part~2 by duality, since $\eta$
defines a non-degenerate inner product on $M^p_n$ by pairwise
contraction of indices.

\emph{4.} Since the decomposition $T = \bigoplus_j T_j$ is into tensor
representations, we have the equivariant embeddings $T_j \to M_n^{p_j}$.
As in the proof of part~2, the group representation on $T$ commutes with
separately multiplying each $T_j$ by a scalar. Hence, degree $k$
homogeneous polynomials, which are in bijection with the symmetric
tensor power $S^k T^*$, decompose into
\begin{equation*}
	\bigoplus_{\sum_j q_j =k} \bigotimes_j S^{q_j} T_j^*
\end{equation*}
and invariant polynomials respect this decomposition. Thus, to
characterize all invariant polynomials in $T$, it is sufficient to
characterize invariant linear functionals on spaces of the form
$\bigotimes_j S^{q_j} T_j$, each of which come with equivariant
embeddings into $M_n^{P}$ with $P=\sum_j q_j p_j$. The pullback along
this embedding is a surjective equivariant map $(M_n^P)^* \to
\bigotimes_j S^{q_j} T_j^*$. Now, invoking the fact that both $O(1,n-1)$
and $SO(1,n-1)$ are \emph{linearly reductive} groups
(Remark~\ref{rem_reductive}), both $(M_n^P)^*$ and $\bigotimes_j S^{q_j}
T_j^*$ decompose into direct sums of irreducible representations (for
either group). The equivariance of the pullback map means that it
diagonalizes with respect to the decomposition of the two spaces into
isotypic components (maximal subspaces consisting of copies of a single
irreducible representation) and its surjectivity means that it remains
surjective on each isotypic component. The subspace invariant under the
action of the group is simply one of the isotypic components
(corresponding to the trivial representation) and hence every invariant
linear functional on $\bigotimes_j S^{q_j} T_j$ comes from pulling back
an invariant linear functional from $M_n^P$. Finally, the result of
part~3 implies the desired structure of invariant polynomials on $T$.

\emph{1.}
To prove part~1, we first reduce $p$ to $p=n$. Then, we proceed by
induction on $n$. We will use several times the following elementary
fact: if both the variables $y^i$ and a polynomial $p(x;y) = \sum_I
p_I(x) y^I$ are invariant under a group action, then the individual
coefficients $p_I(x)$ are also individually invariant. Another useful
elementary fact is that two polynomials that agree on a non-empty open
set agree everywhere.

First, assume that the desired conclusion holds for $V^n$. If $p<n$,
then the desired conclusion follows from identifying invariant
polynomials $p(v_1,\ldots,v_p)$ with invariant polynomials
$p_1(v_1,\ldots,v_p,\ldots, v_n)$ that are constant with respect to the
$v_{p+1},\ldots,v_n$ arguments. Considering an invariant polynomial
$p(v_1,\ldots, v_p)$ with $p>n$, we can restrict it to the \emph{open} subset
of $V^p$ where the first $v_1,\ldots, v_n$ vectors are linearly
independent. Then, for $i>n$, we can write $v_i = \sum_{j=1}^n w^j_i
v_j$, where the $w^i_j$ are invariant scalars. In fact, the $w^j_i$ can
be explicitly written as polynomials in the contractions $\eta_{ab}
v_j^a v_k^b$ and $(\det_{1\le j,k \le n} \eta_{ab} v_j^a
v_k^b)^{-1}$. Hence,
\begin{align*}
	p(v_1,\ldots, v_n, v_{n+1}, \ldots , v_p)
	&= p\left(v_1,\ldots, v_n, \sum_j w^j_{n+1} v_j, \ldots , \sum_j w^j_p v_j\right) \\
	&= p_1(v_1,\ldots,v_n; w^j_i) \\
	&= p_2\left(\eta_{ab} v_j^a v_k^b, \epsilon_{a_1\cdots a_n}
		v_1^{a_1} \cdots v_n^{a_n}; w^j_i\right) \:,
\end{align*}
where $p_1$ is also polynomial in its arguments and $p_2$ is another
polynomial that exists by applying to the $w$-coefficients of $p_1$ our
earlier hypothesis that the desired conclusion holds for $V^n$. The
contractions in the arguments of $p_2$ up to the semicolon involve only
the vectors $v_1,\ldots, v_n$. Plugging in the explicit rational
expressions for the $w^i_j$ into the arguments of $p_2$, since the
result equals the polynomial $p$, all the denominators must cancel
and we end up with an identity
\begin{equation*}
	p(v_1, \ldots, v_p)
		= p_3(\eta_{ab} v_j^a v_k^b,
			\epsilon_{a_1\cdots a_n} v_{i_1}^{a_1} \cdots v_{i_n}^{a_n}) \:,
\end{equation*}
where now the contractions may involve any of the $v_1,\ldots, v_p$
vectors, which holds for some polynomial $p$ on an open subset of $V^p$
and hence everywhere. Of course, the contractions with $\epsilon$ appear
only in the case of $SO(1,n-1)$.

Next, assume the inductive hypothesis that the desired conclusion holds
for $O(1,n'-1)$ and $SO(1,n'-1)$ for all $0<n'<n$, with the $n'=1$ case
being trivial. Consider an invariant polynomial $p(v_1,\ldots, v_n)$,
which we can restrict to the \emph{open} subset of $V^n$ where arguments are
linearly independent and the first $v_1,\ldots, v_{n-1}$ vectors span a
hyperplane with a \emph{$\eta$-spacelike} oriented unit normal vector
$\mathbf{v}$. Then, we can always write
\begin{equation*}
	v_n = \lambda \mathbf{v} + w^1 v_1 + \cdots + w^{n-1} v_{n-1} \;,
\end{equation*}
where the $w^i$ are invariant scalars. In fact, the $w^i$ can be
explicitly written as polynomials in $\eta_{ab} (v_i)^a (v_n)^b$ and
$(\eta_{ab} (v_i)^a (v_i)^b)^{-1}$. Hence,
\begin{align*}
	p(v_1,\ldots, v_n)
	&= p(v_1,\ldots, v_{n-1},
		\lambda \mathbf{v} + w^1 v_1 + \cdots + w^{n-1} v_{n-1}) \\
	&= \sum_k p_{1,k}(v_1,\ldots,v_{n-1}; w^i) \lambda^k \;,
\end{align*}
for some polynomials $p_{1,k}$. Consider for now only the invariance of
$p$ under the subgroup of $SO(1,n-1)$ that fixes the spacelike vector
$\mathbf{v}$ up to a sign, which corresponds to $\lambda \mapsto \pm
\lambda$ (the negative sign only accompanies those transformations that
change the orientation of the hyperplane orthogonal to $\mathbf{v}$).
Let us identify this orthogonal hyperplane with $\mathbb{R}^{n'}$, where
$n'=n-1$ and let the $\eta'$ and $\epsilon'$ denote the restrictions of
$\eta$ and $\epsilon$ to the hyperplane. The above mentioned subgroup
can hence be identified with $O(1,n'-1)$ acting on $\mathbb{R}^{n'}$ and
preserving $\eta'$. The invariance of $p$, together with the
identification of $v_1,\ldots, v_{n-1}$ with vectors in $\mathbb{R}^{n'}$,
implies that the $w$-coefficients of the $p_{1,k}$ are invariant under
$O(1,n'-1)$ for even $k$ and invariant under $SO(1,n'-1)$ for odd $k$,
and more specifically the odd $k$ coefficients are also odd under the
change of orientation of $\mathbb{R}^{n'}$. It is now helpful to note
that $\lambda^2$ can be written as a polynomial in $\eta_{ab} (v_i)^a
(v_j)^b$ (including $i,j=n$) and $w^i$, while
\begin{equation*}
	\epsilon'_{a_1\cdots a_{n-1}} v_1^{a_1} \cdots v_{n-1}^{a_{n-1}} \lambda
	= \epsilon_{a_1\cdots a_n} v_1^{a_1} \cdots v_{n-1}^{a_{n-1}} v_n^{a_n} .
\end{equation*}
Thus, for even $k$, we have
\begin{align*}
	p_{1,k}(v_1,\ldots, v_{n-1}; w^i) \lambda^k
	&= p_{2,k}(\eta'_{ab} v_i^a v_j^b; w^i) (\lambda^2)^{k/2} \\
	&= p_{3,k}(\eta_{ab} v_i^a v_j^b; w^i) \:,
\end{align*}
for some polynomials $p_{2,k}$, with $i,j\le n-1$ in its arguments
before the semicolon, and $p_{3,k}$, with $i,j\le n$ in its arguments
before the semicolon. For odd $k$, we have
\begin{align*}
	p_{1,k}(v_1,\ldots, v_{n-1}; w^i) \lambda^k
	&= p_{2,k}(\eta'_{ab} v_i^a v_j^b; w^i)
		\epsilon'_{a_1\cdots a_{n-1}} v_1^{a_1} \cdots v_{n-1}^{a_{n-1}}
		\lambda (\lambda^2)^{(k-1)/2} \\
	&= p_{3,k}(\eta_{ab} v_i^a v_j^b; w^i)
		\epsilon_{a_1\cdots a_n} v_1^{a_1} \cdots v_n^{a_n} \:,
\end{align*}
for some polynomials $p_{2,k}$, with $i,j\le n-1$ in its arguments
before the semicolon, and $p_{3,k}$, with $i,j\le n$ in its arguments
before the semicolon. Note that higher powers of $\epsilon'$ never
needed to be considered because of the usual identity relating tensor
powers of $\epsilon'$ with permutations of products of $\eta'$. Plugging
in the explicit rational expressions for the $w^i$ into the arguments of
$p_{3,k}$, since the result equals the polynomial $p$, all the
denominators must cancel and we end up with an identity
\begin{equation*}
	p(v_1, \ldots, v_n)
		= p_4(\eta_{ab} v_j^a v_k^b,
			\epsilon_{a_1\cdots a_n} v_{i_1}^{a_1} \cdots v_{i_n}^{a_n}) \:,
\end{equation*}
for some polynomial $p_4$, where now the contractions may involve any of
the $v_1,\ldots, v_p$ vectors. It is important to note that, up until
now, the above identity has only been established for
$v_1,\ldots,v_{n-1}$ orthogonal to a given spacelike unit vector
$\mathbf{v}$. Fortunately, once we note that there always exists a
transformation in $SO(1,n-1)$ that will transform any set of $n-1$
vectors orthogonal to another unit spacelike vector $\mathbf{v}'$ into a
set of $n-1$ vectors orthogonal to a given $\mathbf{v}$, the invariance
of both the original polynomial $p$ and the individual invariance of
each argument of the polynomial $p_4$ implies that the above identity
between $p$ and $p_4$ holds for any $\mathbf{v}$. So the above identity
between $p$ and $p_4$ holds on an open subset of $V^n$ and hence
everywhere. Of course, the contractions with $\epsilon$ appear only in
the case of $SO(1,n-1)$.

This concludes the proof.
\end{proof}

\begin{rem}
In the inductive step of the above proof, we reduced the problem from
$O(1,n-1)$ to $O(1,n-2)$, by restricting to a subspace orthogonal to a
\emph{spacelike} vector, relying crucially also on the \emph{transitive}
action of $O(1,n-1)$ on the \emph{open} subset of spacelike vectors.
Clearly, the inductive step could have also used a timelike vector
instead, without interfering with these crucial properties. It should
also be clear that the same argument would work directly in the case of
any $O(p,q)$, with reductions to either $O(p,q-1)$ or $O(p-1,q)$ both
being possible inductive steps.
\end{rem}

Before stating and proving our Equivariance Lemma~\ref{lem_gen_equiv},
we need the following fundamental results from invariant theory.

\begin{prop}[Hilbert~{\cite[Sec.14.1]{procesi}}, {\cite[\textsection7.2]{michor-dg}}] \label{prop_hilbert}
Let $G$ be a linearly reductive group with a rational representation on
a finite dimensional vector space $Z$. Then the algebra of polynomial
scalar $G$-invariants on $Z$ is finitely generated.
\end{prop}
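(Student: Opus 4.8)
The plan is to follow Hilbert's classical finiteness argument, combining complete reducibility (the content of \emph{linear reductivity}, cf.~Remark~\ref{rem_reductive}) with the Hilbert basis theorem. First I would set up the graded coordinate ring $\mathbb{R}[Z] = \bigoplus_{d\geq 0} S^d Z^*$, on which $G$ acts rationally and grading-preservingly, so that the algebra of invariants is the graded subalgebra $\mathbb{R}[Z]^G = \bigoplus_d (S^d Z^*)^G$. The crucial tool is the Reynolds operator $\mathcal{R}\colon \mathbb{R}[Z]\to\mathbb{R}[Z]^G$: since $G$ is linearly reductive, each $S^d Z^*$ splits $G$-equivariantly as the direct sum of its trivial-isotypic component $(S^d Z^*)^G$ and a unique $G$-stable complement, and $\mathcal{R}$ is defined degreewise as projection onto the trivial-isotypic component. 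The property I really need is the \emph{Reynolds identity} $\mathcal{R}(f h) = f\,\mathcal{R}(h)$ for $f\in\mathbb{R}[Z]^G$; this holds because multiplication by an invariant $f$ is a $G$-equivariant endomorphism of $\mathbb{R}[Z]$, hence respects isotypic decompositions and commutes with the canonical projection $\mathcal{R}$.

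Next I would invoke the Hilbert basis theorem: $\mathbb{R}[Z]$ is Noetherian, so the (homogeneous) ideal $I\subseteq\mathbb{R}[Z]$ generated by all homogeneous invariants of strictly positive degree is finitely generated; since $I$ is spanned by invariants, one may choose the generators among them, giving homogeneous invariants $f_1,\dots,f_m$ of positive degree with $I=(f_1,\dots,f_m)$. Then I would show $\mathbb{R}[Z]^G=\mathbb{R}[f_1,\dots,f_m]$ by induction on degree. In degree $0$ the invariants are the constants, which lie in $\mathbb{R}[f_1,\dots,f_m]$. Given a homogeneous invariant $f$ of degree $d>0$, we have $f\in I$, so $f=\sum_i h_i f_i$ with $h_i\in\mathbb{R}[Z]$ homogeneous of degree $d-\deg f_i<d$; applying $\mathcal{R}$ and using the Reynolds identity yields $f=\mathcal{R}(f)=\sum_i \mathcal{R}(h_i) f_i$, where each $\mathcal{R}(h_i)$ is a homogeneous invariant of degree $<d$, hence a polynomial in $f_1,\dots,f_m$ by the inductive hypothesis. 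This closes the induction and proves finite generation.

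The main obstacle is the construction and the module-linearity of the Reynolds operator in the present \emph{real, non-compact} setting: for a linearly reductive $G$ one cannot average over a Haar measure (unavailable for $O(1,n-1)$ and $SO(1,n-1)$), so one must argue that the trivial-isotypic projection is well-defined, canonical, and commutes with multiplication by invariants purely from complete reducibility of rational representations. This is precisely the step where the hypothesis of linear reductivity is used, and here I would lean on the cited treatments~\cite[Sec.14.1]{procesi} and~\cite[\textsection7.2]{michor-dg}; for our applications it is enough that $O(1,n-1)$, $SO(1,n-1)$ and their complexifications are linearly reductive, as recorded in Remark~\ref{rem_reductive}. A secondary, routine point is the graded bookkeeping, namely that $I$ is a homogeneous ideal and hence can be generated by homogeneous invariants and that the $h_i$ above may be taken homogeneous.
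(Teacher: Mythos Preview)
Your argument is the standard Hilbert finiteness proof and is correct as sketched; the Reynolds-operator construction via isotypic projection, the Reynolds identity, and the degree induction are exactly the ingredients needed, and your remarks about where linear reductivity enters are accurate.

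Note, however, that the paper does not actually supply a proof of this proposition at all: it is stated with citations to~\cite[Sec.14.1]{procesi} and~\cite[\textsection7.2]{michor-dg} and then used as a black box. So there is no ``paper's own proof'' to compare against. Your write-up is essentially what one finds in those references (Hilbert's original argument, adapted to the linearly reductive setting via the Reynolds operator in place of Haar averaging), and would serve perfectly well as a self-contained proof if one were desired.
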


\begin{defn} \label{def_stab_inv}
Let $G$ be a linearly reductive group with a rational representation on
a finite dimensional vector space $Z$ and let $p_i \in C^\oo(Z)$,
$i=1,\ldots, N_Z$, be a generating set for the algebra of polynomial
scalar $G$-invariants (Proposition~\ref{prop_hilbert}). A smooth
function $\sigma \in C^\oo(Z)$ is said to be \textbf{stably
$G$-invariant} if it is constant along each joint level set of the
invariant polynomials $p_i$, $i=1,\ldots N_Z$.
\end{defn}

Clearly, any function that is stably $G$-invariant is also
$G$-invariant, but the converse is not always true. Also, it is easy to
see that the definition is independent of the choice of the generating
polynomials $p_i$. The \emph{stability} in this definition is meant with
respect to complexification, since upon replacing $G$ with its
complexification the orbits become larger, while the invariant
polynomials remain the same, in a way that invariant polynomials do
completely separate all closed orbits, which erases the difference
between $G$-invariant and stably $G$-invariant functions. In
Section~\ref{sec_vector_KG_xi_tens}, we discuss the action of $O(1,n-1)$
on the subspace of symmetric forms in $M_n^2$ (in this case, the action
coincides with that of $SO(1,n-1)$). There, we give an explicit list of
a generating set of scalar invariant polynomials and also discuss the
structure of the orbits. That case also gives an explicit example of the
difference between $G$-invariant and stably $G$-invariant functions,
because invariant polynomials do not separate closed orbits on symmetric
bilinear forms.

The following results seem to be close to the state of the art in
characterizing the \emph{smooth} scalar invariants that apply to our
cases of interest. Unfortunately, we actually require a somewhat
strengthened version of these results (though see
also~\cite{stoetzel-phd} for more recent work), which we state below in
Proposition~\ref{prop_luna_ext}, but whose proof we do not discuss
(Remark~\ref{rem_luna_ext}).

\begin{prop}[Luna~\cite{luna}, {\cite[\textsection7.14]{michor-dg}}
{\cite[\textsection26.3]{kms}}] \label{prop_luna}
Let $G$ be a linearly reductive group with a rational representation on
a finite dimensional vector space $Z$ and let $p_i \in \mathcal{P}_Z$,
$i=1,\ldots, N_Z$, be a generating set for the algebra of polynomial
scalar $G$-invariants (Proposition~\ref{prop_hilbert}). Then a smooth
stably $G$-invariant function $\sigma \in C^\oo(Z)$ can always be
written as $\sigma = \Sigma(p_1,\ldots, p_{N_Z})$ where $\Sigma$ is a
smooth function of its arguments.
\end{prop}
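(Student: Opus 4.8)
The plan is to obtain $\Sigma$ by extending a function off the image of the \emph{Hilbert map} of $(G,Z)$, reducing everything to a local problem near closed orbits which one attacks with Luna's slice theorem and an induction on $\dim G$. By Proposition~\ref{prop_hilbert} the invariant algebra is finitely generated, so form $\pi := (p_1,\dots,p_{N_Z}) \colon Z \to \mathbb{R}^{N_Z}$, with image the closed semialgebraic set $X := \pi(Z)$. Saying that $\sigma$ is stably $G$-invariant is exactly the statement that $\sigma$ is constant along the fibres of $\pi$, which over $\mathbb{R}$ is strictly stronger than mere $G$-invariance, since a fibre of $\pi$ need not be a single orbit — for $SO(1,1)$ acting on $\mathbb{R}^2$ the fibre of the invariant $x^2-y^2$ over $0$ is the null cone $\{x^2=y^2\}$, which is $G$-invariant but a union of several orbits. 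Hence $\sigma = \bar{\sigma}\circ\pi$ for a unique $\bar{\sigma}\colon X \to \mathbb{R}$, and by a partition-of-unity argument the desired conclusion ``$\sigma = \Sigma(p_1,\dots,p_{N_Z})$ with $\Sigma \in C^\infty(\mathbb{R}^{N_Z})$'' reduces to showing that near every $x_0 \in X$ the function $\bar{\sigma}$ is the restriction of some $C^\infty$ function on a full neighbourhood of $x_0$ in $\mathbb{R}^{N_Z}$.

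Fix $x_0 \in X$. Passing to complexifications — legitimate for the linearly reductive groups at issue, cf.\ Remark~\ref{rem_reductive} — the point $x_0$ corresponds to a unique closed orbit $G z_0$, and Luna's \'etale slice theorem exhibits a $G$-saturated neighbourhood of $G z_0$ as, up to an \'etale base change over the quotient, the associated bundle $G \times_H S$, where $H = G_{z_0}$ is again linearly reductive and $S$ is the slice representation. Since the functions on $G \times_H S$ pulled back along $\pi$ are precisely the $H$-invariant functions on $S$ pulled back along the Hilbert map of $(H,S)$, the local extendability of $\bar{\sigma}$ near $x_0$ reduces to the very same statement for the pair $(H,S)$, in which $\dim H \le \dim G$ and $\dim S \le \dim Z$ and the dimension drops strictly as one descends the stratification of $X$ by orbit type. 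The induction (say on $\dim G$, with a secondary induction on $\dim Z$) is closed by treating the base cases directly: $H$ trivial is vacuous; $H$ finite or a torus, where the invariants are monomials/characters and the claim is an elementary computation; and the case where $S$ has no nonzero $H$-fixed vector, which forces an honest decrease of dimension.

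The analytic heart — and the step I expect to be the main obstacle — is carrying out this slice reduction in the $C^\infty$ category rather than the polynomial or real-analytic ones: one must divide and compose smooth functions while keeping exact track of the order of flatness along the fibre directions of $G \times_H S \to S$ and along the singular strata of $X$, which is what Malgrange's $C^\infty$ preparation (division) theorem, Whitney's extension theorem, and Glaeser-type results on composite differentiable functions are there to supply. This is also exactly where the compact-group argument breaks down: for compact $G$ one averages $\sigma$ against Haar measure and applies Schwarz's theorem, exploiting that an invariant positive-definite form makes $\pi$ proper and its fibres coincide with the compact orbits; for a noncompact reductive $G$ there is no such form, closed orbits need not be compact, and $\pi$ need not be proper (once more $SO(1,1)$ on $\mathbb{R}^2$: $\pi^{-1}([-1,1])$ is unbounded), so even the continuity of $\bar{\sigma}$ in the subspace topology and the control of $X$ near its singular locus demand the finer stratified bookkeeping — the same bookkeeping that the strengthened Proposition~\ref{prop_luna_ext} makes explicit through the decomposition $Z \setminus Z^0 = \bigcup_j Z_j$. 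For the present paper, however, it suffices to invoke the classical theorem of Luna~\cite{luna} (see also~\cite{kms,michor-dg}) precisely at this point.
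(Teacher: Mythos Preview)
The paper does not prove this proposition at all: immediately after stating it, the authors write that ``a proof is available only in the original reference~\cite{luna}, written in French,'' and leave it at that. So there is no ``paper's own proof'' to compare against --- the paper simply cites the result.

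Your proposal goes considerably further than the paper by sketching the architecture of Luna's argument (Hilbert map, reduction to the slice representation via the \'etale slice theorem, induction on $\dim G$, and the analytic machinery of Malgrange/Whitney/Glaeser needed to make the $C^\infty$ descent work). This outline is correct in spirit and identifies the right ingredients; your $SO(1,1)$ example nicely illustrates why stable invariance is strictly stronger than invariance over $\mathbb{R}$. However, as you yourself acknowledge, the ``analytic heart'' is where the real work lies, and you do not carry it out --- so what you have written is a roadmap, not a proof. Your final sentence, invoking Luna's theorem directly, is exactly what the paper does and is the appropriate resolution in this context.
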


Though, as indicated above, the statement of Luna's theorem can be found
in several references, as far as we know, a proof is available only in
the original reference~\cite{luna}, written in French. However, the more
recent result on the structure of invariants of finite $C^k$
differentiability~\cite{rumberger} does use a proof that is logically
similar to Luna's.

\begin{prop}[Richardson~{\cite[Thms.2.3,4.1]{richardson}}] \label{prop_richardson}
Let $G$ be a linearly reductive group with a linearly reductive
complexification and a rational representation of on a finite
dimensional vector space $Z$. Let $p_i \in C^\oo(Z)$, $i=1,\ldots,N_Z$,
be homogeneous polynomials generating the algebra of polynomial scalar
$G$-invariants on $Z$ (Proposition~\ref{prop_hilbert}). Then, there is a
$p_0 = P(p_1,\ldots, p_{N_Z})$ polynomial in its arguments and, with
$Z^0 = p_0^{-1}(0)$, a partition $Z \setminus Z^0 = \bigcup_j Z_j$ into
finite union of disjoint connected open subsets $(Z_j)$ where each $Z_j$
is stable under the action of $G$ and, for each $j$ and for any two
points $z_1,z_2\in Z_j$ the stabilizer subgroups $G_{z_1}, G_{z_2}
\subseteq G$ are conjugate in $G$.
\end{prop}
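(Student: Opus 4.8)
Since the statement is Richardson's principal orbit type theorem, the plan is to invoke the algebraic-geometric machinery of reductive group actions — above all Luna's \'etale slice theorem and the finiteness of isotropy types in characteristic zero — and then to descend from the complexification to the real group. First I would complexify: put $Z_{\mathbb C}=Z\otimes_{\mathbb R}\mathbb C$ with the induced action of $G_{\mathbb C}$, which is linearly reductive by hypothesis; the $p_i$ then generate $\mathbb C[Z_{\mathbb C}]^{G_{\mathbb C}}=\mathbb R[Z]^G\otimes_{\mathbb R}\mathbb C$ and define the affine GIT quotient $\pi\colon Z_{\mathbb C}\to Y:=\operatorname{Spec}\mathbb C[Z_{\mathbb C}]^{G_{\mathbb C}}$, which is an \emph{irreducible} affine variety since $Z_{\mathbb C}$ is. I would then record the standard facts for reductive $\pi$: it is surjective, each fibre contains a unique closed orbit, and it carries $G_{\mathbb C}$-stable Zariski-closed sets to Zariski-closed sets.

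The heart of the matter, which I would cite rather than reprove, is that the slice theorem furnishes a principal isotropy group: a conjugacy class $(H)$ of reductive subgroups of $G_{\mathbb C}$ and a nonempty $G_{\mathbb C}$-stable Zariski-open $U\subseteq Z_{\mathbb C}$ on which every orbit is closed and has stabilizer $G_{\mathbb C}$-conjugate to $H$, together with finiteness of the set of isotropy types occurring on all of $Z_{\mathbb C}$. Saturating $U$ — its complement is $G_{\mathbb C}$-stable and closed, hence has closed image $B\subseteq Y$ — I may assume $U=\pi^{-1}(V)$ with $V=Y\setminus B$ nonempty Zariski-open. Now comes the only genuinely ``new'' step: since $Y$ is irreducible and $Y\setminus V$ is a proper Zariski-closed subset, it lies in the zero set of a single nonzero element of $\mathbb C[Z_{\mathbb C}]^{G_{\mathbb C}}$, that is, of some $Q(p_1,\dots,p_{N_Z})$; replacing $Q$ by $Q\bar Q$ I obtain a nonzero real invariant $p_0=P(p_1,\dots,p_{N_Z})\in\mathbb R[Z]^G$ whose complex zero set contains $Z_{\mathbb C}\setminus U$. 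Put $Z^0=\{z\in Z:p_0(z)=0\}$, so that $Z\setminus Z^0\subseteq U\cap Z$.

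It remains to control isotropy on $Z\setminus Z^0$ and to count the pieces. For $z\in Z\setminus Z^0$ the complex stabilizer $G_{\mathbb C,z}$ is conjugate to $H$, and — again via the slice theorem — the map $z\mapsto(G_{\mathbb C,z})$ is \emph{locally constant} up to conjugacy on $U$, since over $V$ the quotient map is an \'etale-locally trivial bundle with fibres modelled on slice data with structure essentially $N_{G_{\mathbb C}}(H)/H$; taking real points this makes $z\mapsto(G_z)$, $G_z:=G_{\mathbb C,z}\cap G$, locally constant up to $G$-conjugacy on $Z\setminus Z^0$, hence constant on any connected subset. Finally $Z\setminus Z^0$ is the complement of a real-algebraic set, so it is semialgebraic with finitely many connected components, which are open; and since $G$ has finitely many connected components and acts by real-algebraic maps, after enlarging $p_0$ by a suitable further invariant factor (which spoils none of its properties) $G$ maps each component into itself, so each is $G$-stable and, by the previous remark, supports a single $G$-conjugacy class of stabilizers — giving the asserted partition. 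The chief obstacle is the input of the second step: the slice theorem and the finiteness of isotropy types are deep; the rest — extracting one polynomial from irreducibility of $Y$, and the $\mathbb C\leadsto\mathbb R$ descent together with the interplay between disconnectedness of $G$ and the connected components of $Z\setminus Z^0$ — is comparatively routine bookkeeping.
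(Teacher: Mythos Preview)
The paper does not actually prove this proposition: it is stated with an attribution to Richardson \cite[Thms.~2.3,~4.1]{richardson} and immediately followed by Definition~\ref{def_loc_poly}, with no proof environment in between. So there is nothing to compare your argument against in the paper itself; the authors treat it as an external input, much as they do Hilbert's and Luna's theorems.

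As for your sketch on its own merits: the overall architecture --- complexify, invoke the slice theorem to produce a principal isotropy locus over a Zariski-open piece of the GIT quotient, cut out the complement by a single invariant using irreducibility of $Y$, then descend to real points --- is indeed the shape of Richardson's argument. But two steps are genuinely underdeveloped. First, the passage ``taking real points this makes $z\mapsto(G_z)$ locally constant up to $G$-conjugacy'' hides the real content of the theorem: even if $G_{\mathbb C,z_1}$ and $G_{\mathbb C,z_2}$ are $G_{\mathbb C}$-conjugate, their real forms $G_{z_i}=G_{\mathbb C,z_i}\cap G$ need not be $G$-conjugate, and the \'etale-local triviality over $V$ does not by itself transfer to the real locus. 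This is precisely what Richardson's Theorem~4.1 addresses, and it requires a separate argument (via Galois cohomology or a careful real slice analysis), not just ``bookkeeping.'' Second, the claim that ``after enlarging $p_0$ by a suitable further invariant factor \ldots\ $G$ maps each component into itself'' is not explained: it is not clear what invariant you would multiply in, or why doing so would force the finite group $G/G^0$ to stop permuting components. A cleaner route is to note that $G^0$ preserves each component automatically, and then either work with the finitely many $G$-orbits of components or verify directly that the components are already $G$-stable in the cases of interest.
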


The following definition is rather technical, but is necessary to
precisely capture the difference between the behavior of \emph{smooth}
invariants and \emph{polynomial} invariants (or \emph{analytic}, or even
\emph{stable smooth} invariants).

\begin{defn} \label{def_loc_poly}
Let $Z$ be a finite dimensional vector space, $p_i \in C^\oo(Z)$,
$i=1,\ldots, N_Z$, be a set of homogeneous polynomials on $Z$, and $p_0
= P(p_1,\ldots, p_{N_Z})$ a polynomial in its arguments. With $Z^0 =
p_0^{-1}(0)$, consider a partition $Z \setminus Z^0 =
\bigcup_{j=1}^{r_Z} Z_j$ into pairwise disjoint open sets $Z_j$, for
some $r_Z < \oo$. We say that a function $\sigma \in C^\oo(Z)$ is
\textbf{locally a smooth function of the polynomials $p_i$ with respect
to the partition $(Z_j)$} if there exist $\Sigma_j \in
C^\oo(\mathbb{R}^{N_Z})$, $j=1,\ldots, r_Z$, such that $\sigma =
\Sigma_j(p_1,\ldots, p_{N_Z})$ on $Z_j$. We say that $\sigma$ is a
\textbf{function of the $p_i$ (globally)} if we can choose $\Sigma_j =
\Sigma_i$, for $i,j=1,\ldots,r_Z$. We write $\sigma =
[\Sigma]_Z(p_1,\ldots, p_{N_Z})$.
\end{defn}

\begin{prop}[extended Luna-Richardson] \label{prop_luna_ext}
Let $G$ be a linearly reductive group with a linearly reductive
complexification and a rational representation on a finite dimensional
vector space $Z$. Also, let $p_i \in \mathcal{P}_Z$, $i=1,\ldots, N_Z$,
be homogeneous polynomials generating the algebra of polynomial scalar
$G$-invariants on $Z$ (Proposition~\ref{prop_hilbert}). Then, there
exists a $p_0 = P(p_1,\ldots, p_{N_Z})$ polynomial in its arguments and,
with $Z^0 = p_0^{-1}(0)$, a partition $Z \setminus Z^0 =
\bigcup_{j=1}^{r_Z} Z_j$ into pairwise disjoint open $G$-invariant sets,
such that any $G$-invariant function $\sigma \in C^\oo(Z)$ is locally a
smooth function $\sigma = [\Sigma]_Z(p_1,\ldots, p_{N_Z})$ of the
polynomials $p_i$ with respect to the partition $(Z_j)$
(Definition~\ref{def_loc_poly}).
\end{prop}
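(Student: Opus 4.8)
The plan is to combine the Luna theorem (Proposition~\ref{prop_luna}) on stably invariant smooth functions with the Richardson stratification (Proposition~\ref{prop_richardson}) of the complement of an invariant hypersurface into pieces of constant orbit type. The key observation is that the failure of an arbitrary smooth $G$-invariant function to be globally a smooth function of the generating polynomials $p_i$ is measured precisely by the discrepancy between $G$-invariance and \emph{stable} $G$-invariance, and this discrepancy disappears once we restrict to a single stratum $Z_j$ of constant orbit type, where the orbit structure is ``rigid'' enough that invariant polynomials do separate the (closed) orbits.

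First I would invoke Proposition~\ref{prop_richardson}: choosing homogeneous generators $p_1,\ldots,p_{N_Z}$ of $\mathcal{P}_Z$, we obtain a polynomial $p_0 = P(p_1,\ldots,p_{N_Z})$ and, with $Z^0 = p_0^{-1}(0)$, a partition $Z\setminus Z^0 = \bigcup_{j=1}^{r_Z} Z_j$ into finitely many pairwise disjoint connected open $G$-invariant sets on each of which all stabilizers are conjugate in $G$. Next, fix one stratum $Z_j$ and a smooth $G$-invariant $\sigma \in C^\oo(Z)$; I would restrict $\sigma$ to $Z_j$ and argue that $\sigma|_{Z_j}$ is in fact \emph{stably} $G$-invariant, i.e.\ constant on the joint level sets of the $p_i$ within $Z_j$. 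This is the crux: on a stratum of constant orbit type, the $G$-orbits are all closed (principal-type orbits for the $G$-action are closed) and, by the First Fundamental Theorem together with the linear reductivity of $G$ and its complexification, the invariant polynomials separate closed orbits; hence two points of $Z_j$ with the same values of all $p_i$ lie on the same $G$-orbit, and $\sigma$ takes the same value on them. I would then want to apply Luna's theorem (Proposition~\ref{prop_luna}) to conclude $\sigma|_{Z_j} = \Sigma_j(p_1,\ldots,p_{N_Z})$ for some $\Sigma_j \in C^\oo(\mathbb{R}^{N_Z})$; but since $Z_j$ is only an open invariant subset rather than all of $Z$, I would need the version of Luna's theorem for $G$-invariant open subsets (or, equivalently, apply a slice-theorem argument on $Z_j$, using that $Z_j$ has constant orbit type so the quotient $Z_j/G$ is a manifold and $\sigma$ descends to a smooth function on it, which then pulls back through the $p_i$). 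Running over $j=1,\ldots,r_Z$ yields exactly the assertion that $\sigma = [\Sigma]_Z(p_1,\ldots,p_{N_Z})$ in the sense of Definition~\ref{def_loc_poly}.

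The main obstacle I anticipate is the step identifying ``$G$-invariant'' with ``stably $G$-invariant'' on a single stratum $Z_j$, and the clean application of Luna's theorem on an open invariant subset rather than on the whole vector space. Proving that the invariant polynomials separate closed orbits requires care: the FFT guarantees surjectivity of the restriction map onto invariants, and linear reductivity (of both $G$ and its complexification, which is why this hypothesis is imposed) guarantees that the categorical quotient $Z /\!\!/ G$ parametrises closed orbits — but transferring this from the complexification back to the real points, and checking that on $Z_j$ every orbit is closed, is exactly the technical heart of Luna–Richardson theory. This is why the authors state in Remark~\ref{rem_luna_ext} that they do not reproduce the proof; following their lead, I would cite~\cite{luna},~\cite{richardson} and~\cite[\textsection26]{kms} for the slice-theoretic input and assemble the stratified statement from Propositions~\ref{prop_luna} and~\ref{prop_richardson}, filling in only the short argument that on a constant-orbit-type stratum $G$-invariance implies stable $G$-invariance so that Luna applies stratum-by-stratum.
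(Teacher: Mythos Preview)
Your proposal is aligned with the paper's own treatment: the paper does not give a proof of Proposition~\ref{prop_luna_ext} at all, but only Remark~\ref{rem_luna_ext}, which states that the proof ``follows from combining the details of the proofs of Propositions~\ref{prop_luna} and~\ref{prop_richardson}'' and that ``discussing a complete proof goes beyond the scope of the current work.'' Your outline---invoking Richardson's stratification to get the partition $(Z_j)$, then applying Luna stratum-by-stratum after arguing that $G$-invariance coincides with stable $G$-invariance on each $Z_j$---is exactly the combination the remark points to, and you have correctly flagged the two genuine technical points (Luna on an open invariant subset rather than on all of $Z$, and the separation of closed orbits by polynomial invariants on a stratum of constant orbit type) that the paper itself defers to the cited references and to future work.
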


\begin{rem} \label{rem_luna_ext}
The proof of Proposition~\ref{prop_luna_ext} follows from combining the
details of the proofs of Propositions~\ref{prop_luna}
and~\ref{prop_richardson}, which can be found in the original
references~\cite{luna} and~\cite{richardson} respectively. Discussing a
complete proof goes beyond the scope of the current work and will be
discussed elsewhere.
\end{rem}

Combining the results presented so far allows us to finally formulate
the main Equivariance Lemma that is needed in the proof of our main
Theorem~\ref{thm_uniqueness}.

\begin{lem}[Equivariance] \label{lem_gen_equiv}
Consider finite sums of tensor density representations $Z = \bigoplus_j
Z_j$ and $T = \bigoplus_j T_j$ of $GL(n)$ (resp.~$GL^+(n)$), and its
natural action on $L_n$. Recall also
(Definitions~\ref{def_equivar_tens}, \ref{def_isotropic}) the notion of
invariant scalars ($\mathcal{S}_{Z}, \tilde{\mathcal{S}}_{Z} \subset
C^\oo(L_n\times Z)$), equivariant tensors ($\mathcal{E}_{Z,T},
\tilde{\mathcal{E}}_{Z,T} \subset C^\oo(L_n\times Z;T)$) and isotropic
tensors ($\mathcal{I}_{Z,T}, \tilde{\mathcal{I}}_{Z,T} \subset
C^\oo(Z;T)$), as well as their characterizations
(Propositions~\ref{prop_equiv_inv}, \ref{prop_equiv_ext}
and~\ref{prop_fft})
\begin{enumerate}
\item
	There exist diagonalizable intertwiners $s_Z\colon Z \to Z$ and $s_T
	\colon T \to T$ such that $(u,z) \mapsto \left|\det u\right|^{-s_Z}
	(u\cdot z)$ and $(u,t) \mapsto \left|\det u\right|^{-s_T} (u\cdot t)$,
	for $u\in \mathrm{GL}(n)$, $z \in Z$ and $t\in T$, define
	\emph{tensor} representations (\emph{i.e.},\ with density weight zero)
	on $Z$ and $T$. Denoting these tensor representations by $Z'$ and
	$T'$, we have $\mathcal{E}_{Z,T} \cong \mathcal{E}_{Z',T'}$
	(resp.~$\mathcal{E}_{Z,T} \cong \mathcal{E}_{Z',T'}$).
\item
	When $Z$ carries a tensor representation and $p \in \mathcal{S}_Z$
	(resp.~$\tilde{\mathcal{S}}_Z$) such that $p(g,z)$ is polynomial in
	$z$, then $p$ is a covariantly constructed scalar that is polynomial
	in the tensor components of $g$, $g^{-1}$ and $z$ (resp.~of $g$,
	$g^{-1}$, $\varepsilon(g)$ and $z$).
\item
	There is a finite number of invariant $p_i \in \mathcal{S}_Z$
	(resp.~$\tilde{\mathcal{S}}_Z$), $i=1,\ldots, N_Z$, such that each
	$p_i(g,z)$ is a homogeneous polynomial in $z$ and each $\sigma \in
	\mathcal{S}_Z$ (resp.~$\tilde{\mathcal{S}}_Z$) is locally a smooth
	function $\sigma = [\Sigma]_Z(p_1,\ldots, p_{N_Z})$ of the invariant
	polynomials $p_i$, $i=1,\ldots, N_Z$, as in
	Proposition~\ref{prop_luna_ext}.
\item
	There is a finite number of equivariant tensors $q_j \in
	\mathcal{E}_{Z,T}$ (resp.~$\tilde{\mathcal{I}}_{Z,T}$), $j=1, \ldots,
	N_{Z,T}$, whose components are homogeneous polynomials on $Z$, such
	that each $\tau \in \mathcal{I}_{Z,T}$
	(resp.~$\tilde{\mathcal{E}}_{Z,T}$) is of the form $\tau =
	\sum_{j=1}^{N_{Z,T}} \sigma^j q_j$ with $\sigma^j \in \mathcal{S}_{Z}$
	(resp.~$\tilde{\mathcal{S}}_{Z}$).
\end{enumerate}
\end{lem}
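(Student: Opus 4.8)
\textbf{Proof strategy for the Equivariance Lemma~\ref{lem_gen_equiv}.}
The plan is to assemble the four parts essentially as corollaries of the machinery already developed in the excerpt: Propositions~\ref{prop_equiv_inv}, \ref{prop_equiv_ext}, \ref{prop_fft}, \ref{prop_hilbert}, \ref{prop_luna_ext}, together with Remarks~\ref{rem_tens_orthog} and~\ref{rem_reductive}. For part~1, I would write $Z = \bigoplus_j Z_j$ with each $Z_j$ a tensor density representation of weight $s_j$, and let $s_Z$ be the diagonal intertwiner acting as $s_j\,\id$ on $Z_j$; similarly for $s_T$. Dividing out $\left|\det u\right|^{s_Z}$ and $\left|\det u\right|^{s_T}$ turns $Z,T$ into genuine (weight-zero) tensor representations $Z',T'$, and since $L_n$ carries the metric in whose determinant the density factors are absorbed, the map $\tau \mapsto g^{-\alpha}\tau$ (with $\alpha$ chosen so the overall density weight matches) gives the claimed bijection $\mathcal{E}_{Z,T}\cong\mathcal{E}_{Z',T'}$. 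So from part~2 onward I may assume all representations are tensor representations.

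For part~2, I would argue that an invariant $p(g,z)\in\mathcal{S}_Z$ that is polynomial in $z$, restricted via Proposition~\ref{prop_equiv_ext} to $g = \eta$, becomes an $O(1,n-1)$-invariant (resp.~$SO(1,n-1)$-invariant) polynomial in the tensor components of $z$; part~4 of Proposition~\ref{prop_fft} then expresses it as a linear combination of complete contractions of tensor products of copies of $z$ with copies of $\eta^{ab}$ (and $\epsilon^{a_1\cdots a_n}$ in the oriented case). Undoing the restriction $g=\eta$ via the equivariant extension of Proposition~\ref{prop_equiv_ext} replaces $\eta,\eta^{-1},\epsilon(\eta)$ by $g,g^{-1},\varepsilon(g)$ and the abstract vector $z$ by its tensor components in an arbitrary basis, which is exactly the statement that $p$ is covariantly constructed and polynomial in those ingredients. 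For part~3, I would first apply Proposition~\ref{prop_hilbert} to the (linearly reductive, by Remark~\ref{rem_reductive}) orthogonal group acting on $Z$, obtaining finitely many generating polynomial invariants; by part~2 (or directly by Proposition~\ref{prop_fft}.4) these may be taken homogeneous in $z$ and extended to elements $p_i\in\mathcal{S}_Z$ by the $L_n$-equivariant extension. Then Proposition~\ref{prop_luna_ext} (extended Luna--Richardson), applied to $Z$ with the partition it provides, says every smooth invariant on $Z$ — hence, after restricting $g=\eta$ and re-extending, every $\sigma\in\mathcal{S}_Z$ — is locally a smooth function $[\Sigma]_Z(p_1,\ldots,p_{N_Z})$ with respect to that partition.

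For part~4, I would use Proposition~\ref{prop_equiv_inv} to convert equivariant $T$-valued maps into scalar invariants on $Z\times T^*$: an isotropic tensor $\tau\in\mathcal{I}_{Z,T}$ corresponds to $\tau^*(z,t^*) = t^*\cdot\tau(z) \in \mathcal{S}_{Z\times T^*}$, which is \emph{linear} in $t^*$. Parts~3 and~2, applied to the representation $Z\oplus T^*$ (and keeping track of the degree-one-in-$t^*$ piece), show that $\tau^*$ is a linear combination, with coefficients that are locally smooth functions of the $p_i$, of polynomial invariants that are degree one in $t^*$ — and such polynomial invariants are exactly the contractions pairing $t^*$ with the polynomial-in-$z$ equivariant tensors of Proposition~\ref{prop_fft}.4; differentiating in $t^*$ at $t^*=0$ (the second half of Proposition~\ref{prop_equiv_inv}) recovers a finite generating set $q_j\in\mathcal{E}_{Z,T}$ with homogeneous-polynomial components, so that $\tau = \sum_j \sigma^j q_j$ with $\sigma^j\in\mathcal{S}_Z$. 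Finally Proposition~\ref{prop_equiv_ext} upgrades these isotropic statements to the equivariant-tensor-density statements over $L_n\times Z$. The main obstacle I anticipate is the bookkeeping in part~4: one must ensure that the ``locally smooth in the $p_i$'' coefficients genuinely live on $Z$ alone (not on $Z\times T^*$) and that the finitely many $q_j$ can be chosen uniformly across the pieces $Z_j$ of the Luna--Richardson partition — this requires being slightly careful about how the partition interacts with the $T^*$-grading and invoking linear reductivity of the orthogonal groups to split off the $t^*$-linear isotypic component cleanly, as in point~4 of Proposition~\ref{prop_fft}'s proof.
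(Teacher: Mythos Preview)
Your proposal is correct and follows essentially the same route as the paper's proof: part~1 via the explicit density-weight intertwiners and absorption into powers of $\left|\det g\right|$; part~2 via Proposition~\ref{prop_equiv_ext} to restrict to $g=\eta$, then Proposition~\ref{prop_fft}.4, then re-extend; part~3 by invoking Proposition~\ref{prop_luna_ext} with the linearly reductive orthogonal group; and part~4 by passing to scalar invariants on $Z\times T^*$ linear in $t^*$ via Proposition~\ref{prop_equiv_inv}, applying part~3 there, and differentiating in $t^*$ with the chain rule. The bookkeeping worry you flag in part~4 (that the coefficients $\sigma^j$ live on $Z$ and the $q_j$ are uniform across pieces) is handled in the paper exactly as you suggest---by splitting the generating invariants on $Z\times T^*$ into the $p_i$ (independent of $t^*$) and the $Q_j$ (linear in $t^*$), then setting $Q_j=0$ after the chain rule---so your anticipated resolution matches theirs.
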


\begin{proof}
After we establish point~1, we can without loss of generality assume that $Z$ and $T$ consist of direct sums of only \emph{tensor} representations. 

\emph{1.} By hypotheses, both $Z$ and $T$ reduce to a sum of tensor density
representations. This means that there exist diagonalizable intertwiners
$s_Z\colon Z\to Z$ and $s_T\colon T\to T$ such that $(u,z) \mapsto \left|\det
u\right|^{-s_Z} (u\cdot z)$ and $(u,t) \mapsto \left|\det
u\right|^{-s_T} (u\cdot t)$, for $u\in \mathrm{GL}(n)$, $z \in Z$ and
$t\in T$, define \emph{tensor} representations on $Z$ and $T$. Let us
refer to the corresponding representations as $Z'$ and $T'$. If
$\tau\colon L_n \times Z \to T$ is an equivariant map with respect to
the tensor density representations on $Z$ and $T$, then
\begin{equation}
	\tau'(g,z') = \left|\det g\right|^{-s_T} \tau(g, \left|\det g\right|^{s_Z} z')
\end{equation}
defines an equivariant map $\tau'\colon L_n \times Z' \to T'$ with
respect to the corresponding tensor representations. Clearly, this
operation can be reversed.

\emph{2.} Recall that, in our notation, $\mathcal{S}_{Z} \cong
\mathcal{E}_{Z,\mathbb{R}}$ (resp.~$\tilde{\mathcal{S}}_{Z} \cong
\tilde{\mathcal{E}}_{Z,\mathbb{R}}$) where $\mathbb{R}$ carries the
trivial representation. Then, by Proposition~\ref{prop_equiv_ext}, we have the
isomorphism $\mathcal{S}_{Z} = \mathcal{I}_{Z,\mathbb{R}}$
(resp.~$\tilde{\mathcal{S}}_{Z} = \tilde{\mathcal{I}}_{Z,\mathbb{R}}$).
Under this isomorphism, an invariant $p(g,z)$ is polynomial in $z$ iff
the corresponding $p_\eta(z) = p(g=\eta,z)$ is polynomial. Moreover, by
the classification Proposition~\ref{prop_fft}, any such polynomial
$p_\eta(z)$ consists of a complete contraction of products of the tensor
components of $z$ with copies of $\eta$ (and also $\epsilon$ in the
oriented case). Recalling the details of the restriction of tensor
representations to the orthogonal subgroup (Remark~\ref{rem_tens_orthog}),
the invariant extension $p(g,z)$ of $p_\eta(z)$ clearly constitutes the
same complete contraction of products of the tensor components of $z$,
but with every occurrence of $\eta$ replaced by either $g_{ab}$ (when
contracting two contravariant indices), $g^{ab}$ (when contracting two
covariant indices) or $\delta_a^b$ (when contracting a covariant and a
contravariant index). Respectively, a contraction with $\epsilon$ is
replaced by a contraction with $\varepsilon(g)$ with its indices
appropriately raised or lowered by $g$. Thus, we arrive at the desired
conclusion about the polynomiality of $p(g,z)$ in $g$, $g^{-1}$ (and
resp.~$\varepsilon(g)$).

\emph{3.} Recall the isomorphism $\mathcal{S}_{Z} \cong
\mathcal{I}_{Z,\mathbb{R}}$ (resp.~$\tilde{\mathcal{S}}_{Z} \cong
\tilde{\mathcal{I}}_{Z,\mathbb{R}}$) from point~2. Then, the desired
conclusion follows from Proposition~\ref{prop_luna_ext}, noting that
$O(1,n-1)$ (resp.~$SO(1,n-1)$) is a \emph{linearly reductive} Lie group
(and so is its complexification, cf.~Remark~\ref{rem_reductive}) and any
tensor representation (Definition~\ref{tensor_rep}) is obviously
rational. The finiteness of the number of generating invariant
polynomials $p_i$ ultimately follows from Hilbert's theorem
(Proposition~\ref{prop_hilbert}), which can obviously be chosen to be
homogeneous.

\emph{4.} It follows from Proposition~\ref{prop_equiv_inv} that any
equivariant $\tau \in \mathcal{E}_{Z,T}$
(resp.~$\tilde{\mathcal{E}}_{Z,T}$), can be written as a gradient
$\tau(g,z) = \left. \frac{\del}{\del t^*} \sigma(g,z,t^*)
\right|_{t^*=0}$, for some invariant $\sigma \in \mathcal{S}_{Z\times
T^*}$ (resp.~$\tilde{\mathcal{S}}_{Z\times T^*}$) that is linear in the
$t^*$ arguments. On the other hand, point~3 implies that
\begin{equation*}
	\sigma = [\Sigma]_{Z\times T^*}(p_1,\ldots,p_{N_Z}, Q_1, \ldots, Q_{N_{Z,T}})
\end{equation*}
is locally a smooth function of the invariants polynomial on $Z\times
T^*$, split into the $p_i$ that do not depend on the $T^*$, and the
$Q_i$ that depend on the $T^*$ only linearly. By combining the chain
rule with the notion of local dependence on polynomials
(Definition~\ref{def_loc_poly}), we get
\begin{align*}
	\tau(g,z)
	&= \left. \frac{\del}{\del t^*}
		[\Sigma]_Z(p_1,\ldots, p_{N_Z}, Q_1,\ldots Q_{N_Z}) \right|_{t^*=0} \\
	&= \sum_{j=1}^{N_{Z,T}} \left.
		\left[\frac{\del}{\del Q_j}\Sigma\right]
			(p_1,\ldots, p_{N_Z}, Q_1,\ldots, Q_{N_Z}) \right|_{Q_j=0}
		\left. \frac{\del Q_j}{\del t^*} \right|_{t^*=0} \\
	&= \sum_{j=1}^{N_{Z,T}} \sigma^j(g,z) q_j(g,z) ,
\end{align*}
with the obvious definitions for $\sigma^j$ and $q_j$.
This concludes the proof.
\end{proof}


\bibliographystyle{utphys-alpha-auth}
\bibliography{bibliography}

\end{document}